\documentclass[11pt,twoside]{article}
\raggedbottom 
\usepackage[T1]{fontenc} 
\usepackage[title]{appendix}

\usepackage{multibib}
\newcites{appendix}{Appendix References}
\usepackage{amsmath, amsthm, amssymb, mathrsfs, upgreek, enumerate, mathtools,
  comment, natbib, subfigure, graphicx, mathabx, booktabs, threeparttable,
tikz, relsize, exscale, epstopdf, multirow, dsfont}
\usepackage[headheight=110pt,margin=1.15in]{geometry}
\usepackage{caption}
\newcommand\fnote[1]{\captionsetup{font=small}\caption*{#1}}
\usepackage[section]{placeins} 
\renewcommand{\(}{\left(}
\renewcommand{\)}{\right)}
\renewcommand{\[}{\left[}
\renewcommand{\]}{\right]}
\newcommand{\bs}{\begin{align}\begin{split}\nonumber}
\usepackage{fancyhdr, setspace} 
\pagestyle{fancy}
\fancyhead{} 
\fancyhead[CE]{\sc Feng, Hong, Tang, Wang}
\fancyhead[CO]{\sc
Statistical tests for replacing human decision makers with algorithms}
\setlength{\headsep}{10pt}
\setlength{\voffset}{12pt}
\newcommand{\ba}{\begin{array}}

\newcommand{\ea}{\end{array}}

\newtheorem{assumption}{Assumption}
\usepackage{tcolorbox}
\usepackage{lscape}
\usepackage[figuresright]{rotating}
\usepackage[nolists,noheads,tablesfirst]{endfloat}

\usepackage{hyperref}
\usepackage{marginnote}
\hypersetup{
    colorlinks=true,
    citecolor=blue,
    filecolor=black,
    linkcolor=red,
    urlcolor=blue,
    bookmarksopen=true,
    pdfstartview=FitH
}
\onehalfspacing



\newtheorem*{theorem*}{Theorem}

\newtheorem{lemma}{Lemma}[section]

\theoremstyle{definition}

\newtheorem{remark}{Remark}

\makeatletter

\def\thm@space@setup{
  \thm@preskip=15pt \thm@postskip=15pt 
}
\makeatother

\setlength\belowcaptionskip{10pt} 


\renewcommand{\qed}{\hfill \mbox{\raggedright \rule{0.08in}{0.08in}}} 
\renewenvironment{proof}[1][\proofname]{{\noindent\sc#1. }}{\qed\vspace{15pt}} 

\usepackage{xcolor}



\usepackage{authblk}
\usepackage{latexsym}

\title{\bf\sc Statistical tests for replacing human decision makers with algorithms}

\author[1]{Kai Feng}
\author[2]{Han Hong}
\author[1]{Ke Tang}
\author[3]{Jingyuan Wang}
\affil[1]{\small{Institute of Economics, School of Social Sciences, Tsinghua University, \authorcr Email: fengk22@mails.tsinghua.edu.cn, ketang@tsinghua.edu.cn}}
\affil[2]{\small{Department of Economics, Stanford University, Email: doubleh@stanford.edu}}
\affil[3]{\small{School of Economics and Management, Beihang University, Email: jywang@buaa.edu.cn}}
\begin{document}
\maketitle
\thispagestyle{empty}

\vspace{-0.3in}
\begin{abstract}
\noindent {\sc Abstract.}
This paper proposes a statistical framework of using artificial intelligence
to improve human decision making.
The performance of each human decision maker
is 
benchmarked against that of machine predictions. We
replace the diagnoses made by 
a subset of the decision
makers with the recommendation from the 
machine learning algorithm.
We apply both a heuristic frequentist 
approach and a 
Bayesian posterior loss function approach 
to abnormal birth detection using a nationwide dataset of doctor diagnoses from prepregnancy checkups of reproductive age couples and pregnancy outcomes. 
We find that
our algorithm on a test dataset results in 
a higher overall true positive rate
and a lower false positive rate
than the diagnoses made by doctors only.

\vspace{15pt}

\noindent {\sc Keywords}: Artificial Intelligence,
Machine Learning, Decision Making, ROC Curve
\\
\noindent JEL Classification: C44, C11, C12

\end{abstract}

\newpage

\section{Introduction}\label{introduction}

\setcounter{page}{1}
In the current era of machine learning,
artificial intelligence (AI) algorithms
have emerged as valuable tools
that can learn significant features
from data
and potentially facilitate
decision making
in various disciplines. 
In medical research, \cite{esteva2017dermatologist} and \cite{kermany1} utilize deep neural network to automate diagnosis based on medical images. 
%
\cite{berg2020rise} and \cite{sadhwani2021deep}
leverage machine learning algorithms to 
predict the default probability for bank lending decision. 
\cite{mullainathan2017machine} and \cite{athey2019machine} 
apply machine learning models to broad optimal policy assignment problems. 

A considerable number of papers have compared
the performance of algorithms with that of the representative human decision makers. 
To mention a few, 
\cite{esteva2017dermatologist} claims that their algorithm outperformed the aggregate 
dermatologist on some skin cancer classification tasks; 
\cite{kermany1} 
targets 
macular degeneration dagnosis 
and concluded that the machine algorithm
outperforms some retinologists; 
\cite{rajpurkar1} highlights
the performance gain of a 
deep convolutional neural network over an aggregate cardiologist for 
processing ECG sequences;
\cite{QJEbail} trains a gradient boosted decision tree model to predict crime risk during pre-trial bail 
and suggested that substituting human judges with the algorithm could result in large welfare gains.

The findings of the aforementioned literature rely mostly on the observation
that the pair of false positive rate (FPR)
and true positive rate (TPR) point
lies strictly below the receiver operating characteristic (ROC) curve
generated by the machine learning algorithm,
implying that algorithms can achieve
a higher TPR for a given FPR
or a lower FPR for a given TPR. 
In binary classification / decision making, FPR is the number of wrongly classified negative events divided by the number of
actual negative events, and TPR is the number of correctly
classified positive events divided by the number of total positive events. 
FPR/TPR are synonymous to size and power in classical hypothesis testing.

We first \emph{caution against} such interpretations
without 
a deeper
understanding of human
decision making processes.
The literature findings can be rationalized not only by the
superior information quality of algorithms but also by the \textit{incentive heterogeneity}, i.e. the 
differential preferences for taking the same action by multiple human decision makers,
some of whom can be no less accurate than algorithms in processing information
from observational data. 
Machine algorithms provide information, but decision making is a combination of incentives and information. 
Using machine algorithms to assist with decision making necessitates the modeling of incentives.
Second,
the FPR/TPR pairs of human decision makers
are typically \emph{imprecisely measured}, especially when the number of cases for each decision makers is not large.
The data 
in our empirical analysis 
shows not only substantial
heterogeneity between
individual doctors, 
but also that the randomness in
measuring the quality of decision makers plays a key role when the performance of decision
makers is compared to that of machine algorithms.

To illustrate the first issue of concern, consider
Figure \ref{figure 2}, in which a collection of FPR/TPR pairs for human decision makers, denoted as $j = 1, \ldots, J$, 
all lie approximately on a ROC curve generated by their employed decision rules $\widehat Y_{i,j} =
\mathds{1}\(p\(X_{i,j}, U_{i,j}\) > c_{j}\)$ with different individual cutoff threshold
$c_{j}$ for decision maker $j$. In the decision rule, $X_{i,j}$ are observed features used in the machine learning algorithm; $U_{i,j}$ are private information 
only observable to the human decision makers, and $p\(x,u\) =
\mathbb{P}\(Y = 1\vert x,u\)$ is a 
propensity score function.  Also drawn is the machine ROC curve based on 
classification rules $\widehat Y_i =
\mathds{1}\(p\(X_i\) > c\)$, 
which collects the corresponding FPR/TPR pairs 
by varying the cutoff threshold $c$, and 
where $p\(x\) =
\mathbb{P}\(Y = 1\vert x\)$ is the correctly specified propensity score function using observable features $x$ discovered by the machine learning algorithm.
However,
after averaging over all decision makers,
the aggregate human FPR/TPR point
lies visibly below the machine ROC curve. 
\begin{figure}
  \begin{center}
    \includegraphics[height=.40\textheight]{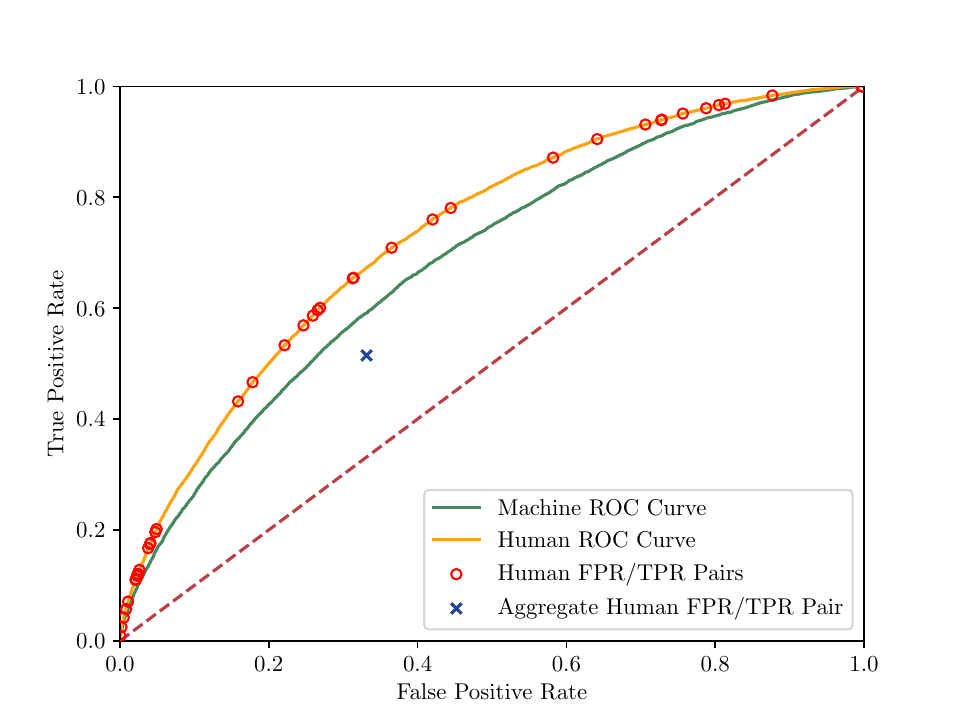}
    \caption{Individual and aggregate FPR/TPR pairs}\label{figure 2}
\fnote{\textit{Notes}:
When doctors use a correctly specified propensity score model to process both public and private information variables, 
the collection of FPR/TPR pairs for human decision makers, represented by the empty red circles, all lie
above the green machine ROC curve.  However, the aggregate FPR/TPR of human decision makers can still lie
below the machine ROC curve. See section \ref{Discussion: incentives and costs}  for a more detailed discussion of 
how incentive heterogeneity can mask the nuances of comparing between machine learning and human decision making through an implication of Jensen's inequality.
}
  \end{center}
\end{figure}


Our paper makes several 
contributions. 
First, we offer a conceptual framework of information and 
incentives for interpreting comparisons between the machine algorithms and human 
decision makers.  Second, we provide a model 
to identify
human decision maker 
candidates for  replacement by machine algorithms and  
to generate machine algorithm based decisions that incorporate human decision maker preferences. 
The replacement model accounts for the sampling uncertainty of individual decision makers. 
Third, an empirical application to 
abnormal birth detection demonstrates that our proposed replacement algorithm significantly improves the overall performance of risky pregnancy detection. 
Fourth, we conduct an extensive synthetic data analysis that serves as guidance for
evaluating empirical procedures incorporating complementarity in the presence of information asymmetry
between human decision maker and machine algorithms. 

In this paper, in order to compare machines with human decision makers, we 
restrict incentive heterogeneity and focus on information processing capacity. 
If the individual FPR/TPR pairs are precisely known without sampling errors, they can be directly compared to the machine ROC curve;
we interpret a human FPR/TPR pair below the machine ROC curve as evidence of the superiority of the machine algorithms. 
As shown in Figure \ref{SimpleDoc1},
a 
human FPR/TPR pair
 below the machine ROC curve can be dominated by any point on the line segment of the ROC curve between A and B. 
Replacing the human FPR/TPR pair by any point on the A-B segment of the machine ROC curve segment results in higher TPR without increasing  FPR, or lower FPR without sacrificing TPR.
 

In reality, only an
\textit{estimate} of the FPR/TPR pair
can be obtained from the
empirical data. 
A statistical framework to
compare the performance of algorithms with that of human decision makers
requires accounting for
estimation sampling
errors.
We focus on two main issues.
First, we seek statistical evidence supporting the information superiority of an algorithm that favors the replacement of a human decision maker.
Second, we aim to identify the most appropriate
point on the machine ROC curve for this purpose. 
To address these two issues,
we experiment with both
a heuristic frequentist confidence set approach
and a subsequent 
Bayesian inference approach.    
In both scenarios,
confidence and credible regions are formed to inform
the decision making process. The Bayesian approach results in more replacement and additional improvement in the aggregate FPR/TPR pair. We suggest the Bayesian analysis as an applicable decision framework.

At a disaggregated level, each individual human can have either less or more information processing capacity than  an algorithm has.
For instance,
\cite{liang_evaluation_2019} proposes a machine learning
algorithm to diagnose
childhood diseases that 
outperforms junior physician groups but marginally underperforms senior physician groups.
Similar findings are reported in 
\cite{peng2021deep}.
\cite{esteva2017dermatologist} and \cite{kermany1} also find substantial variability in 
the human experts' performance.
Our framework 
chooses between each human decision maker and the algorithm 
for making future decisions. We identify a subset of human decision makers to be replaced by the algorithm,
whereas the rest of human decision makers 
are retained.

We apply our statistical framework to 
a unique National
Free Prepregnancy Checkups (NFPC) dataset. 
The NFPC is a free
health checkup service for 
conceiving couples 
across 31 provinces in China. 
In addition to the pregnancy outcome and numerous patient-case features, 
the dataset also includes the doctors' IDs and diagnoses of
adverse pregnancy outcomes.
We first split the data 
into two parts.
The first part is used to compare doctors with algorithms.
Specifically, we
employ a random forest (RF) method for diagnosing risky pregnancy, which
achieves an area under the curve (AUC) above 0.68.
Using 95\% credible level,
our Bayesian approach suggests 
that the random forest algorithm
outperforms 46.1\% of doctors.
The second part of the dataset is used
to evaluate the quality of the diagnosis 
procedure based on the  combination of retained doctors and the algorithm. 
The combined decision making procedure achieves
an increase of 46.6\% in the TPR 
and a reduction of 10.1\%
in the FPR.
Additional detailed analysis further shows that 
the potential for machine algorithms to improve decision making quality is substantial even when only a smaller portion of doctors are replaced. 
We also find that doctors practicing only in country hospitals have a lower replacement ratio 
compared with those having practiced in 
township clinics.
The result of our Bayesian analysis suggests that 
51.2\% of doctors who have practiced only at  township clinics are replaced compared to 39.8\% of doctors who have practiced only at county hospitals. 

\paragraph{Related Literature}
Our paper relates to several strands of literature.
An extensive literature studies the utilization of AI to enhance medical diagnostic capabilities.
Machine learning, especially deep learning methods, 
which demonstrate great potential in tackling diverse and complex tasks, 
has been experimented to diagnose in multiple medical subfields, 
including ophthalmology \citep{kermany1}; 
cardiology \citep{rajpurkar1, hannun2019cardiologist}; 
dermatology \citep{esteva2017dermatologist}; 
respirology \citep{ardila2019end, liang_evaluation_2019}; 
clinical psychology and psychiatry \citep{ashar2017empathic, yoon2022d}; 
tumor and cancer detection  \citep{mckinney2020international, peng2021deep, uhm2021deep, tian2024prediction}. 
Additional surveys can be found in \cite{erickson2017machine}, \cite{johnson2018artificial}, \cite{dwyer2018machine}, 
\cite{chan2020machine}, \cite{huang2020fusion} and \cite{swanson2023patterns}.

Beyond 
medical diagnoses,
AI is widely explored in social-economic decision making. 
See for example \cite{berg2020rise}, \cite{sadhwani2021deep} and \cite{fekadu2022machine} for loan approval and credit risk modeling; 
\cite{fuster2019role}, \cite{vallee2019marketplace} and \cite{fuster2022predictably} 
for market structure implications of employing machine learning algorithms; 
\cite{berk2017impact}, \cite{zeng2017interpretable}, \cite{QJEbail} for bail and parole analysis; 
\cite{chalfin2016productivity} and \cite{sajjadiani2019using} for productivity 
assessment. 

Numerous papers in the literature have benchmark algorithms against 
human decision makers\footnote{
See for example
\cite{long2017artificial}, \cite{kermany1}, \cite{de2018clinically}, 
\cite{rajpurkar1}, \cite{hannun2019cardiologist}, \cite{esteva2017dermatologist}, \cite{han2020augmented}, 
\cite{daneshjou2022disparities}, \cite{ardila2019end}, \cite{liang_evaluation_2019}, \cite{mckinney2020international}, 
\cite{peng2021deep}, \cite{uhm2021deep}, \cite{tian2024prediction}, \cite{berk2017impact}, \cite{QJEbail}, \cite{han2020augmented}.
}. 
%
While machine algorithms often outperform some aggregation 
of human decision makers, these findings are not uniform across
individual human decision makers. 
Unlike machine algorithms, human decision makers can be highly heterogeneous 
in both incentives and information processing capacity. 
The health economics literature suggests 
that doctors may be swayed by
multiple non-medical incentives including
lawsuit avoidance \citep{studdert2005defensive}, financial gain \citep{johnson2016physicians}, 
demand pressure from patient \citep{lopez2018contribution} and procedural skills \citep{currie3}. 
Our analysis highlights the importance of 
modeling and restricting incentive heterogeneity in a human-AI comparison.  

A growing literature also explores AI assisted human decision making.
While \cite{brennan2019comparing}, \cite{han2020augmented}, \cite{peng2021deep}, 
\cite{yang2021deep} and \cite{tian2024prediction} report salient performance improvement
when doctors can refer to
the prediction results of algorithms,
\cite{jin2024evaluating} and \cite{yu2024heterogeneity} 
report contradicting evidence where the performance of a portion of doctors is worsened by AI assistance.  
By offering doctors monetary incentives prior to the experiment,
\cite{wang2023diagnosing} and 
\cite{wang2024overcoming} find that 
doctor diagnoses can be significantly influenced by both the machine information 
and the preset incentive structures. 
While both  \cite{bansal2021does} and \cite{wang2023diagnosing}  find that
giving doctors explanations increases the acceptance of machine recommendations, the effect of
providing explanation on performance is inconclusive and varies.
In the beverage vending machine business, a field experiment on product assortment decision 
conducted by \cite{kawaguchi2021will} 
finds heterogeneous adoption by workers dependent on location and both worker and
machine characteristics. 
\cite{alur2023auditing} propose a test whether human experts help improve machine 
algorithm predictions. 
Our analysis also suggests that the 
human machine complementarity, argued for by 
\cite{donahue2022human}, is not only theoretically challenging  but also difficult to implement empirically. 
Overall, 
the mechanism by which AI assists human decision making is unclear. 
Our paper thus focuses on a simple strategy that identifies and replaces underperforming 
doctors. 
This modeling choice is also consistent with the
analysis in \cite{mullainathan2022diagnosing} and \cite{chan2022selection}
who suggest that doctor performance can not be
explained by preference heterogeneity alone.
In addition,
the case-specific replacement algorithm described in 
Appendix \ref{fine-grained replacement} 
does not outperform the baseline method for our NFPC dataset.  


A substantial body of literature discusses the discriminatory bias of algorithms 
\citep{o2017weapons, eubanks2018automating, berk2021fairness}. 
But the potential of AI and algorithmic decision making 
to reverse extant bias in the training data is also noted by \cite{rambachan2019bias} 
and \cite{rambachan2020economic}.
Through efficent information provision and scarce resource utilization,
AI can be instrumental for inequality reduction. Using
artificial intelligence to empower medical diagnose in developing areas 
has been advocated by researches including 
\cite{adepoju2017mhealth, guo2018application, trivedi2019risks, mondal2023artificial} and others. 
\cite{bulathwela2021could} discuss the comprehensive application of AI 
to narrow education inequality gap.
Our finding using the NFPC dataset is also consistent with anecdotal information regarding
the quality disparity across tiers of the hospital classification system in China.

Optimal decision making using machine learning techniques is also drawing increasing attention from econometrics. 
\cite{kitagawa2018should}, \cite{mbakop2021model}, \cite{athey2021policy} and \cite{feng2024statistical} 
explore the asymptotic properties of empirical algorithm based optimal decision making.
\cite{manski2018credible} considers the identification problem 
when only the prediction conditional on a subset of covariates and the conditional distribution of the ``omitted'' covariates are known.
In this paper, we assume, as in the baseline cases of 
\cite{kitagawa2018should}, that the machine algorithm is known and deterministic,
in view of the large data size 
and the comprehensive list of covariates 
in the NFPC dataset. 

The rest of this paper is organized as follows.
Section \ref{theorysection} presents the statistical model
of human-algorithm comparison.
Section \ref{data and algorithm} describes 
the data and the algorithm. 
Section \ref{classify doctors} reports results from 
the empirical analysis of
    the machine algorithm and human decision makers. Section \ref{synthetic data analysis}
 presents a set of synthetic data analysis for additional insights,
and Section \ref{conclusion} concludes. Additional results are collected in the Appendix.

\section{Replacing Human Decision Makers with Algorithms}\label{theorysection}
Consider a sample dataset with labels $Y_i \in \{0,1\}$
and 
features $X_i,i=1,\ldots,n$.
Define 
\begin{align}\begin{split}\label{sample tpr fpr}
\textup{TPR} = \frac{\frac1n\sum_{i=1}^n Y_i \widehat{Y}_i}{\hat p},\quad
\textup{FPR} = \frac{\frac1n\sum_{i=1}^n (1-Y_i) \widehat{Y}_i}{1 - \hat p},
\quad \hat p = \frac1n \sum_{i=1}^n Y_i,
\end{split}\end{align}
where
$\hat Y_i \in \{0, 1\}$
is a predictor for $Y_i$ based on a general decision rule, such that 
$\(X_i, Y_i, \hat Y_i\)$ are i.i.d draws from an underlying population.
The Law of Large Number implies that
the FPR/TPR converge to their
population analogs
(denoted as $\beta$ and $\alpha$) as $n\rightarrow \infty$:
\begin{align}\begin{split}\label{alpha_beta}
\beta = \frac{1}{p}\mathbb{E} Y_i \widehat{Y}_i, \quad
\alpha = \frac{1}{1-p} \mathbb{E}\[\(1-Y_i\)\widehat{Y}_i\], \quad
\text{where}\quad p = \mathbb{E}Y_i = \mathbb{P}\(Y_i=1\).
\end{split}\end{align}

A key concept in this paper is
\textit{incentive heterogeneity}.
In general terms,
incentive refers to the motivation of certain actions,
and incentive heterogeneity refers to different payoffs across multiple human decision makers for taking
the same actions.
Incentive heterogeneity can typically be
classified intuitively as either intra-individual or inter-individual.
On the one hand, inter-individual incentive heterogeneity refers to situations where
the cost assessment only varies across decision makers but not across individual cases for each decision maker.
On the other hand, intra-individual incentive heterogeneity
refers to situations 
where the cost assessments across individual cases for a given decision maker vary  
based
on both publicly observed features $x$ that are accessible by the machine algorithm and private information features $u$ that are only available to human decision makers. 
To compare human decision makers and machines, we make the following two assumptions regarding incentive heterogeneity:

\begin{assumption} \label{basic assump}
The machine ROC curve is generated by 
a propensity score model of prediction:
\begin{align}\begin{split}\nonumber
\hat{y}_{i, M} = \mathds{1}\(m\(x_i\) > c\).
\end{split}\end{align}
The decision maker $j$'s
decision is also based on a perceived propensity score model:
\begin{align}\begin{split}\nonumber
\hat{y}_{i,j} = \mathds{1}\(q_{j}\(x_{i,j}, u_{i,j}\) > c_j\).
\end{split}\end{align}
Both $m\(\cdot\)$ and $q_{j}\(\cdot\)$ can be correctly specified or misspecified.
\end{assumption}

Assumption \ref{basic assump} rules out
intra-individual incentive heterogeneity
and allows us to focus 
on inter-individual incentive heterogeneity.
Doctors in our dataset exhibit highly variable degrees of conservativeness in their diagnoses. 
The FPR of the doctors with at least 300 diagnoses has a mean of 0.236 and a standard deviation of 0.230. 
Some doctors tend to diagnose almost all patients as highly risky, and some doctors 
rarely make a positive diagnosis. 
Figure \ref{frequentist replaced doctor scatter plot-300} illustrates. 
The diffusive distribution of doctors' FPR/TPR pairs also indicates both information processing capacity difference and inter-individual heterogeneity. 
The analysis in \cite{chan2022selection} shows that when only either diagnostic skill difference or inter-individual incentive heterogeneity is present,
the doctors FPR/TPR paris will concentrate in a small region, which is not the case in Figure \ref{frequentist replaced doctor scatter plot-300}. 
%
Importantly, Assumption \ref{basic assump} also allows for private information $u_{i,j}$ that is only observed by the doctors and not used in the machine learning algorithm. 
In the absence of the private information variable $u_{i,j}$, the doctors' diagnoses are perfectly predictable by the machine learning algorithm, which is very unlikely.

\begin{assumption}\label{ROC assump}
The machine 
propensity score model $m\(x_i\)$ and the $j$th decision maker's propensity score model $q_j\(x_{i,j},u_{i,j}\)$ are used without sampling errors in the machine decision making procedure and by the $j$th decision maker, respectively.
\end{assumption}

The first requirement 
is justified by the fact that the machine learning model is implemented using a training dataset that is by orders of magnitude larger than the number of cases for each individual human decision maker. 
For instance, around 150 thousand cases in the NFPC training dataset are used
to estimate the machine ROC curve, 
whereas the median number of diagnoses per doctor
in the classification set is only approximately 400. 
The estimation error in the machine ROC curve is 
negligible relative to the error in estimating doctors' FPR/TPR pairs. 
The sampling error in estimating the machine ROC curve is studied in a subsequent paper by \cite{feng2024statistical}. 

The second requirement that 
the 
$q_j\(x_{i,j},u_{i,j}\)$ are used without sampling errors by the  
$j$th decision maker is 
interpreted as stating that the behavior of the $j$th decision maker is consistent 
with the use of a propensity score model $q_j\(x_{i,j},u_{i,j}\)$. The method in this paper does not require the researcher 
to know the propensity score model $q_j\(x_{i,j},u_{i,j}\)$ for the $j$th decision maker, and does not refer to any of the doctor propensity score models $q_j\(x_{i,j},u_{i,j}\), j = 1,\ldots,J$. 
The second requirement also rules out 
decision makers 
learning from the sample cases. A doctor $j$ does not update their propensity score model   $q_j\(x_{i,j},u_{i,j}\)$ during the data sample period.

This is a reasonable assumption in our empirical dataset because the eventual pregnancy outcomes are typically not available to doctors when they diagnose consecutive patient cases. 
We are not concerned with how doctors procure knowledge to develop their diagnosis model $q_j\(x_{i,j},u_{i,j}\)$. 
Doctor $j$ might procure
$q_j\(x_{i,j},u_{i,j}\)$ from trainings obtained from a medical school or from working with a large pool of patients prior to 
the sampling period.  
Section \ref{data and algorithm} provides more details about the data processing procedure, where
sampling errors arise mainly from the relatively fewer number of observations for each individual decision maker. 

\subsection{Human FPR/TPR pairs and ROC curves in the population}
\label{compare pair with curve}

The parameter of interest is 
the population pair of true positive and false positive rates for an individual
decision maker:
$\theta_{H} = \(\alpha_{H}, \beta_{H}\)$,
where $\alpha$ and $\beta$ are
defined as in \eqref{alpha_beta} and the
subscript $H$ refers to ``human''.
If 
$\theta_{H}$ is above a machine ROC curve, as in Figure \ref{figure 2},
then given the FPR
$\alpha_{H}$,
the algorithm has a lower
TPR $\beta_{M}$ than humans have. Similarly, given the TPR $\beta_{H}$,
the machine has a larger FPR $\alpha_{M}$
than humans have.
In this sense,
humans outperform the algorithm. In particular, under Assumption \ref{basic assump}, 
if humans correctly use at least
as much information as the algorithm does, in the sense that the propensity model 
is correctly specified, then 
$\theta_{H}$ 
lies above the machine ROC. The formal argument is given in Lemma \ref{better information higher roc} in the 
appendix.
%

Figure \ref{SimpleDoc1} shows a $\theta_{H}$
pair for a human decision maker that is below the machine ROC curve. 
A machine decision rule
corresponding to a point on the machine ROC curve between A and B
performs better than the human
decision maker.
The segment of the curve to the left of B
(to the right of A)
has a smaller $\alpha_{M}$
but a smaller $\beta_{M}$ (a larger $\beta_{M}$
but a larger $\alpha_{M}$)
than the human decision maker.
These segments of the machine ROC curve are not
directly comparable to the human decision maker.

\begin{figure}
	\centering
	\includegraphics[width=0.8\columnwidth]{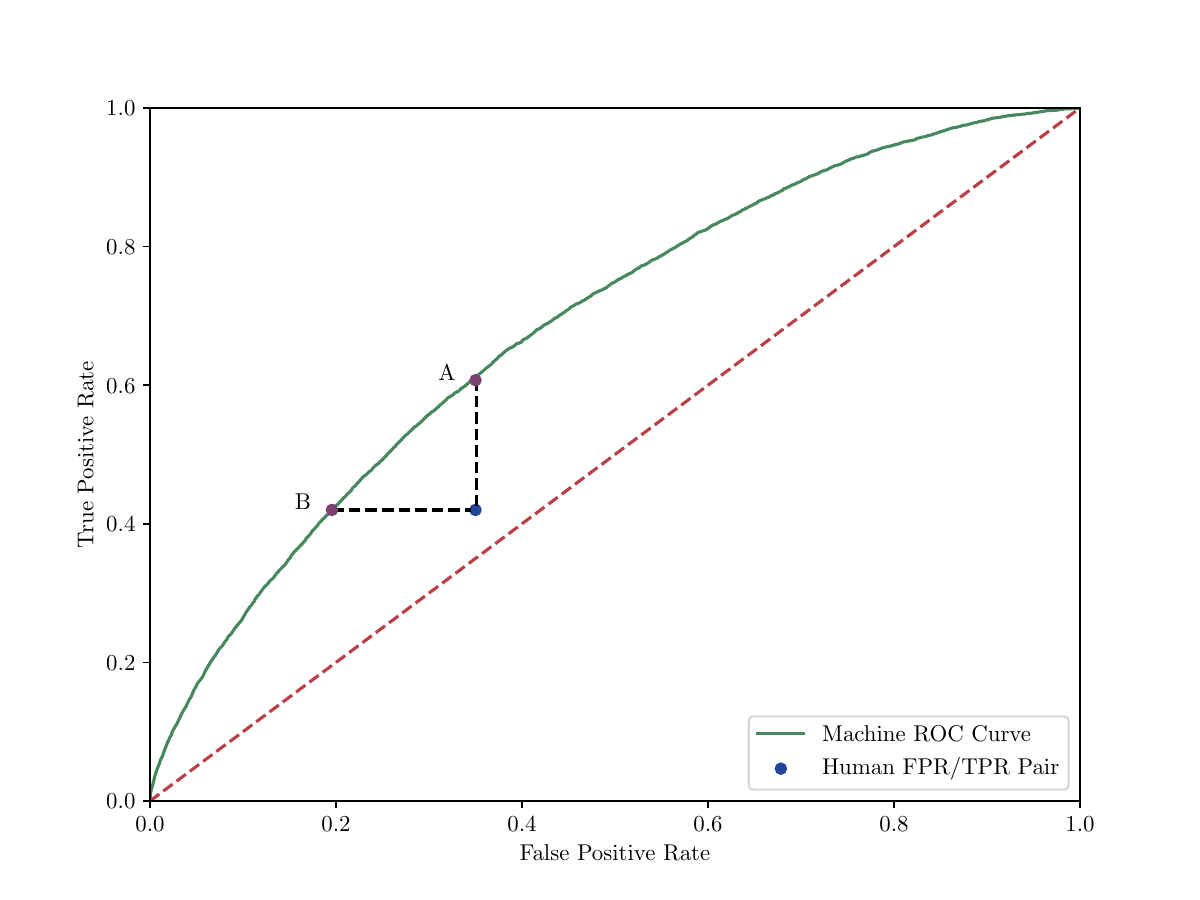}
	\caption{Population human FPR/TPR pair and the machine ROC curve}
\fnote{\textit{Notes}:
Point A matches the human FPR 
but has a higher TPR.
Similarly,
Point B matches the human TPR but has a lower
FPR.
The FPR and TPR
of any point between points A and B on the machine ROC curve are smaller 
and larger, respectively, than the human decision maker's FPR/TPR pair.
}
	\label{SimpleDoc1}
\end{figure}
The classical Neyman-Pearson lemma states that when $p\(x\)=\mathbb{P}\(Y=1\vert x\)$ is correctly specified, for any pair of positive numbers $\(\eta, \phi\)$, there exists a cutoff value $c$, 
such that the point $\(\alpha_c, \beta_c\)$ on the ROC curve corresponding to the decision rule 
$\hat y = \mathds{1}\(p\(x\)>c\)$ maximizes a linear combination $\phi\beta - \eta\alpha$ of the TPR and FPR. 
Conversely, each point on this ROC curve optimizes some linear combination of TPR and FPR. Neyman-Pearson Lemma is also motivated by a Bayesian decision maker 
who minimizes posterior expected loss of misclassification upon observing the features $x$, based on a loss matrix of the following form:
        \begin{center}
        \begin{tabular}{ l c r }
                Loss matrix & $\hat y=0$ & $\hat y =1$ \\
                $y=0$ & $0$ & $c_{01}$ \\
                $y=1$ & $c_{10}$  & $0$,
        \end{tabular}
                 \end{center}
where $c_{10}$ represents the cost of misclassifying $1$ as $0$ and $c_{01}$ represents the cost of misclassifying $0$ as $1$. Each point on the ROC curve, corresponding to a decision rule $\hat y = \mathds{1}\(p\(x\) > c\)$, minimizes Bayesian posterior expected loss for some choice of $c_{10}$ and $c_{01}$ through the relation $c = \frac{c_{01}}{c_{10} + c_{01}}$. Importantly, the cost matrix $c_{01}, c_{10}$ and linear combination coefficients $\eta, \phi$ in the Neyman-Pearson Lemma are independent of the features $x$ in order to provide optimality justification of the points on the ROC curve. Assumption \ref{basic assump}
is then interpreted as
each doctor behaving as a Bayesian decision maker adopting the same constant cost matrix $c_{10}$ and $c_{01}$ among their patients. The ex ante expected minimized cost for the Bayesian decision maker corresponds to a linear combination of FPR and TPR:
\bs
c_{10} \mathbb{E}\[p\(X\)\(1-\hat Y\)\] + c_{01} \mathbb{E}\[\(1-p\(X\)\) \hat Y\] = \(1-p\) c_{01} \text{FPR} - p c_{10} \text{TPR} + p c_{10}.
\end{split}\end{align}



If we denote point A and B in Figure \ref{SimpleDoc1}
as $\(\alpha_{H}, \beta_{R}\)$ and
$\(\alpha_{R}, \beta_{H}\)$, 
then any point $\(\alpha_{M}, \beta_{M}\)$
between $\(\alpha_{R}, \beta_{H}\)$
and $\(\alpha_{H}, \beta_{R}\)$
on the machine ROC curve satisfies
$\alpha_{M} < \alpha_{H}$ and
$\beta_{M} > \beta_{H}$.
For arbitrary $\phi, \eta > 0$ and $c_{10}, c_{01}$,
\begin{align}\begin{split}\nonumber
&\eta \alpha_{M} - \phi \beta_{M} <
\eta \alpha_{H} - \phi \beta_{H},\quad\text{and}\quad\\
&\(1-p\) c_{01} \alpha_M - p c_{10} \beta_M <
\(1-p\) c_{01} \alpha_H - p c_{10} \beta_H.
\end{split}\end{align}
Therefore, compared with human decisions,
any point $\(\alpha_{M}, \beta_{M}\)$ between $\(\alpha_{R}, \beta_{H}\)$
and $\(\alpha_{H}, \beta_{R}\)$
on the ROC curve achieves a better linear combination
for all possible values of $\phi, \eta$, and lower expected cost for all loss matrixes of $c_{01}, c_{10}$. 
In this sense, 
$\(\alpha_{M}, \beta_{M}\)$ 
\textit{dominates} $\(\alpha_{H}, \beta_{H}\)$.


Recall that $\theta_{H}$
is a population parameter that measures the
expected performance of the human decision maker.
When $\theta_H$ lies below the machine ROC curve, {\it on average},
the machine algorithm can outperform
the human decision maker by
increasing the population TPR or
reducing the population FPR. 
However, unless the algorithm can make
predictions with perfect accuracy,
for each specific patient case,
the human decision maker can
either underperform or outperform the algorithm.

Our framework is likely to be applicable when people make many repeated decisions to allow for a sufficiently precise estimate of their error rates. It
is less applicable in situations when people only ever make a few decisions of a type, a setting that may arise even when decisions are frequent if some action rarely is or should be taken. 
The lack of information renders the task of distinguishing humans from algorithms difficult, even if humans might be detectably worse in aggregate\footnote{We are grateful to the associate editor for pointing out this important clarification.}. Our application features a comprehensive set of covariates, a large dataset, reliable and measurable outcomes, and a stable prediction environment that are 
consistent with the scenarios in the 
important papers by \cite{kahneman2009conditions} and \cite{currie3}.

\subsection{Human FPR/TPR pair and ROC curve comparison in the sample}
\label{heuristic approach}
Typically, the population value of  $\theta_{H}$ is not directly observed, and instead
needs to be estimated from a dataset. 
Denote the the sample estimates $\(\text{FPR}, \text{TPR}\)$ in \eqref{sample tpr fpr}
as $\hat{\theta}_{H} =
\(\hat{\alpha}_{H}, \hat{\beta}_{H}\)$.
In the presence of sampling uncertainty,
the inference problem pertains to how we can make a probabilistic statement regarding
whether $\theta_{H}$
is above or below the machine ROC curve. From a frequentist analysis point of view,
because the sample estimate $\hat{\theta}_{H}$ is a vector function of a multinomial distribution, 
the finite sample distribution
function of $\hat{\theta}_{H}$ can be analytically intractable.
Large sample frequentist analysis and simulation methods, however,
are facilitated by the joint asymptotic normal distribution
of $\hat{\theta}_{H}$.


\begin{lemma}\label{distributionFPRTPR}
For an i.i.d. sample $\left\{Y_{i}, \widehat{Y}_{i}\right\}^{n}_{i}$, 
the joint asymptotic distribution of
a human FPR/TPR pair
$\hat{\theta}_{H} =
\(\hat{\alpha}_{H}, \hat{\beta}_{H}\)$
is multivariate normal.
In particular
\begin{align}\begin{split}\nonumber
\sqrt{n}\(\hat{\theta}_{H} - \theta_{H}\)
\overset{d}{\longrightarrow} N\(0, \Sigma\), \quad
\Sigma =
\(
\begin{array}{cc}
\frac{\alpha_{H}\(1 - \alpha_{H}\)}{1 - p} & 0\\
0 & \frac{\beta_{H}\(1 - \beta_{H}\)}{p}
\end{array}\).
\end{split}\end{align}
\end{lemma}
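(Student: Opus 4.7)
The plan is to express $\hat\theta_H$ as a smooth transformation of i.i.d.\ sample averages and then combine the multivariate central limit theorem with the delta method. Set $A_i = Y_i \widehat Y_i$, $B_i = (1-Y_i)\widehat Y_i$, and $C_i = Y_i$, so that the triples $(A_i, B_i, C_i)$ are i.i.d.\ with bounded second moments, and
\begin{align}\begin{split}\nonumber
\hat\beta_H = \frac{\bar A}{\bar C}, \qquad \hat\alpha_H = \frac{\bar B}{1-\bar C},
\end{split}\end{align}
where $\bar A$, $\bar B$, $\bar C$ denote the usual sample averages. First I would invoke the multivariate CLT for the vector $\sqrt n\,(\bar A - p\beta,\; \bar B - (1-p)\alpha,\; \bar C - p)^{\top}$, obtaining a Gaussian limit with a $3\times 3$ covariance matrix $V$. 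Because $A_i, B_i, C_i$ are all $\{0,1\}$-valued and satisfy $A_i B_i = 0$, $A_i C_i = A_i$, and $B_i C_i = 0$, the six distinct entries of $V$ reduce to elementary polynomials in $p, \alpha, \beta$.

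Second, I would apply the delta method to the smooth mapping $g(a,b,c) = \bigl(b/(1-c),\; a/c\bigr)$, which recovers $(\hat\alpha_H, \hat\beta_H)$ from $(\bar A, \bar B, \bar C)$. Its Jacobian at the population means $(p\beta, (1-p)\alpha, p)$ takes the simple form
\begin{align}\begin{split}\nonumber
J = \begin{pmatrix} 0 & \tfrac{1}{1-p} & \tfrac{\alpha}{1-p} \\[2pt] \tfrac{1}{p} & 0 & -\tfrac{\beta}{p} \end{pmatrix},
\end{split}\end{align}
and the asymptotic covariance of $\sqrt n\,(\hat\theta_H - \theta_H)$ then equals $J V J^{\top}$.

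The main obstacle, and the only piece of bookkeeping that is not purely mechanical, is to verify that the off-diagonal entry of $J V J^{\top}$ vanishes, because this is where the implicit asymptotic independence of FPR and TPR manifests itself. The cancellation is driven by three covariances in $V$: the contribution of $\mathrm{Cov}(A_i, B_i) = -p(1-p)\alpha\beta$ is exactly offset by cross terms involving $\mathrm{Cov}(A_i, C_i) = p(1-p)\beta$ and $\mathrm{Cov}(B_i, C_i) = -p(1-p)\alpha$ once these are weighted by the Jacobian entries $1/(1-p)$, $\alpha/(1-p)$, $1/p$, and $-\beta/p$. Once the off-diagonal term is shown to be zero, the two diagonal terms simplify algebraically to $\alpha(1-\alpha)/(1-p)$ and $\beta(1-\beta)/p$, yielding the claimed $\Sigma$.

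As a cross-check, the same conclusion can be reached more directly by conditioning on $(Y_1,\ldots,Y_n)$: the collections $\{\widehat Y_i : Y_i = 1\}$ and $\{\widehat Y_i : Y_i = 0\}$ are then independent Bernoulli samples with success probabilities $\beta$ and $\alpha$ respectively. Applying the binomial CLT separately on the two subsamples and then unconditioning on $n_1 = \sum_i Y_i$, which concentrates around $np$ at the standard $\sqrt n$ rate, produces the same diagonal limit with the asymptotic independence built in from the start.
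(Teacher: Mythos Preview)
Your proposal is correct, but the paper takes a slightly cleaner route that sidesteps the algebraic cancellation you flag as the main obstacle. Rather than expanding to three sample averages and applying the delta method with a $2\times 3$ Jacobian, the paper writes the exact identity
\[
\sqrt{n}\bigl(\hat\theta_H-\theta_H\bigr)=\hat H^{-1}\,\frac{1}{\sqrt{n}}\sum_{i=1}^n\begin{pmatrix}(1-Y_i)(\hat Y_i-\alpha)\\ Y_i(\hat Y_i-\beta)\end{pmatrix},\qquad \hat H=\operatorname{diag}(1-\hat p,\ \hat p),
\]
and then applies the bivariate CLT directly to the two-dimensional summand together with Slutsky for $\hat H^{-1}$. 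Because the two coordinates carry the factors $(1-Y_i)$ and $Y_i$ respectively, their product is identically zero, so the off-diagonal of the limiting covariance vanishes for free and the diagonal entries drop out in one line using $Y_i^2=Y_i$, $\hat Y_i^2=\hat Y_i$. Your three-average delta-method route reaches the same destination but buries this orthogonality inside the $JVJ^\top$ computation; the paper's centering $(\hat Y_i-\alpha)$ and $(\hat Y_i-\beta)$ inside each component is what makes the diagonal structure immediate. Your conditioning cross-check is in the same spirit as the paper's decomposition and is arguably the most transparent explanation of why $\hat\alpha_H$ and $\hat\beta_H$ are asymptotically independent.
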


The proof of Lemma \ref{distributionFPRTPR} is given in the appendix and follows from standard arguments for approximating multinomial distributions with normal limits.
The asymptotic covariance $\Sigma$ can be consistently estimated by a sample analog $\hat\Sigma$ where 
the parameters $\alpha_H, \beta_H$ and $p$ are replaced 
by $\hat{\alpha}_H, \hat{\beta}_H$ and $\hat{p}$. 
The resulting asymptotic confidence set
of the FPR/TPR pairs 
is elliptically shaped.
As illustrated in Figure 
\ref{fg:case1}, the blue ellipse indicates a 95\% level confidence set for a decision maker
in our dataset. 

\begin{figure}
  \centering 
  \subfigure[Case 1 
  ]
  {\includegraphics[width=0.48\columnwidth]{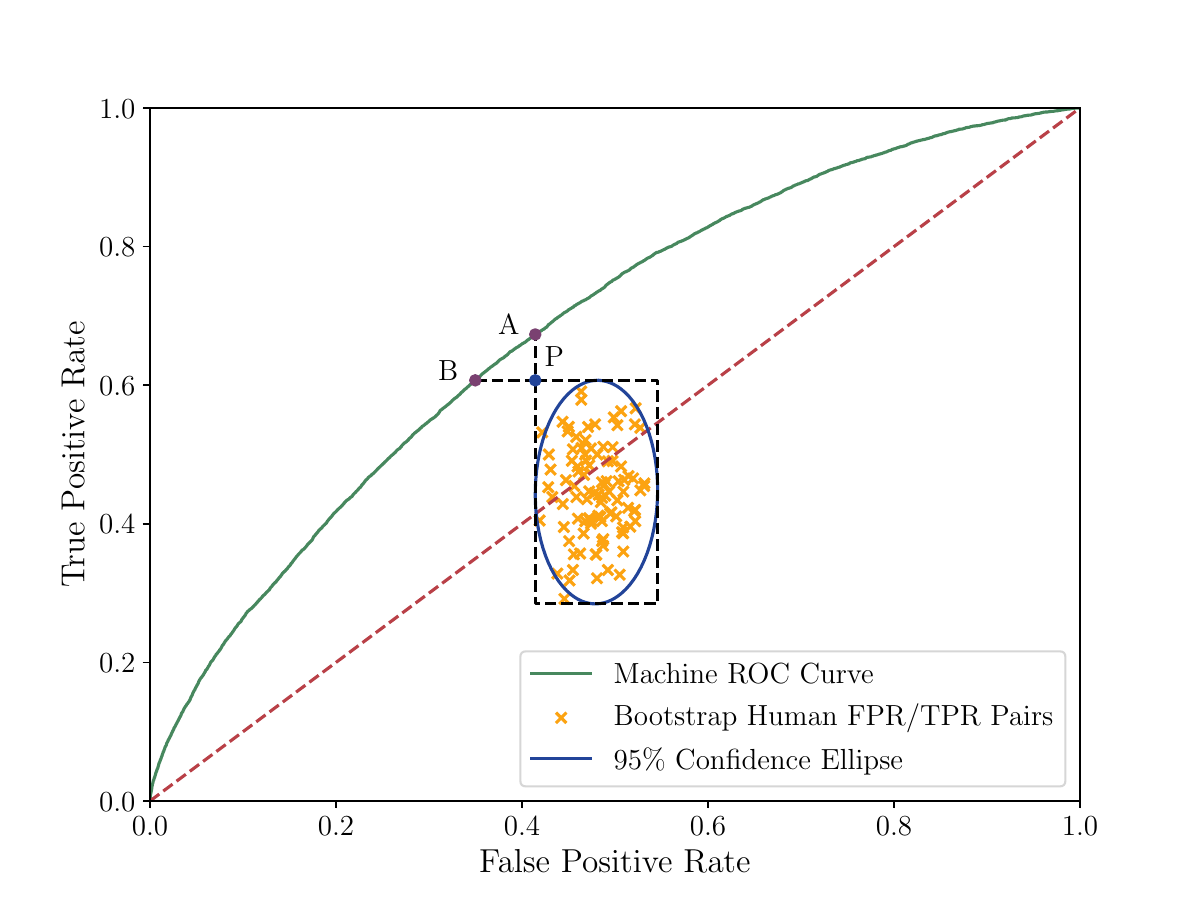}
  \label{fg:case1}}
  \subfigure[Case 2 
  ]{\includegraphics[width=0.48\columnwidth]{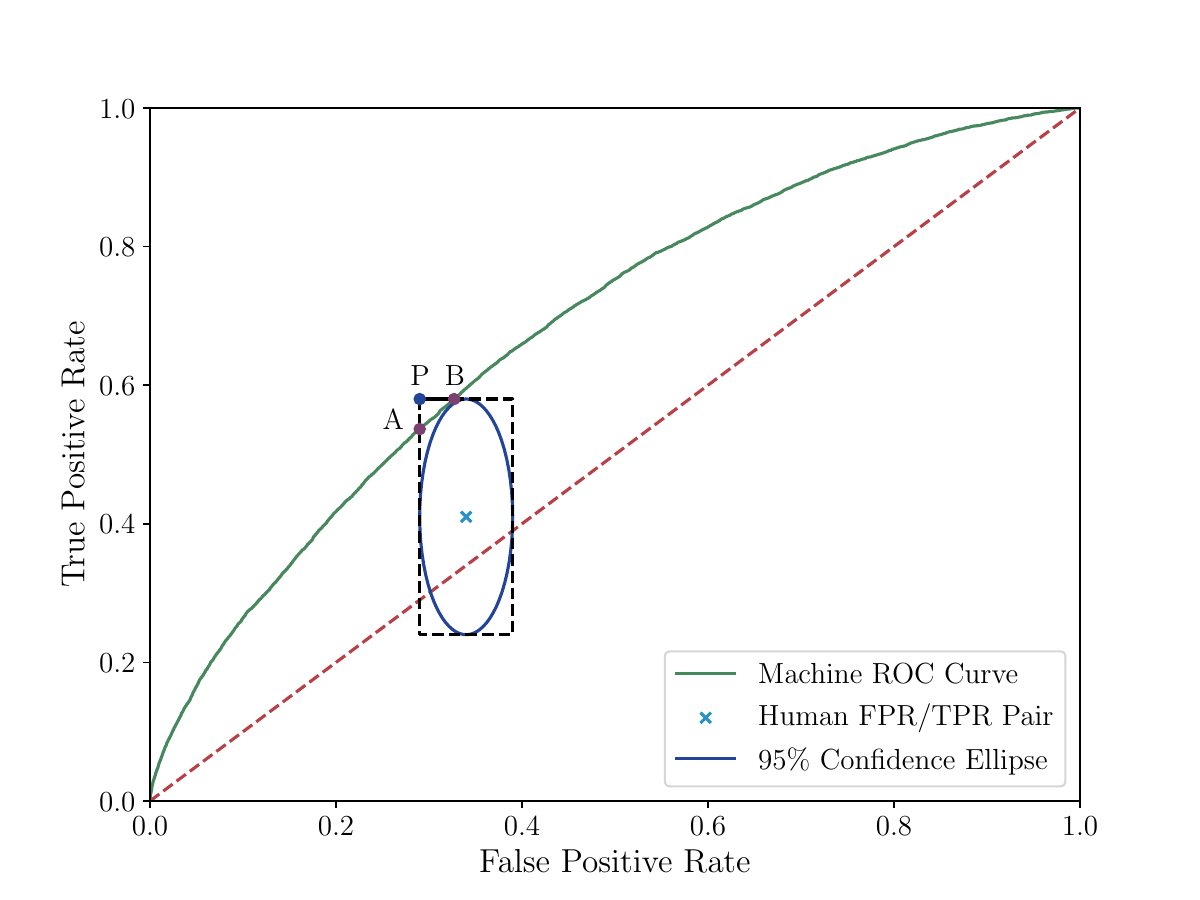}
	\label{fg:case2}}
  \subfigure[Case 3 
  ]{\includegraphics[width=0.48\columnwidth
  ]{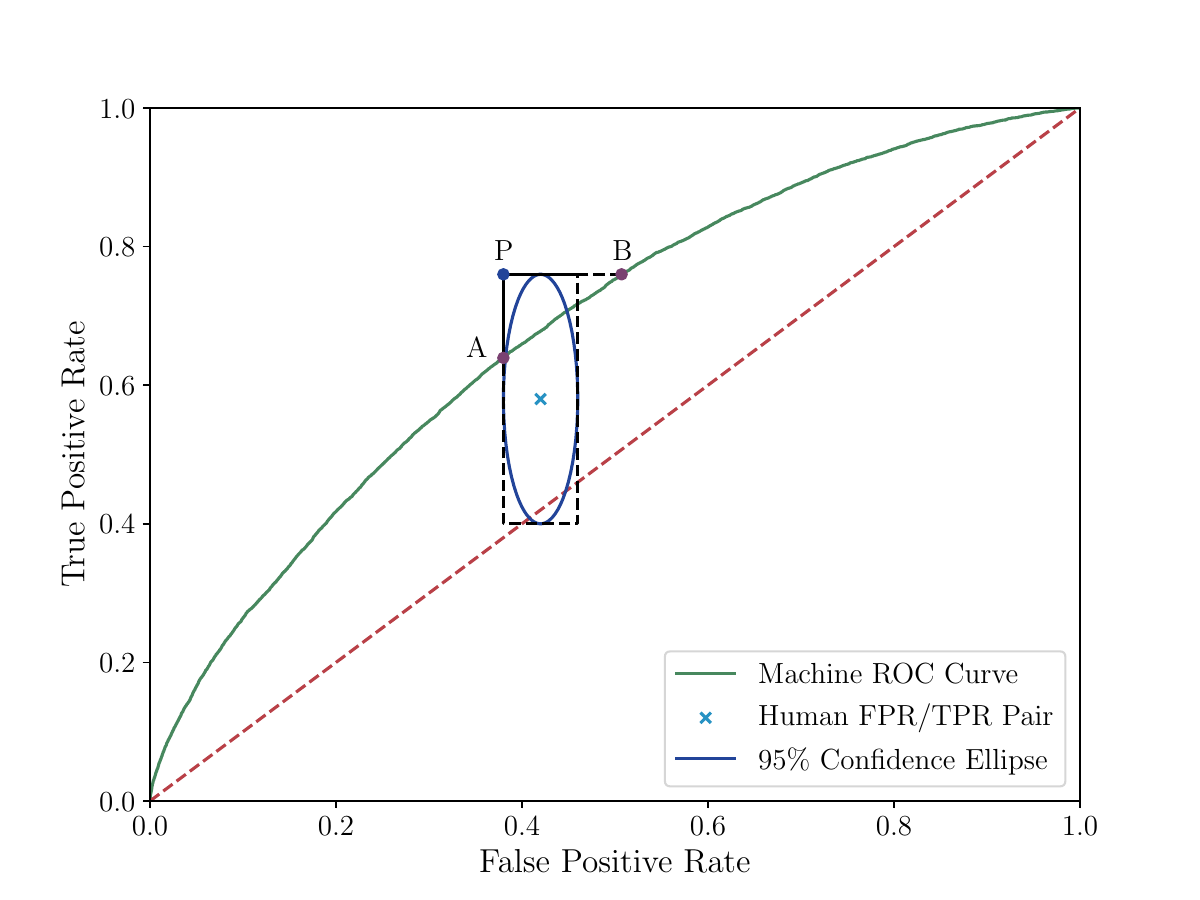}
  \label{fg:case3}}
  \caption{Three cases of the heuristic approach}
  \fnote{\textit{Notes}: 
  The heuristic frequentist approach makes use of the point $\text{P} \coloneqq \(\alpha_{low}, \beta_{high}\)$ that corresponds to 
the smallest FPR and the highest TPR of the elliptical confidence set. 
  Panel \subref{fg:case1} shows the case where P lies below the machine ROC curve and where there exist dominating points on the curve. The yellow points 
are the bootstrapped FPR/TPR pairs. 
  Panel \subref{fg:case2} shows the case where the confidence set lies below the machine ROC curve, but there is no dominating point on the curve.
  Panel \subref{fg:case3} shows the case where a portion of the confidence set lies above the machine ROC curve. 
  }
  \label{heuristic frequentist comparison}
\end{figure}

An immediate consequence of Lemma \ref{distributionFPRTPR} is that a direct comparison of the sample estimate $\(\hat\alpha_H,\hat\beta_H\)$ with the machine ROC curve is likely to be
insufficient. Consider a situation where the true population pair of $\(\alpha_H, \beta_H\)$ lies strictly below the machine ROC curve: $\beta_H < g\(\alpha_H\)$ where $g\(\cdot\)$ denotes the machine ROC curve. 
Then with probability converging to $1$, $\(\hat\alpha_H, \hat\beta_H\)$ lies strictly below the machine ROC. 
Suppose we replace $\(\hat\alpha_H, \hat\beta_H\)$ by 
$\(\hat\alpha_H, g\(\hat\alpha_H\)\)$.  
By Lemma \ref{distributionFPRTPR}, $\mathbb{P}\(\hat\alpha_H \geq \alpha_H\)\rightarrow 0.5$. In other words, the replacement point $\(\hat\alpha_H, g\(\hat\alpha_H\)\)$ on the ROC curve 
does not dominate the true population pair of $\(\alpha_H, \beta_H\)$ approximately half of the time. 
A similar situation arises if we replace $\(\hat\alpha_H, \hat\beta_H\)$ by $\(g^{-1}\(\hat\beta_H\), \hat{\beta}_{H}\)$. 
 In order to provide a probabilistic statement regarding the property of the replacement procedure, 
a more formal statistical framework is necessary. 
We consider both a heuristic frequentist approach and a Bayesian approach.

\subsection{A heuristic frequentist approach}\label{heuristic frequentist approach}

We initially experiment
with a heuristic procedure based on the frequentist principle, the predominant school of thought in statistical inference which typically begins with confidence set construction. 
The procedure is described as follows.
\begin{itemize}
\item Form a confidence set
$\hat S$ for $\theta_{H}$.
By Lemma \ref{distributionFPRTPR}, 
$n\(\hat{\theta}_{H} - \theta_{H}\)^{T}\hat{\Sigma}^{-1}\(\hat{\theta}_{H} - \theta_{H}\)$ 
converges to a Chi-square distribution with two degrees of freedom,
where $\hat{\Sigma}$ is any consistent estimator of $\Sigma$. 
A  confidence set on this asymptotic distribution 
is elliptically shaped.
\item For each $s=\(\alpha_s, \beta_s\) \in \hat S$, define $A_s$ as the set of points
that {\it dominate} $s$:
\begin{align}\begin{split}\label{dominate points}
A_s = \left\{\theta = \(\alpha, \beta\):
\alpha \leq \alpha_s, \beta \geq \beta_s\right\}.
\end{split}\end{align}
\item Define
$A = \bigcap_{s \in \hat{S}} A_s$: 
$A$ is the set of points that
dominate all the points in $\hat S$.
\item Next, define
$\bar A = A \cap \text{ROC}$:
$\bar A$ is the set of points on the machine ROC curve that simultaneously dominate all the
points in $\hat S$.
\end{itemize}

The convention of choosing a large confidence level 
suggests that the status quo of human decision making is to be replaced by an algorithm only when there is overwhelming evidence from the data indicating 
the superiority of the algorithm. In our application, we will only consider
the machine algorithm 
to outperform the human decision when 
$\bar A \neq \emptyset$, i.e., when there exists a point on the machine ROC curve  that dominates an entire confidence set of 
the human decision maker's $\theta_{H}$ parameter.
In essence, 
we use estimates of the sampling uncertainty of
the FPR/TPR pair of $\hat\theta_H$ 
to provide guidance on how far $\hat\theta_H$ needs to be below the machine ROC curve in order to justify replacing the human decision maker with the machine algorithm.

In the elliptical confidence set depicted in Figure \ref{fg:case1},
denote by $\(\alpha_{\beta_{high}}, \beta_{high}\)$
the point that achieves the highest TPR on the confidence set,
and denote by $\(\alpha_{low}, \beta_{\alpha_{low}}\)$
the point that achieves the smallest FPR on the confidence set. 
The point $\text{P}$ in Figure \ref{fg:case1} corresponds to $\(\alpha_{low}, \beta_{high}\)$. 
There are three possibilities for 
the relation between the position of the confidence set
for the human FPR/TPR pair $\theta_H$ and the machine ROC curve. 

\begin{enumerate}
\item Case 1 (Figure
\ref{fg:case1}): The human confidence set and point
$P=\(\alpha_{low}, \beta_{high}\)$
are both below the machine ROC curve.
We consider the human decision maker to perform
``worse'' than the algorithm
and hence can be replaced by the algorithm.
On the machine ROC curve,  we can find two
points, namely, $\(\alpha_{R}, \beta_{high}\)$
and $\(\alpha_{low}, \beta_{R}\)$,
labeled  as points B and A in Figure \ref{fg:case1},
such that any point on the ROC curve between A and B can provide a machine decision rule to 
replace human decision making.

\item Case 2 (Figure \ref{fg:case2}): The entire human
confidence set is below the ROC curve,
but the point $P=\(\alpha_{low}, \beta_{high}\)$
is above the ROC curve. In this case, even though each point in the confidence region of $\theta_{H}$ is dominated by some
point on the machine ROC curve, 
it is not possible to find a nonempty fraction of the machine ROC curve that simultaneously dominates
all the points in the human
confidence set.
The data evidence is not sufficiently convincing to account for the
randomness of estimating $\theta_H$ by $\hat{\theta}_{H}$. Consequently,
the human decision maker is not replaced by 
the algorithm.

\item Case 3 (Figure \ref{fg:case3}): The human confidence set
has a certain area above
the machine ROC curve. 
The human decision maker
is not replaced by the algorithm either.
\end{enumerate}

In summary, if the  point $P=\(\alpha_{low}, \beta_{high}\)$
is below the ROC curve, the human decision maker is replaced; otherwise, they are not.
In cases 2 and 3, it is possible that the human decision maker
is sufficiently capable compared to the machine algorithm. 
It is also possible that the human 
$\theta_{H}$ is not estimated precisely enough,
for example, due to a lack of historical data that results in 
a large confidence set.
In frequentist statistical inference, $\theta_{H}$
is a fixed number. 
It is either above or below the machine ROC curve.
The confidence set itself $\hat S$ is a random set, such that
$\mathbb{P}\(\theta_H \in \hat S\)
\longrightarrow 1 - \alpha.$
For a conventional size of $\alpha = 0.05$,
if the confidence set is to be constructed 100 times
from different samples, $\hat S$ will
contain $\theta_{H}$ approximately 95 times.
Consequently, by construction, 
$\bar A$ is a random set
such that asymptotically,
\begin{align}\begin{split}\nonumber
\mathbb{P}\(\theta_H \in \hat S\)=
\mathbb{P}\left\{\bar{A} = \emptyset
\text{ and P dominates } \theta_{H}
\right\} +
\mathbb{P}\left\{\bar{A} \neq \emptyset
\text{ and $\bar{A}$ dominates $\theta_{H}$}\right\}
\geq 1 - \alpha.
\end{split}\end{align}
Therefore, we have asymptotically
\begin{align}\begin{split}\nonumber
\mathbb{P}\left\{\bar{A} \text{ dominates } \theta_{H}
\vert \bar{A} \neq \emptyset\right\} \geq
\frac{1 - \alpha - \mathbb{P}\left\{
\bar{A} = \emptyset
\text{ and P dominates } \theta_{H}\right\}}
{\mathbb{P}\left\{\bar{A} \neq \emptyset\right\}}.
\end{split}\end{align}
If $\theta_{H}$ lies below the ROC curve,
$\mathbb{P}\left\{\bar{A} = \emptyset\right\} \to 0$
as the number of cases for the human decision maker
increases to infinity. Furthermore,
$\mathbb{P}\left\{
\bar{A} = \emptyset
\text{ and P dominates } \theta_{H}\right\} <
\mathbb{P}\left\{\bar{A} = \emptyset\right\} \rightarrow 0$, implying
that asymptotically,
$\mathbb{P}\left\{\bar{A} \text{ dominates } \theta_{H}
\vert \bar{A} \neq \emptyset\right\} \geq 1 - \alpha.$
If $\theta_H$ lies above the ROC, we can
control  $\mathbb{P}\left\{\bar A = \emptyset\right\} 
\geq 1-\alpha$ by
reasoning that
\begin{align}\begin{split}\nonumber
\mathbb{P}\left\{\bar A = \emptyset\right\}  =
\mathbb{P}\left\{\text{P is above ROC}\right\} \geq
\mathbb{P}\left\{\text{P dominates $\theta_H$}\right\}
\geq 
\mathbb{P}\(\theta_H \in \hat S\)
\geq 1-\alpha.
\end{split}\end{align}
However, the finite sample size of the test can be severely distorted. By insisting on finding a segment of the machine ROC curve that 
simultaneously dominates the entire confidence region of $\theta_H$, the frequentist procedure is conservative by construction.
While conventional confidence sets are mostly symmetric, it is well known in statistics that by inverting a hypothesis test, alternative confidence sets can be
constructed as the collection of parameter values for which the null hypothesis is not rejected. For example, if we let the null hypothesis be $H_0: \theta_H = \theta_{0H}$ and the alternative 
hypothesis be $H_1: \alpha_H > \alpha_{0H}\ \text{or}\ \beta_H < \beta_{0H}$, a test can be formed to reject the null hypothesis $H_0$ when either $\hat\alpha_H > \alpha_H + c$ or when 
$\hat\beta_H < \beta_H - d$, where $c$ and $d$ are chosen by the size of the test. The confidence set of $\theta_{0H}$ where the test does not reject 
takes the form of
$\{\theta_{0H}: \alpha_{0H} \geq \hat\alpha_H -c,\ \text{and}\ \beta_{0H} \leq \hat\beta_H +d\}$, which is lower rectangular. 
While forming confidence sets based on inverting inequality tests is theoretically intriguing, it is not often used empirically.

An alternative in Appendix \ref{macine decision thresholds} is to test whether the population value of $\theta_H$ is above or below the machine ROC curve. 
We do not use this test because of its limitation that 
even if the null hypothesis of the status quo of human decision making is rejected, it does not 
locate a point of the machine ROC curve to replace human decisions, as discussed in section \ref{heuristic approach}.

\subsection{A Bayesian approach}\label{the Bayesian approach}

In the following we focus on an alternative Bayesian method, which often allows for a more transparent interpretation than frequentist inference.
The Bayesian approach 
combines a prior distribution of the underlying population parameters with the likelihood of the
data given the population parameters to produce a posterior distribution of the parameters given the data, which is then used 
in a decision theoretic framework where a threshold is chosen to minimize the posterior expected loss function. 
The combination of a multinomial likelihood function with a 
Dirichlet conjugate prior drastically improves computability. 
Simulating the posterior distribution of the 
key FPR/TPR parameters to construct a Bayesian credible set is 
facilitated  by writing them 
as  a vector function of the multinomial parameters whose posterior distribution is given analytically through the conjugate prior. 
The empirical results 
suggest that the Bayesian 
method 
is less conservative and suggests replacement of more doctors by the machine algorithm than the frequentist method does.

%

Computing a Bayesian posterior distribution requires two key inputs: 
the likelihood of the data given the population value of $\theta_{H}$,
and a prior distribution for 
the underlying population parameters.
Each observation in the dataset consists of both the doctor diagnoses and the ground truth of pregnancy outcomes, 
and follows a multinomial distribution with four categories: diagnosed as risky and abnormal birth outcome; not diagnosed as risky and abnormal birth outcome; diagnosed as risky and normal birth outcome; not diagnosed as risky and normal birth outcome. The parameters of the multinomial probabilities are:
\bs
t_1 = \mathbb{E} Y \hat Y,\ t_2 = \mathbb{E}\[\(1 - Y\)\hat Y\],\ t_3 = \mathbb{E}\[Y \(1-\hat Y\)\],\ t_{4} = \mathbb{E}\[\(1 - Y\)\(1-\hat Y\)\].
\end{split}\end{align}
Since $t_1 + t_2 + t_3 + t_4 = 1$, we choose the first three as free parameters. The multinomial distribution is a completely specified parametric model and allows for exact likelihood Bayesian posterior distribution computation. The 
likelihood of the data given the free parameters $t=\(t_1, t_2, t_3\)$ depends only on a set of sufficient statistics $\hat t=\(\hat t_1, \hat t_2, \hat t_3\)$ 
where
\begin{align}\begin{split}\label{definition of t1 to t3} 
\hat t_1 = \frac1n \sum_{i=1}^n Y_i \hat Y_i,\quad
\hat t_2 = \frac1n \sum_{i=1}^n\[\(1-Y_i\) \hat Y_i\], 
\quad
\hat t_3 = \frac1n \sum_{i=1}^n\[Y_i \(1-\hat Y_i\)\].
\end{split}\end{align}
In particular, we can write the data likelihood as (where $\mathbb{D}$ refers to \textit{Data}):
\begin{align}\begin{split}\label{multinomial likelihood}
L(\mathbb{D}|t) = \binom{n}{n\hat{t}_1} t_1^{n\hat{t}_1} \binom{n-n\hat{t}_1}{n\hat{t}_2} t_2^{n\hat{t}_2} \binom{n-n\hat{t}_1-n\hat{t}_2}{n\hat{t}_3} t_3^{n\hat{t}_3} t_4^{n\(1-\hat{t}_1- \hat t_2 -\hat t_3\)}.
\end{split}\end{align}

Given a prior distribution of $t$, denoted as $\pi\(t\)$,
the posterior distribution of the parameter $t$, denoted as $p\(t \vert \mathbb{D}\)$ can usually be simulated or calculated analytically:
\begin{align}\begin{split}\nonumber
p\(t \vert \mathbb{D}\) = \frac{
\pi\(t\) L\(\mathbb{D} \vert t\)}{
\int \pi\(t'\) L\(\mathbb{D} \vert t'\) \mathrm{d}t'
}.
\end{split}\end{align}
We are interested in the posterior distribution of $\theta_H=\(\alpha_H, \beta_H\)$, which can be written as functions of the 
underlying multinomial parameters: 
$h\(t\) = \(\alpha_{H}\(t\), \beta_{H}\(t\)\)$.
If we can simulate from the posterior distribution $p\(t \vert \mathbb{D}\)$ and denote the simulated draws as $t_r, r=1,\ldots,R$, then the 
posterior distribution of $\theta_{H}$, denoted as
$p\(\theta_H \vert \mathbb{D}\)$, can be estimated using the empirical distribution of $h\(t_r\), r=1,\ldots,R$.


Computing the posterior distribution using a multinomial likelihood is facilitated by the Dirichlet conjugate prior. A Dirichlet prior on the K-dimensional simplex of $t_k,k=1,\ldots,K$, where
$\sum_{k=1}^K t_k=1$, is specified by hyper-parameters 
$\gamma=\(\gamma_k,k=1,\ldots,K\)$. Its density is given by
$\pi\(t\vert \gamma\)=\frac{1}{B\(\gamma\)} \prod_{k=1}^K t_k^{\gamma_k-1}$, where $B\(\gamma\)$ is the multivariate Beta function. 
See for example \cite{kotz2019continuous}.
The case with $\gamma_k=\gamma, \forall k=1,\ldots,K$
 is called 
a symmetric Dirichlet distribution. 
In one dimension, where $K=1$, the
Dirichlet distribution specializes to a Beta distribution, which in turn includes the uniform distribution as a special case
when $\gamma=1$.

%

The posterior distribution is also Dirichlet with parameters $\hat\gamma=\(\hat\gamma_k,k=1,\ldots,K\)$, where
$\hat{\gamma}_k = \gamma_k + n \hat t_k$ for each $k=1,\ldots,K.$
Simulating from the posterior distribution of
$\theta_{H} = \(\alpha_{H}, \beta_{H}\)$
can be accomplished by recomputing  $\theta_{H,r} = h\(t_r\)$ after
drawing for $t_r$ from 
the posterior Dirichlet distribution with parameters
$\hat\gamma$. 
Specifically, for each 
$t_r = \(t_{r,k}\)^{K}_{k = 1}$, we calculate 
\begin{align}\begin{split}\label{trans t to theta}
&\theta_{H, r} = h\(t_r\) = \(\alpha_{H}\(t_r\),
\beta_{H}\(t_r\)\), \quad\text{where}\\ 
&\alpha_{H}\(t_r\) = \frac{t_{r,2}}{t_{r,2} + t_{r,4}},\ 
\beta_{H}\(t_r\) = \frac{t_{r,1}}{t_{r,1}+t_{r,3}}.
\end{split}\end{align}
The simulated draws $\theta_{H,r}, r=1,\ldots,R$ can be used to estimate posterior probabilities. For example, 
%
the posterior probability that 
$\theta_{H}$ lies below the ROC curve, denoted as
\begin{align}\begin{split}\label{probbelowroc}
\int_{\theta_{H} \ \text{below}\ \text{ROC}}
p\(\theta_{H} \vert \mathbb{D}\) \mathrm{d}\theta_{H}, 
\end{split}\end{align}
can be estimated by the fraction of times where
$\theta_{H, r}$ lies below the ROC curve:
\bs
\frac1R \sum_{r=1}^R \mathds{1}\(\theta_{H,r}\ \text{lies below the machine ROC curve}\) = \frac1R \sum_{r=1}^R \mathds{1}\(\beta_{H}\(t_r\) \leq g\(\alpha_{H}\(t_r\)\)\).
\end{split}\end{align}

We are mainly interested in the {\it maximum} posterior
probability of $\theta_{H}$ that are 
\textit{dominated} simultaneously 
by a single point on the ROC curve, denoted as
\begin{align}\begin{split}\label{max bayesian simple point prob}
q_{max} \equiv \max_{\alpha \in \[0, 1\]} 
\int_{\alpha_{H} \geq \alpha, 
\beta_{H} \leq g\(\alpha\)}
p\(\theta_{H} \vert \mathbb{D}\)
\mathrm{d}\theta_{H},
\end{split}\end{align}
and the corresponding maximizing point, denoted as 
$\theta_{max} = \(\alpha_{max}, \beta_{max}=g\(\alpha_{max}\)\)$, where
\begin{align}\begin{split}\label{argmax posteriorprob}
\alpha_{max} \equiv 
{\arg\max}_{\alpha \in \[0, 1\]} \left\{
\int_{\alpha_{H} \geq \alpha,
\beta_{H} \leq g\(\alpha\)}
p\(\theta_{H} \vert \mathbb{D}\)
\mathrm{d}\theta_{H}\right\}. 
\end{split}\end{align}
Both 
\eqref{max bayesian simple point prob} 
and \eqref{argmax posteriorprob} 
can be estimated by the simulated posterior draws of
$\theta_{H, r}, r = 1, \ldots, R$,  by
\begin{align}\begin{split}\label{bayesian basic numerical}
\hat{q}_{max} & = \max_{\alpha \in \[0, 1\]}\frac1R 
\sum_{r=1}^R  \mathds{1}\(\alpha_{H}\(t_r\) 
\geq \alpha, \beta_{H}\(t_r\) \leq g\(\alpha\)\),\quad\\
\hat{\alpha}_{max} =& 
{\arg\max}_{\alpha \in \[0, 1\]}
\sum_{r=1}^R  \mathds{1}\(\alpha_{H}\(t_r\)
\geq \alpha, \beta_{H}\(t_r\) \leq g\(\alpha\)\).
\end{split}\end{align}
%

More generally, decision-theoretic Bayesian analysis
minimizes 
posterior expected losses. Let
$\rho\(\theta_{M}, \theta_{H}\)$ 
denote a loss function that represents the disutility of replacing a human FPR/TPR pair $\theta_H$ by a machine FPR/TPR pair $\theta_M$. Consistent with Assumptions \ref{basic assump} and \ref{ROC assump}, we specify
$\rho\(\theta_{M}, \theta_{H}\) = 0$ 
when $\alpha_{M} \leq \alpha_{H}$ and $\beta_{M}
\geq \beta_{H}$, or when the machine FPR/TPR pair strictly dominates the human FPR/TPR pair. Otherwise, $\rho\(\theta_{M}, \theta_{H}\)$ may be strictly positive. Given the choice of 
$\rho\(\theta_M, \theta_H\)$, a point on the ROC curve can 
be chosen to minimize the posterior expected loss
\begin{align}\begin{split}\label{posterior expected loss}
\theta_M^0 = {\arg\min}_{\theta_{M}: \beta_M = g\(\alpha_M\)}
\int 
\rho\(\theta_{M}, \theta_{H}\)
p\(\theta_{H} \vert \mathbb{D}\) 
\mathrm{d}\theta_{H}, 
\end{split}\end{align}
which is also estimated using the simulated posterior draws of $\theta_{H,r}, r=1,\ldots,R$ as
\bs
\hat\theta_M^0 = {\arg\min}_{\theta_{M}: \beta_M = g\(\alpha_M\)}
\frac1R \sum_{r=1}^R \rho\(\theta_M, \theta_{H,r}\).
\end{split}\end{align}

In particular, \eqref{max bayesian simple point prob} is a
special case of \eqref{posterior expected loss} when
$\rho\(\theta_M, \theta_{H}\) = 
1 - \mathds{1}\(\alpha_{H} \geq 
\alpha_{M}, \beta_{H} \leq \beta_{M}\)$.
It is also possible to explore more general loss functions.
For example, we may weight the loss of replacing $\theta_H$ by $\theta_M$ by the distance between $\theta_H$ and the machine ROC curve:
\begin{align}\begin{split}\label{euclidean distance loss function}
\rho\(\theta_{M}, \theta_{H}\) = 
1 - 
\mathds{1}\(\alpha_{H} \geq 
\alpha_{M}, \beta_{H} \leq \beta_{M}\)
\cdot 
\inf_\theta\left\{\Vert \theta_{H} - \theta\Vert: 
\theta \text{ on ROC}\right\}, 
\end{split}\end{align}
where $\Vert\cdot\Vert$ is the Euclidean norm. 
Intuitively, the farther a human FPR/TPR pair $\theta_H$ is from the ROC curve, the smaller the loss (or the larger the benefit) of replacing it with a \textit{dominating} point on the machine ROC curve. 
Alternatively, we can replace the distance from $\theta_H$ to the ROC curve by the distance of each $\theta_M$ to the complement of the set of FPR/TPR points that dominate $\theta_H$, and use the loss function:
\begin{align}\begin{split}\label{complement set euclidean distance loss function}
\rho\(\theta_{M}, \theta_{H}\) = 
1 - 
\mathds{1}\(\alpha_{H} \geq 
\alpha_{M}, \beta_{H} \leq \beta_{M}\)
\cdot 
\min\(\alpha_H-\alpha_M, \beta_M - \beta_H\).%
\end{split}\end{align}


The Euclidean distance from the machine ROC curve may also not fully capture the loss of
not replacing a human FPR/TPR pair $\theta_H$ that lies under
the machine ROC curve. For example, a diagonal ROC curve 
corresponds to
completely randomized decision making without using feature information. Similarly, a ROC curve below the diagonal can be generated by a decision rule 
that awards rather than penalizes mis-classification errors. 
These considerations suggest that replacing a human FPR/TPR pair $\theta_H$ below the diagonal line with a dominating $\theta_M$ along the machine ROC curve should not entail losses. We can 
adopt a convention of normalizing the loss to be $1$ when $\theta_M$ does not dominate $\theta_H$, i.e., when $\alpha_M \geq \alpha_H$ or $\beta_M \leq \beta_H$.

We 
then 
specify the loss $\rho\(\theta_M, \theta_H\)$ when 
$\theta_M$ dominates $\theta_H$ and when $\theta_{H}$ lies between the machine ROC curve 
and the diagonal line. Each of such $\theta_H$ can be reproduced 
by a convex combination of a point on the machine ROC curve and a point on the diagonal line, using a decision rule that randomizes
between the point on the machine ROC curve and the point on the diagonal. 
For example, $\(\alpha_H, g\(\alpha_H\)\)$ is generated by a machine based decision rule $\hat Y_{i} = \mathds{1}\(m\(X_{i}\) > c_0\)$ for some threshold $c_0$, while $\(\alpha_H, \alpha_H\)$ is generated by a fully randomized decision $\hat Y_{i} = \mathds{1}\(U_{i} \geq 1-\alpha_H\)$ where $U_{i}$ is independently uniformly distributed on $\(0,1\)$. 
Then $\(\alpha_H, \beta_H\)$ can be generated by a lottery that places weight
$\(\beta_H-\alpha_H\)/\(g\(\alpha_H\)-\alpha_H\)$ on $\hat Y_{i} = \mathds{1}\(m\(X_{i}\) > c_0\)$ and the remaining weight on $\hat Y_{i} = \mathds{1}\(U_{i} \geq 1-\alpha_H\)$. 
More precisely, if 
$V_{i}$ denotes 
an independent random variable uniformly distributed on $\(0,1\)$, the decision rule
\begin{align}\begin{split}\nonumber
\hat Y_{i} = \mathds{1}\(V_{i} \leq \frac{\beta_{H} - \alpha_{H}}
{g\(\alpha_{H}\) - \alpha_{H}}\)
\mathds{1}\(m\(X_{i}\) > c_{0}\) + 
\mathds{1}\(V_{i} > \frac{\beta_{H} - \alpha_{H}}
{g\(\alpha_{H}\) - \alpha_{H}}\)
\mathds{1}\(U_{i} > 1 - \alpha_{H}\)
\end{split}\end{align}
generates the $\(\alpha_H, \beta_H\)$ FPR/TPR pair. 
The larger the weight $\lambda_{\theta_{H}} 
\equiv \frac{\beta_{H} - \alpha_{H}}
{g\(\alpha_{H}\) - \alpha_{H}}$ placed on the machine ROC curve,
the smaller the benefit, or the larger the loss, of replacing $\theta_H$ with a dominating pair $\theta_M$ on the machine ROC curve.
The randomization scheme motivates using 
the weight 
$\lambda_{\theta_{H}}$ 
as a component of the loss function
\begin{align}\begin{split}\label{decomposition weight loss function}
\rho\(\theta_{M}, \theta_{H}\) = 
1 - 
\mathds{1}\(\alpha_H \geq \alpha_M, \beta_H \leq \beta_M\)
\(1 - \max \left\{\lambda_{\theta_{H}}, 0\right\}\).
\end{split}\end{align} 
Other randomization schemes can also be used. 
We refer to the above construction as a vertical decomposition. 
Similarly, a horizontal decomposition replaces the weights by 
$\lambda_{\theta_{H}} 
\equiv \frac{\beta_{H} - \alpha_{H}}{\beta_H - g^{-1}\(\beta_{H}\)}$.  
In addition, we experiment with decomposing the distance between $\theta_M=\(\alpha_M, \beta_M\)$ 
and the diagonal line into two parts. The first part is the distance from $\theta_M$  to the boundary of the set of FPR/TPR points that do not dominate $\theta_H$. The second part is the distance from this boundary to the diagonal line. The corresponding loss function that mirrors 
is
\begin{align}\begin{split}\label{complement set decomposition weight loss function}
\rho\(\theta_{M}, \theta_{H}\) = 
1 - 
\mathds{1}\(\alpha_H \geq \alpha_M, \beta_H \leq \beta_M\)
\(1 - \max \left\{\lambda_{\theta_{H},\theta_M}, 0\right\}\),
\end{split}\end{align} 
where distance can be measured either horizontally, implying that
$\lambda_{\theta_{H},\theta_M}=\frac{\beta_M-\alpha_H}{\beta_M-\alpha_M}$, or vertically, implying that
$\lambda_{\theta_{H},\theta_M}=\frac{\beta_H-\alpha_M}{\beta_M-\alpha_M}$.

When $\alpha_M < \alpha_H$ and $\beta_M > \beta_H$, replacing $\theta_H$ by $\theta_M$ is beneficial.
The benefit, $b\(\theta_M, \theta_H\)$, is likely to be larger when $\theta_H$ is farther below the machine ROC curve. 
When either $\alpha_M > \alpha_H$ or $\beta_M < \beta_H$, we expect a cost $\ell\(\theta_M, \theta_H\)$. 
A general loss function 
consists of both a cost component and a benefit component: 
\begin{align}\begin{split}\label{loss and benefit functions together}
\hspace{-.1in}\rho\(\theta_{M}, \theta_{H}\) = 
\mathds{1}\(\alpha_H < \alpha_M\ \text{or}\ \beta_H > \beta_M\)
\ell\(\theta_M, \theta_H\) -
\mathds{1}\(\alpha_H \geq \alpha_M, \beta_H \leq \beta_M\)
b\(\theta_M, \beta_H\).
\end{split}\end{align} 
For each 
loss function, 
a guardrail threshold is used 
to account for the status quo of human decision making. A human decision maker is only replaced by the machine algorithm when the posterior expected loss of the replacement is below a given prespecified level, or when the posterior expected benefit exceeds a certain level.


In addition to the deterministic decision rules of relying exclusively on either an individual doctor or the machine algorithm, 
randomized 
decision rules can be generated by mixing between a
given point on the machine ROC curve and an individual doctor. Let $\omega$ denote the randomizing probability of using the machine algorithm, and $1-\omega$ the corresponding probability of relying on the human decision makers. The minimized expected posterior loss 
of the randomized decision rule can be written as
\begin{align}\begin{split}\label{mixed strategy minimized posterior loss}
\min_{\omega \in \[0,1\], \theta_M: \beta_M = g\(\alpha_M\)}
\int \rho\(\omega \theta_M + \(1-\omega\) \theta_H, \theta_H\) p\(\theta_H \vert \mathbb{D}\) \mathrm{d}\theta_H.
\end{split}\end{align}
Without expert domain knowledge of medical diagnoses, it is difficult to 
implement \eqref{loss and benefit functions together}  and \eqref{mixed strategy minimized posterior loss}. 
Investigating expert domain knowledge is beyond the scope of the current paper. 

Our decision framework differs conceptually from the minimization of posterior expected loss function in Bayesian point estimation. A Bayesian point estimator of $\theta_H$, denoted as $\hat\theta_{H,B}$, is often defined as the minimizer of an expected posterior loss function:
\bs
\hat\theta_{H,B} = \arg\min_{\theta\in \left[0,1\right]^2} \int \ell\(\theta - \theta_H\) p\(\theta_H \vert \mathbb{D}\) \mathrm{d}\theta_H.
 \end{split}\end{align}
The estimation loss function $\ell\(\theta - \theta_H\)$ typically depends only on $\vert\theta-\theta_H\vert$ 
and is increasing in $\vert \theta-\theta_H\vert$, which is different from our decision loss function $\rho\(\theta_M, \theta_H\)$.

\subsection{Discussion: incentives and costs}
\label{Discussion: incentives and costs}

The visual illusion shown in Figure \ref{figure 2} that hampers the comparison
between machine learning algorithms
and human decision makers
is an immediate artifact of Jensen's inequality. 
An optimal ROC generated by a correctly specified propensity score function
is necessarily concave, since it is the primal function of a concave program. See for example \cite{feng2022properties}.
For the $j$th doctor,
denote the FPR as $\alpha_j=\mathbb E \(1-Y_{ij}\)\hat Y_{ij}/\(1 - p_j\)$ and the TPR as $\beta_j=\mathbb E Y_{ij} \hat Y_{ij} / p_j$, where
$p_j = \mathbb P\(Y_j =1\)$.
If all the $\(\alpha_j, \beta_j\)$ pairs lie on a strictly concave ROC curve $g\(\cdot\)$,
these variables are related by $\beta_j = g\(\alpha_j\)$. 
When the patient cases across doctors are drawn from the same population, $p_j=p$ is a constant. It follows from 
Jensen's inequality that
\begin{align}\begin{split}\nonumber
\bar \beta = \frac1J \sum_{j=1}^J \beta_j =
\frac1J \sum_{j=1}^J g\(\alpha_j\) < g\(\frac1J \sum_{j=1}^J \alpha_i\) = g\(\bar \alpha\),
\end{split}\end{align}
when $\alpha_{j}$ are not all equal.
This important observation appears to have gone largely unnoticed by the
literature.
Similarly, if each patient $i$ is diagnosed using a point on the human ROC curve $\(\alpha_{i,j}, \beta_{i,j}\)$ 
and if $\alpha_{i,j}, i=1,\ldots,n_j$, are not all equal,
it is also the case that
\begin{align}\begin{split}\nonumber
\beta_j = \frac{1}{n_{j}} \sum_{i=1}^{n_j} \beta_{i,j} =
\frac{1}{n_j} \sum_{i=1}^{n_j} g\(\alpha_{i,j}\) < g\(\frac{1}{n_j} \sum_{i=1}^{n_j} \alpha_{i,j}\) = g\(\alpha_j\).
\end{split}\end{align}
Lemma \ref{strictly concave roc} in the supplementary appendix provides a set of primitive conditions under which the aggregate human decision maker FPR/TPR lies strictly below the machine ROC curve. 

In general, without the restrictions in Assumptions \ref{basic assump} and \ref{ROC assump}, the decision maker $j$ may use a
model $\hat y = \mathds{1}\(q_{j}\(x,u\) > c_{j}\(x,u\)\)$. 
For decision maker $j$ minimizing expected loss of misclassification conditional on the feature $x$ and $u$,  consider 
the following loss matrix: 
\begin{center}
        \begin{tabular}{ l c r }
                Loss matrix & $\hat y=0$ & $\hat y =1$ \\
                $y=0$ & $0$ & $c_{01}\(x, u\)$ \\
                $y=1$ & $c_{10}\(x, u\)$  & $0$
        \end{tabular}
                 \end{center}
where the misclassification losses $c_{01}$ and $c_{10}$ are now dependent on $x$ and $u$. 
Equivalently, decision maker $j$ also minimizes the ex ante expected loss:
\begin{align}\begin{split}\nonumber
\mathbb{E}\[c_{10}\(X, U\)q_{j}\(X, U\)\(1-\hat Y\) + c_{01}\(X, U\)\(1 - q_{j}\(X, U\)\) \hat Y\].
\end{split}\end{align}
The optimal decision rule is $\mathds{1}\(q_{j}\(x, u\) > \frac{c_{01}\(x, u\)}{c_{10}\(x, u\) + c_{01}\(x, u\)} \eqqcolon c_{j}\(x, u\)\)$. 
Consequently, for $\(x_{1}, u_1\)$, $\(x_{2}, u_2\)$ 
sharing the same propensity score
$q_{j}\(x_{1}, u_1\) = q_{j}\(x_{2}, u_2\)$,
the choice of the decision maker
can be different due to cost differences:
$c_{j}\(x_{1}, u_1\) \neq c_{j}\(x_{2}, u_2\)$. Such a decision rule 
is considered in \cite{elliott2013predicting}. 
Even if 
all doctors have full information processing capacity, 
so that $q_j\(x,u\)=p\(x,u\)$ for a correctly specified propensity 
function $p\(x,u\)$, 
an arbitrary FPR/TPR pair below the machine ROC curve 
can still be rationalized by some decision rule 
$\hat y = \mathds{1}\(p\(x,u\) > c_j\(x,u\)\)$ corresponding to a reverse-engineered cost function $c_j\(x,u\)$. 


To illustrate using a synthetic example without $u$,
let $x$ be  uniformly distributed on $\(0,1\)$ and $p\(x\) = x$. 
Let the cutoff threshold function be 
\begin{align}\begin{split}\nonumber
c\(x\) = 
\begin{cases}
0,\quad x< 1/4;\\
2\(x-1/4\), \quad 1/4 < x <	3/4;\\
1, \quad x > 3/4.
\end{cases}
\end{split}\end{align}
Then $\hat y = \mathds{1}\(x < \frac12\)$ and $(\text{TPR, FPR})=(1/8, 3/8)$, 
which lies strictly below the diagonal line. 
An example when the feature may enter into the cutoff threshold is bank lending, 
where $y \in \{0,1\}$ denotes whether a loan is paid back, and $\hat y \in \{0,1\}$ denotes whether a loan application is approved. 
The amount of principal and interests are features $x$ that are likely to affect the probability $p\(x\)$ of defaulting by the borrower, 
but that are also likely to affect the cost perception $c\(x\)$ by the lender.
Assumption \ref{basic assump} essentially excludes such difficulty of interpreting a FPR/TPR pair arising from incentive heterogeneity.

To understand whether
the cost perception of doctors might vary systematically with the risk level of their patients, 
we use 
the FPR rank for each doctor to measure the conservativeness of their diagnoses.
Doctors may be more willing to tolerate a higher FPR 
for higher risk patients 
when they perceive
riskier patients as less healthy and more likely to incur a higher cost of underdiagnosis. 
We then estimate the correlation between the FPR ranks
and a pregnancy risk level indicator. 
In particular, we regress 
$q_{i,j}$, 
the predicted abnormal birth rate for patients 
diagnosed by doctor $j$ from a random forest algorithm 
described in section \ref{data and algorithm},
against $r_j$, the doctor's FPR rank: 
\begin{align}\begin{split}\nonumber
q_{i,j} = \beta_{0} + \beta_{1}r_{j} + \epsilon_{i,j}. 
\end{split}\end{align}
The doctors' FPR rank are normalized to $\[0, 1\]$. 
If the FPR of doctor $j$ is higher than
80\% of the doctors, then $r_{j} = 0.8$.
The regression results for doctors with no less than
300 diagnoses 
and for doctors with no less than 500
diagnoses 
are summarized in Table
\ref{quantile machine prediction}. 
The regression results suggest that there is no
strong evidence that the FPR rank of 
doctors is associated with the risk level of
their patients.

\begin{table}[htbp]\centering
    \caption{Predicted abnormal birth probabilities and
    FPR rank of doctors}
    \label{quantile machine prediction}
    \begin{tabular}{l*{2}{c}l*{4}{c}}
    \toprule
    & \multicolumn{2}{c}{Linear Term Only} & &
    \multicolumn{4}{c}{With Nonlinear Term} \\
    \cline{2-3} \cline{5-8}
    & \multicolumn{1}{c}{(1)} & \multicolumn{1}{c}{\;\;\:(2)}
    & \multicolumn{1}{c}{} & \multicolumn{1}{c}{(3)}
    & \multicolumn{1}{c}{(4)}
    & \multicolumn{1}{c}{(5)} & \multicolumn{1}{c}{(6)}
    \\
    \midrule
    FPR rank & 0.003 & \;\;\:-0.000
    & & -0.105 & 0.120
    & -0.072 & 0.156 \\
                 & (0.20) & \;\;\:(-0.01)
    & & (-1.71) & (0.86)
    & (-0.95) & (0.91) \\
    $\text{Rank}^{2}$ & & & & 0.111
    & -0.456
    & 0.072 & -0.493 \\
    & & & & (1.69) & (-1.32)
    & (0.89) & (-1.16) \\
    $\text{Rank}^{3}$ & & & & & 0.385
    & & 0.380 \\
    & & & & & (1.61)
    & & (1.28) \\
    Constant     & 0.206 & \;\;\:0.205
    & & 0.224 & 0.205
    & 0.216 & 0.197 \\
                 & (21.79) & \;\;\:(18.28)
    & & (17.95) & (14.33)
    & (14.44) & (11.33) \\
    \midrule
Number of doctors & 584 & \;\;\:367
    & & 584 & 584
    & 367 & 367 \\
    $R^{2}$ &  0.0001 & \;\;\:0.0000
    & & 0.0059 &  0.0107
    & 0.0025 & 0.0071 \\
    \bottomrule
    \end{tabular}
    \fnote{
    \textit{Notes}: 
    The table shows regression results for the relation between 
doctors' preferences and their patients' risk levels. 
    In the baseline linear regression, the dependent variable is the patient's predicted risk by a random forest algorithm.
    The explanatory variable is the FPR rank of the patient's doctor. 
    Regression results including quadratic and cubic terms are reported in columns (3) to (6) 
    in the table. 
There are 584 doctors diagnosing at least 300 patient cases, and 367 doctors diagnosing at least 500 patient cases. 
    Standard errors  in parentheses are clustered by doctors.
    }
\end{table}

The use of statistical
significance as a benchmark for comparing doctors with the machine algorithm is motivated by 
the difficulty of fixating a social cost function, 
even if we assume that informed human decision makers are fully rational and impose
strong functional form assumptions. Designing a social loss function empirically 
relies 
on the domain knowledge of the specific decision maker.
Our
replacement algorithm allows for the choice of the cutoff threshold to be specific to each
individual doctor, and does not require a homogeneous preference function to be applied
to all doctors.
%
The baseline comparison between true positive rates and false positve rates can also potentially mask the severity of individual cases, the identification of which requires a richer dataset with more precise measurements on the outcome variable. While these information are not 
available in our dataset, exploring richer data information can be a fruitful future direction of research.

A responsible phase-in protocol based on multiple years of data on performance observations can incorporate 
long term health and wellbeing outcomes into the definition of performance and quality for decision making.
Our data is cross-sectional and does not follow doctors or patients over time. 
When a panel data becomes available in which doctors and patients can be observed over multiple time periods, 
future work can potentially 
generalize the current approach to allow 
for sequential testing and decision making. 
The challenge of accounting for 
longer term health outcomes is 
partially reflected by 
the current strategy that replaces a doctor’s diagnosis 
by the machine algorithm only when there is overwhelming 
statistical evidence suggesting the superiority of the machine decision. 

\section{Data and Algorithm Description}\label{data and algorithm}


Our data came from the National Free Prepregnancy Checkups
(NFPC) project.
Starting from 2010, NFPC offers free health checkup for
couples planning to conceive 
and is conducted across 31 provinces in China.
The dataset contains more than 300 features for each observation,
including age, demographic characteristics,
results from medical examinations and clinical tests,
disease and medication history, pregnancy history, lifestyle and
environmental information of both wives and husbands.
The data also include the 
birth outcome, which is denoted as normal ($y = 0$) or abnormal ($y = 1$).
In addition,
the dataset records doctors' IDs and diagnoses of pregnancy risk.
The doctors' diagnoses are graded according to 4 levels:
0 for normal, 1 for high-risk wife,
2 for high-risk husband, and 3 for high-risk wife and husband.
In this paper,
we group the doctor's diagnoses into 2 categories,
where a grade of 0
corresponds to normal pregnancy and
any higher grade corresponds to a diagnosis
of risky pregnancy.

Our original dataset 
includes 
3,330,304 couples that have
pregnancy outcomes between January 1, 2014, and December 31, 2015.
We exclude duplicate features and  samples for which either doctors' diagnoses or 
more than 50\% of feature values are missing.
We were left with 168 features of 
1,137,010 couples who were diagnosed by 28,716 doctors. 
Of these observations, 61,184 couples (5.38\%) had abnormal birth outcomes.

A basic measurement of the prediction quality of a binary classifier
is accuracy, which is the proportion of correct predictions.
However, accuracy itself is inadequate 
in many applications. As the adverse birth rate is about
5\%, a naive classifier that categorizes all cases as low risk would achieve an accuracy of
nearly 95\%. This is clearly controversial. The doctors' overall accuracy is 73.63\%, and
24.04\% of pregnancies are diagnosed as high-risk.
Doctors' aggregate 
false positive rate and true positive rate
are 
0.2379
and 0.2843.
In contrast,
the FPR and TPR of the naive predictor are both zero.
In other words,
doctors are willing to tolerate a higher FPR in order to achieve a higher TPR.

We focus on doctors who diagnosed more than 300 patients. There are 584 such doctors corresponding to 584,181 patient cases.
Using a ratio of 7:3 and stratification methods,
we randomly split our sample into a classification set and a performance set.
We use the classification set to train a 
machine learning algorithm,
and then we categorize a doctor as ``replaced'' or 
``retained''
according to our statistical procedure.
The performance set is used to compute the FPR/TPR of the combined diagnoses.
The data cases are stratified into the four classes based on the actual birth outcomes and the doctors' diagnoses: true positive cases, true negative cases,
false negative cases and false positive cases.  
The random splitting process is performed through each class of data
 with the same ratio of 7:3. Subsequently,
the data in each class are merged to form the classification and performance sets.
In total, there are 408,661 cases
in the classification set
and 175,520 cases
in the performance set.


We further split the classification set into two subsets. The
first subset is the training set used to estimate the parameters in a machine learning algorithm. 
The second subset is the validation set 
used to evaluate
the performance of the algorithm and 
obtain the machine ROC curve.
Splitting of the classification set into the 
training and validation subsets is done using a ratio of 4:3 and 
the same stratification 
method described above.
There are 233,435 cases in the training set and 175,226 cases in the validation set. 

We experimented with several algorithms. 
We begin with
decision trees, which are widely used in many machine learning applications.
However, a single tree method usually has high variance despite its low bias, and tends to overfit by generating results with substantial variation 
 even when 
a small amount of noise is present in the data.
In contrast, random forest (RF) is a commonly used ensemble learning algorithm proposed by
\cite{breiman2001random}, and overcomes these problems by constructing a
collection of decision trees that are trained using different feature subspaces and 
bootstrapped samples.
The predictions of each tree are aggregated to
make predictions. 
Appendix \ref{MachineAlgorithm} provides more details about the random forest algorithm.\footnote{We choose the following parameter settings. The number of estimators $N$ is $100$.
The number of max features per node $M$ is $50$.
The minimum number of samples required to split is 50.}



\begin{remark}
The classification of doctors' diagnoses by both husband and wife 
raises a question regarding the validity of 
Assumption \ref{basic assump} that rules out intra-individual incentive heterogeneity, if 
the relative costs of false positives to false negatives differ
between when a single person is deemed risky and when both are deemed risky. 
A multiclass classification model with four outcome labels differentiating abnormal pregnancy outcomes due to high risks in wife, husband and both spouses can allow for heterogeneous relative costs of false positives to false negatives across these cases. 
However, in our dataset, 
the outcome label is only binary indicating whether the birth is classified as normal or abnormal. The lack of the corresponding four-way 
outcome label makes it difficult to compute an accurate TPR and FPR for each of the diagnosis class.

As an illustration, we empirically calculate 
the FPR/TPR using the binary outcome label for each of the four diagnosis cases,  and report
the aggregate summary in the following table:

\begin{table}[htbp]
\centering
\caption{Doctor performance summary by multiple diagnosis classes}
\vspace{.05in}

\begin{tabular}{cccccccc}
\toprule
\multicolumn{8}{l}{Panel A: FPRs for different diagnosis classes} \\
\midrule
      min patient cases &  num negative  & FP1 & FP2 & FP3 & FPR1 & FPR2 & FPR3 \\
\midrule
        300 & 554195 & 51896 & 33266 & 20954 & 0.0934 & 0.0600 & 0.0524 \\
        500 & 470562 & 42636 & 27362 & 21233 & 0.0906 & 0.0581 & 0.0451 \\
\toprule
\multicolumn{8}{l}{Panel B: TPRs for different diagnosis classes} \\
\midrule
       min patient cases & num positive  & TP1 & TP2 & TP3 & TPR1 & TPR2 & TPR3 \\
\midrule
300 & 29986 & 3250 & 1750 & 1941 & 0.108 & 0.0584 & 0.0647 \\
        500 & 24758 & 2589 & 1384 & 1412 & 0.104 & 0.0559 & 0.0570 \\
\toprule
\multicolumn{8}{l}{Panel C: Precisions across diagnosis classes} \\
\midrule
min patient cases & \multicolumn{2}{c}{wife precision} & \multicolumn{2}{c}{husband precision} & \multicolumn{2}{c}{couple precision} & \\
300 & \multicolumn{2}{c}{0.0589} & \multicolumn{2}{c}{0.0500} & \multicolumn{2}{c}{0.0626} & \\
        500 & \multicolumn{2}{c}{0.0572} & \multicolumn{2}{c}{0.0481} & \multicolumn{2}{c}{0.0624} & \\
\bottomrule
\end{tabular}
\fnote{
\textit{Notes}: 
In the table, FPR1 denotes FPR where the label is abnormal birth and the diagnosis is wife being high risk; FPR2 and FPR3 are similarly defined for the diagnosis classes of husband being high risk and both spouses being high risk. 
Similarly for TPR1, TPR2 and TPR3. 
}
\label{tab:perf_to_acceptrate2}
\end{table}

On the one hand, in Panels A and B of Table \ref{tab:perf_to_acceptrate2}, 
there are some differences between the TPRs 
and FPRs 
among the three diagnosing classes of high risk wife, husband and both spouses. 
However, without the corresponding outcome level it is difficult to interpret these  ratios
since the total number
of abnormal births used in calculating 
the denominators 
 aggregates 
the three unobserved latent 
outcome labels corresponding to high risk wife, husband and both spouses. 

On the other hand, Panel C of 
Table \ref{tab:perf_to_acceptrate2} reports the precision for each of the three diagnosis classes (high risk wife, 
husband and couple). Precision is defined as the fraction of abnormal birth for each of the 
diagnosis classes. 
For example, wife (husband) precision represents the fraction of 
couples with a high  risk wife (husband) diagnosis who have an abnormal birth  outcome.
Couple precision is similarly defined.
In 
Panel C of Table \ref{tab:perf_to_acceptrate2}, the precision parameters for different
diagnosis classes are close to each other. 
The precision parameters, despite being 
an imperfect measurement, provide a more accurate indication suggesting that the relative costs of false positives to false negatives are 
not different 
between cases when a single person is deemed risky and when both are. 

\end{remark}

\section{Empirical Results for AI and doctor decision making}\label{classify doctors}


This section presents the empirical results from 
the heuristic frequentist approach described in section \ref{heuristic frequentist approach} and the Bayesian approach described in section \ref{the Bayesian approach}. Doctors who are not replaced by the machine algorithm are called retained doctors hereafter. 


\subsection{Results of the heuristic frequentist approach} \label{asymptoticResult}


The frequentist approach identifies a dominating segment of the machine ROC for machine algorithm classification. 
The quality of machine algorithm classification is evaluated by averaging over a uniform grid of points on this dominating segment. 
For each doctor, 
we obtain the 95\%  elliptical confidence area of $\theta_{H}$ centered around the sample estimates
using an estimate of the asymptotic covariance matrix based on 100 bootstrap samples.
We find the largest TPR point $\(\alpha_{\beta_{high}}, \beta_{high}\)$ 
and the smallest FPR point 
$\(\alpha_{low}, \beta_{{low}}\)$ of the confidence set.
The reference point 
of the doctor is then given by 
 $P = \(\alpha_{low}, \beta_{high}\)$,
as shown in Figure \ref{fg:case1}, which is used to classify
doctors into two groups.
%
When a doctor's reference point $P$ 
is below the machine ROC curve, 
corresponding to case 1 in section \ref{heuristic frequentist approach}, the 
doctor is replaced by the algorithm.
When a doctor's 
reference point $P$ 
is above the machine ROC curve, 
corresponding to cases 2 or 3 in section
\ref{heuristic frequentist approach},
either because this doctor is more capable than 
the algorithm or this doctor's capability 
cannot be precisely measured due to 
fewer patient observations  and a larger confidence set,
the doctor is retained in 
decision making.

Among the 584 doctors,
175 doctors are classified by the frequentist approach as replaced by the algorithm.
Consequently, patient cases in the performance set are diagnosed by 
175 machine doctors (30.0\%) and 409 (70.0\%) human doctors.
For each of the replaced doctors, 
$N = 100$ points on the dominating segment of the machine ROC curve are chosen corresponding to a uniform grid of the threshold value $c$
for the algorithm to make classification decisions, as 
summarized in Appendix \ref{macine decision thresholds}.
Figure \ref{fg:ai-doctors-300}
shows the results of this experiment.
The aggregate FPR/TPR pair of all doctors on the 
performance set, 
represented by the blue point,
 is 0.2065 for FPR
and 0.2264 for TPR.
The ROC curve of the random forest model on 
the performance set is the green curve.
The area under the curve (AUC) 
is 0.6851.

The yellow point
uses a decision threshold $c$ corresponding to the upper endpoint of the dominating segment of the machine ROC curve 
associated with a highest FPR/TPR pair of 
0.2017 and 0.3326.
Compared to the raw data where 
all diagnoses are performed by doctors, 
the yellow point
increases 
the TPR by 46.9\% and reduces 
the FPR by 2.3\%.
The green point
corresponds to the lower endpoint of the dominating segment of the machine ROC curve  and is 
associated with a lowest FPR/TPR pair of 
0.1836 and 0.2974.
Compared to the aggregate doctor diagnoses,  the green point
increases the TPR
by 31.4\% and reduces the FPR by 11.1\%.

\begin{figure}
  \centering
\subfigure[ROC curve and FPR/TPR pairs of doctors and machine decisions]{\includegraphics[width=0.7\columnwidth]{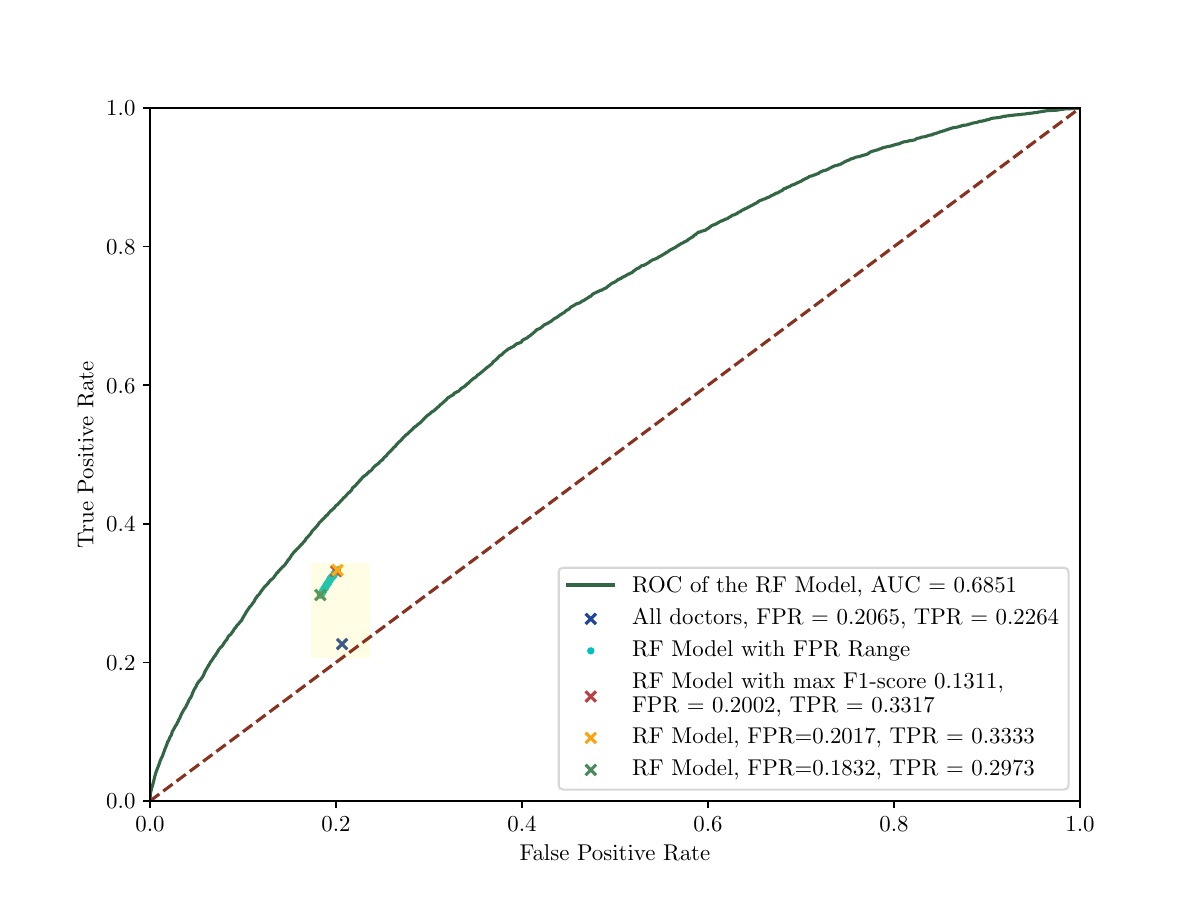}
\label{fg:ai-doctors-300-upper}
}
\subfigure[Zoomed-in view]{\includegraphics[width=0.7\columnwidth]{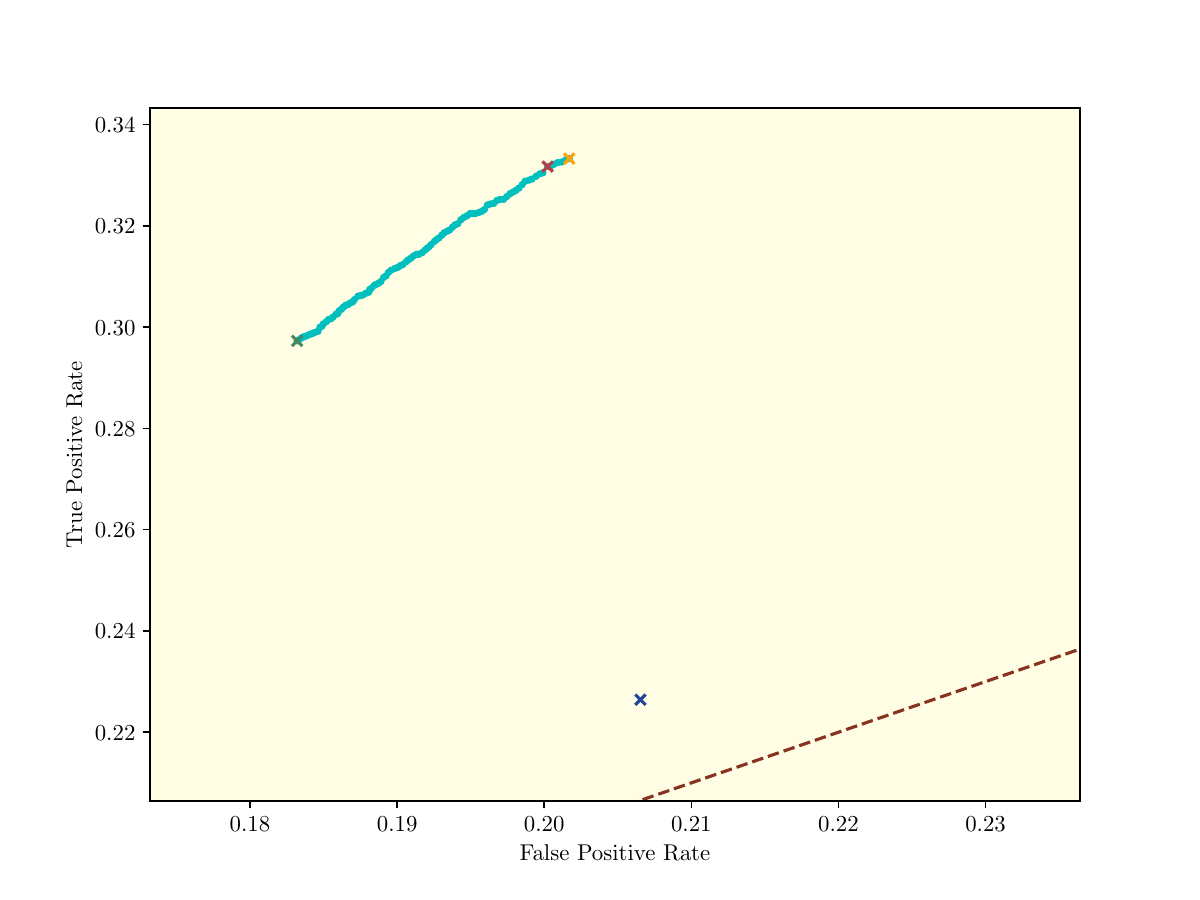}
\label{fg:ai-doctors-300-lower}
}
  \caption{Empirical results of combining doctors' and machine decisions using the heuristic frequentist approach (doctors' diagnoses >= 300).} 
\fnote{\textit{Notes}: Panel 
\subref{fg:ai-doctors-300-upper} displays the ROC curve of the Random Forest classifier 
and 
the blue point representing the FPR/TPR pair of all doctors in the test set. 
The cyan interval shows the aggregate FPR/TPR pairs of combining retained doctors and the machine algorithm 
using different cutoff threshold choices in the frequentist replacement strategy.
Panel \subref{fg:ai-doctors-300-lower} 
provides a magnified view of the cyan interval.
 }
  \label{fg:ai-doctors-300}
\end{figure}

The cyan interval in Figure \ref{fg:ai-doctors-300} reflects a trade-off between improving the TPR and reducing the FPR. 
Any point along the 
interval represents an enhancement over the aggregated doctors' FPR/TPR pair. 
Furthermore, the red point 
achieves the maximum F1 score\footnote{The F1 score is the harmonic average of precision and recall, and is often used as 
a single measurement that conveys the balance between precision and recall. It is widely used in evaluating the performance of algorithms (see for example \cite{baeza1999modern}).} of 
0.1309 along 
the dominating segment of the machine ROC curve. 
Figure \ref{frequentist replaced doctor scatter plot-300} is a scatter plot of the FPR/TPR pairs of both replaced and retained doctors against the ROC curve in the performance dataset. We see that two adjacent points
can be of different replacement status.
This can be due to different numbers of patients treated
or the use of the test set for calculating the FPR/TPR pairs. 
An important caveat 
is that our test 
depends crucially on the sample size of the number
of patients treated by each doctor, 
which introduces an implicit bias that favors experimenting with junior doctors until 
the null hypothesis that a doctor is not to be replaced is rejected 
with sufficient evidence from seeing a large number of patient cases.

\begin{figure}
  \centering
\includegraphics[width=0.8\columnwidth]{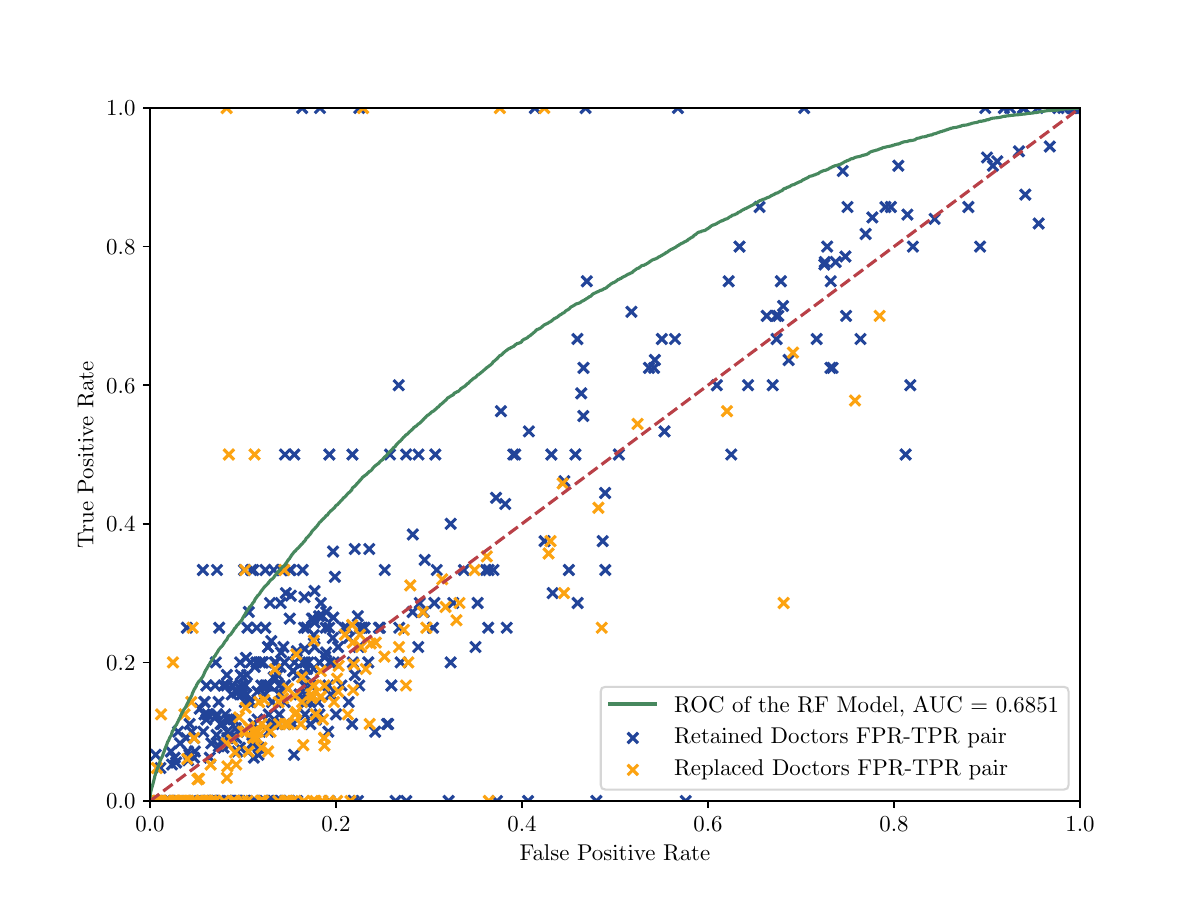}
\caption{Scatter plot of replaced and retained doctors in the test dataset using the frequentist approach (doctors' diagnoses >= 300)}
\fnote{\textit{Notes}: The FPR/TPR pairs in the scatter plot are calculated in the test dataset of patient cases for each doctor. 
It is possible for the FPR/TPR pairs of some of the replaced doctors to lie above the machine ROC curve in the test dataset. Two points
that are quite close to each other can be of different replacement status 
due to different numbers of patients treated.
}
  \label{frequentist replaced doctor scatter plot-300}
\end{figure}
When the quality of the stock of decision makers may
evolve over time,
allowing the juniors to learn from 
making potentially suboptimal decisions for their patients 
has
the potential of increasing the future pool of high quality seniors. This is likely to be a very
delicate issue involving complex welfare tradeoff between current and future generations of
patients.


\subsection{Results of the Bayesian approach} \label{bayesian result}


The prior distribution $\pi(t)$ is chosen to be 
a symmetry Dirichlet distribution with the 
parameters $\(\gamma_1, \gamma_2, \gamma_3, \gamma_4\)$ all set to $\gamma = 0.01$. 
For doctor $j$'s $i$th patient case, 
$\hat{Y}_{ij}$ is the doctor's diagnosis
and $Y_{ij}$ is the ground-truth label of pregnancy outcome.
%
The likelihood for data 
is given in \eqref{multinomial likelihood} where  $\hat Y_i$, $Y_i$ and $n$
in \eqref{definition of t1 to t3} are replaced by 
$\hat Y_{ij}$, $Y_{ij}$ and $n_j$.
The posterior distribution is
also a Dirichlet distribution with parameters 
$(\hat{\gamma}_1, \hat{\gamma}_2, 
\hat{\gamma}_3, \hat{\gamma}_4)$ where
$\hat{\gamma}_{k} = \gamma + n\hat{t}_k$ 
for $k = 1, 2, 3, 4$. 
The posterior distribution of 
$\theta_{H} = (\alpha_{H}, \beta_{H})$ 
can be simulated in a two step procedure.
In the first step, $R$ samples of 
$\{t_r = (t_{1r}, t_{2r}, 
t_{3r}, t_{4r})\}_{r = 1}^{R}$ are drawn from the posterior Dirichlet distribution with parameters $(\hat{\gamma}_1, \hat{\gamma}_2, 
\hat{\gamma}_3, \hat{\gamma}_4)$. 
In the second step, we compute $\{\theta_{H, r} = 
\(\alpha_{H}\(t_{r}\), 
\beta_{H}\(t_{r}\)\)\}_{r = 1}^{R}$ 
with the simulated $R$ samples of $t_r, r=1,\ldots,R$,
using \eqref{trans t to theta}. 

The simulated draws $\theta_{H, r}, r=1,\ldots,R$ are used to calculate
\eqref{bayesian basic numerical}. 
If $\hat{q}_{max} \geq 0.95$,
we mark the doctor as replaced. 
The threshold $c$ that corresponds to 
the point $\(\hat{\alpha}_{max}, g\(\hat\alpha_{max}\)\)$ given by 
\eqref{bayesian basic numerical}
is used as the decision threshold 
for the algorithm. 
In Figure \ref{bayesian replaced},  the maximum coverage rate $\hat q_{max} = 0.9657$ at the black point on the machine ROC curve. The  doctor is replaced by the algorithm.
In Figure \ref{bayesian not replaced}, 
the maximum coverage rate $\hat q_{max} = 0.3179$. There is not 
sufficient evidence to replace the doctor's decisions with 
the algorithm's.

\begin{figure}
  \centering
  \subfigure[The case where a doctor is replaced by the machine algorithm]
  {\includegraphics[width=0.7\columnwidth]{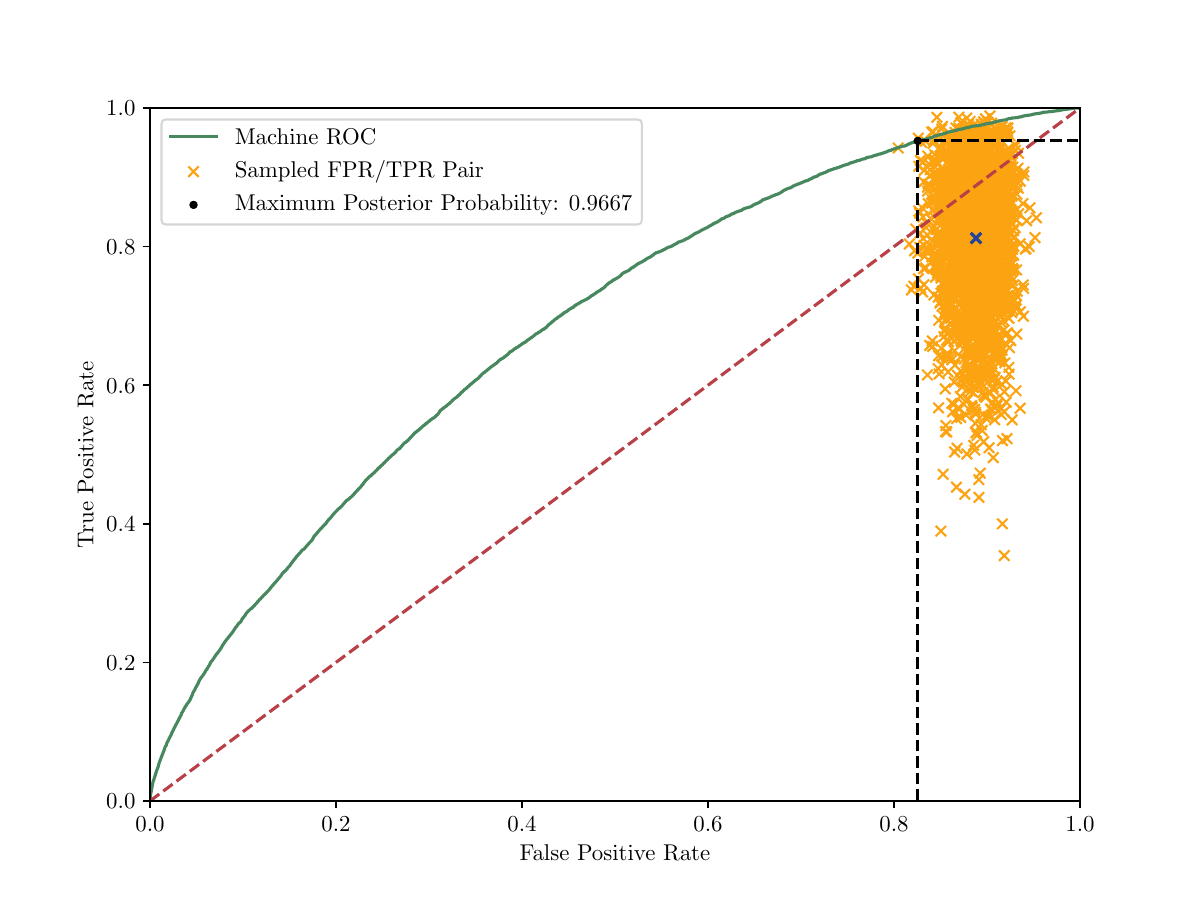}
  \label{bayesian replaced}}
  \subfigure[The case where a doctor is not replaced by the machine algorithm]
  {\includegraphics[width=0.7\columnwidth]{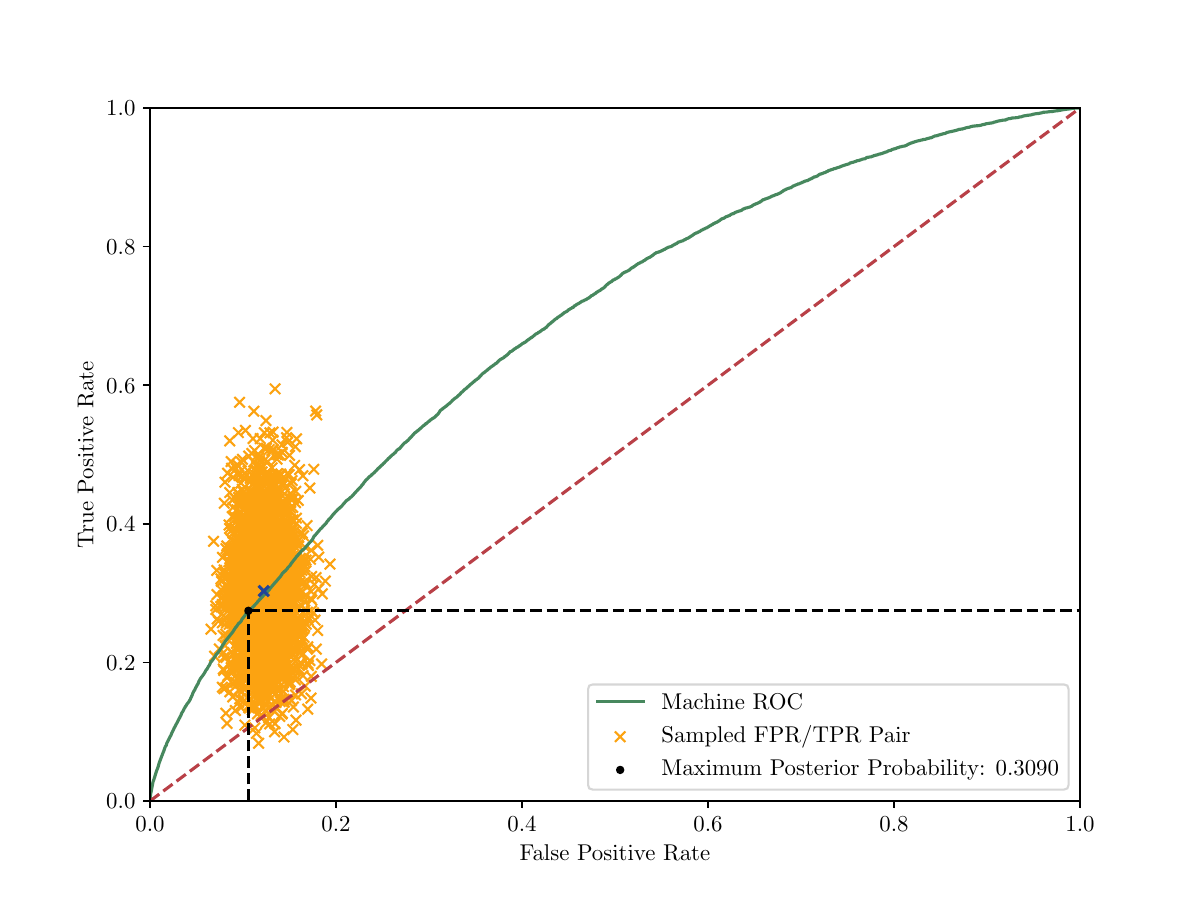}
  \label{bayesian not replaced}}
  \caption{Illustration of replaced and retained doctors by the Bayesian approach}
  \fnote{\textit{Notes}:
  In the above two figures, the yellow points are sampled from the projection of the posterior Dirichlet distribution. One thousand draws are made for each panel. 
  On the machine ROC curves, the black points are the points with the largest dominating posterior probability mass. 
  Panel \subref{bayesian replaced} shows a doctor whose posterior probability of being dominated 
is 
greater than 0.95. This doctor will be replaced by the machine decision rule corresponding to the black point. 
  Panel \subref{bayesian not replaced} illustrates the opposite case where the doctor will not be replaced. 
  }
  \label{fg:bayesian-1}
\end{figure}


Figure \ref{fg:ai-doctors-bayesian-curve-300} shows the experiment results 
for doctors who had diagnosed more than 300 cases.
A total of 269 (46.1\%) out of 584 doctors are classified as replaced. 
The remaining 315 (53.9\%)  are retained human doctors.
The combined decisions by the retained human doctors and the algorithm's decisions
indicated in the figure by the yellow point, 
result in an FPR of 0.1856 and a TPR of 0.3319,
corresponding to an increase of 46.6\% in the TPR and a reduction of 10.1\% in the FPR.
Similar to Figure \ref{frequentist replaced doctor scatter plot-300},
Figure \ref{bayesian replaced doctor scatter plot-300} is a scatter plot of the FPR/TPR pairs
of both replaced and retained doctors against the ROC curve in the performance dataset.

\begin{figure}
  \centering
  \subfigure[ROC curve and FPR/TPR pairs of doctors and machine decisions]{
\includegraphics[width=0.7\columnwidth]{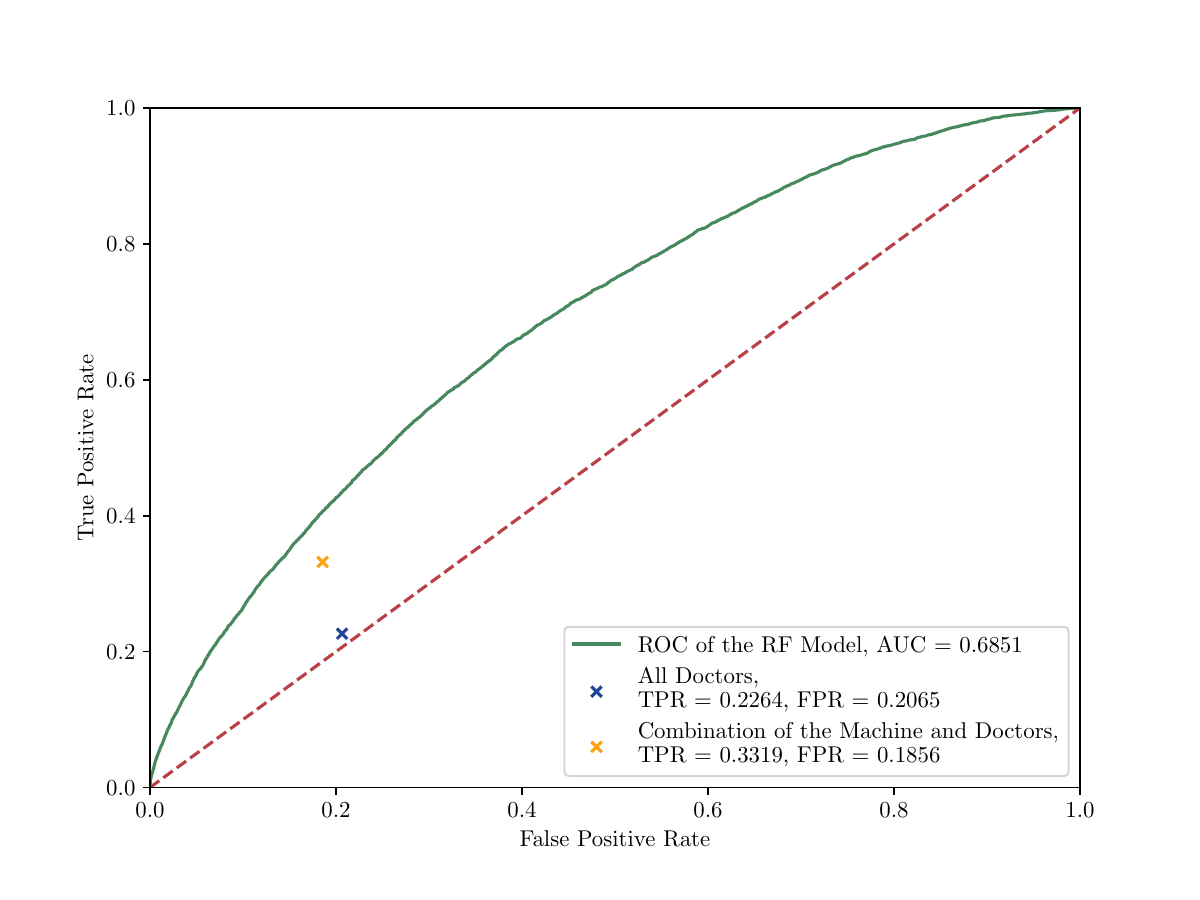}\label{fg:ai-doctors-bayesian-curve-300}
}
  \subfigure[Scatter plot of Bayesian doctor replacement in the performance dataset]{\includegraphics[width=0.7\columnwidth]{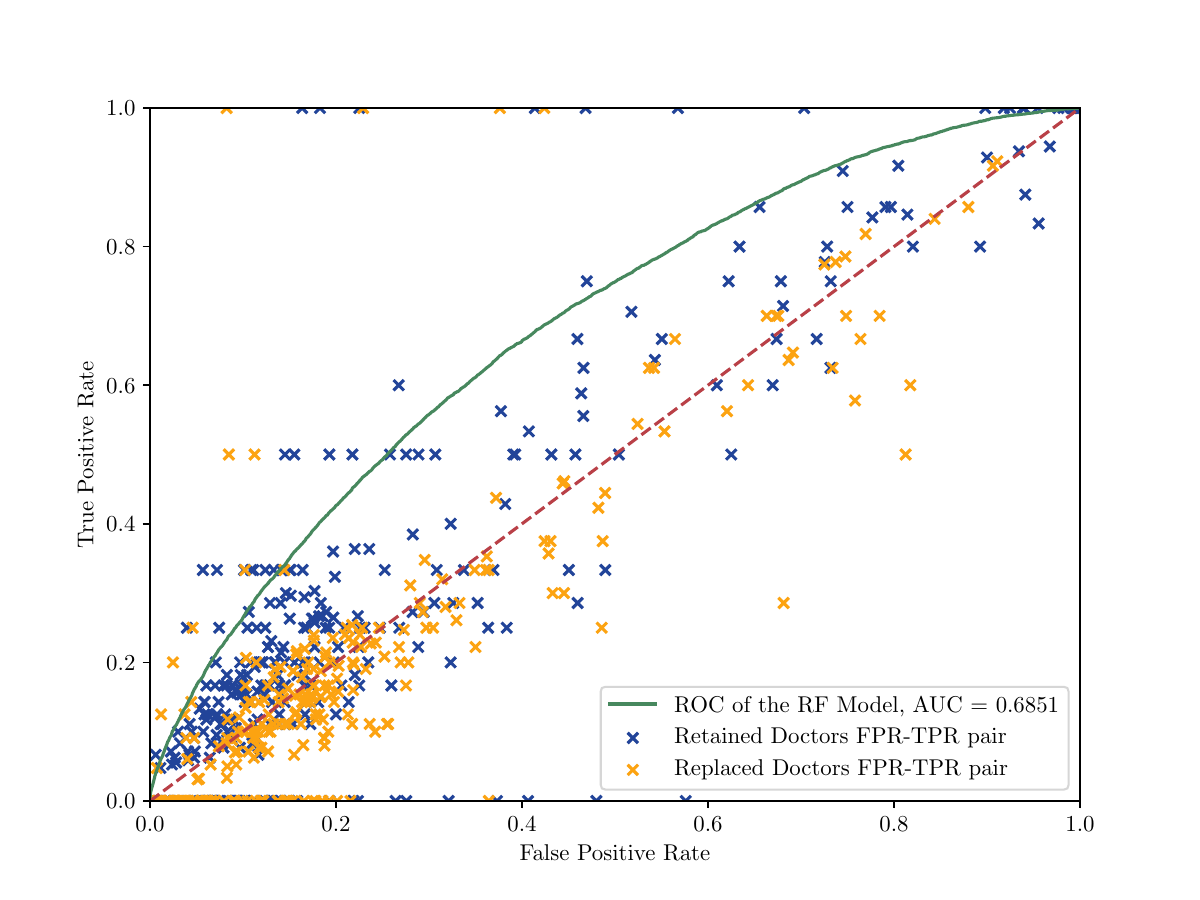}\label{bayesian replaced doctor scatter plot-300}}
  \caption{Empirical results of the Bayesian approach (doctors' diagnoses >= 300).} 
\fnote{\textit{Notes}: The panel \subref{fg:ai-doctors-bayesian-curve-300} draws the ROC curve in the test set of the Random Forest classifier, the
FPR/TPR pair of all doctors, and the FPR/TPR pair of doctor/machine combination. The panel \subref{bayesian replaced doctor scatter plot-300} scatterplot represents the FPR/TPR pairs of individual doctors, such that yellow points correspond to the replaced doctors and the blue points non-replaced doctors.}
  \label{fg:ai-doctors-bayesian-300}
\end{figure}


We experiment with several alternative 
definitions described in equations \eqref{euclidean distance loss function} to 
\eqref{complement set decomposition weight loss function} for the 
Bayesian posterior expected loss functions 
in
\eqref{posterior expected loss}.
%
These results are shown in 
Figure \ref{fg:ai-doctors-bayesian-score-300},  where panel \subref{fg:ai-doctors-bayesian-score-horizonal-lower-300}
magnifies a portion of panel \subref{fg:ai-doctors-bayesian-score-horizonal-upper-300} containing the replacement paths.
Ranking the doctors by their expected posterior losses allows us to replace only 
the doctors whose expected loss of being substituted by the machine algorithm is relatively low. 
Figure \ref{fg:ai-doctors-bayesian-score-300}  traces the dynamics of the aggregate FPR/TPR pairs of 
combining the doctor with the machine decisions, when the ratio of replaced doctors increases 
from 0\% 
to 100\%. 

\begin{figure}
  \centering
   \subfigure[ROC curve and replacement paths]{\includegraphics[width=0.7\linewidth]{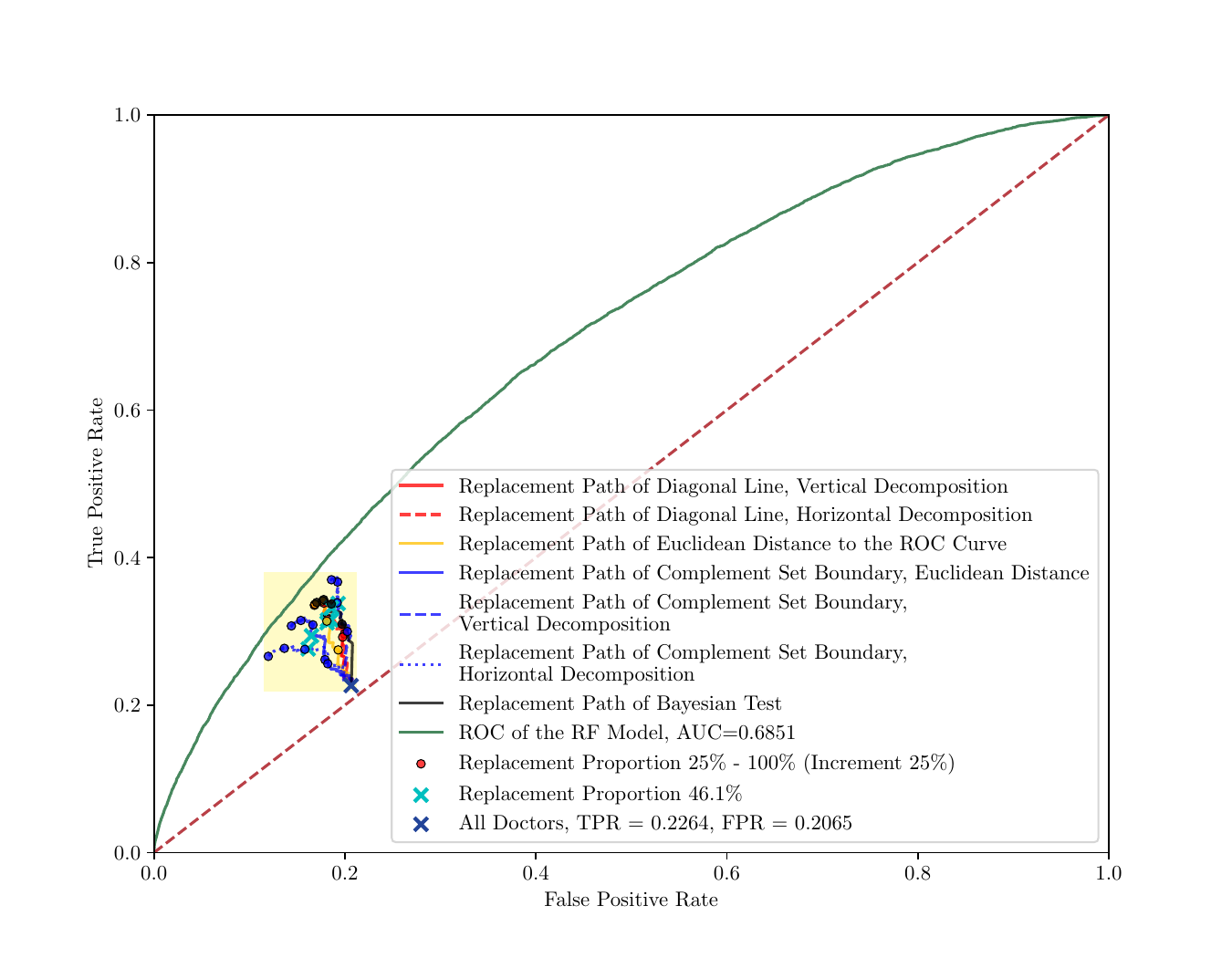}\label{fg:ai-doctors-bayesian-score-horizonal-upper-300}}
   \subfigure[Zoomed-in figure of replacement paths]{\includegraphics[width=0.7\linewidth]{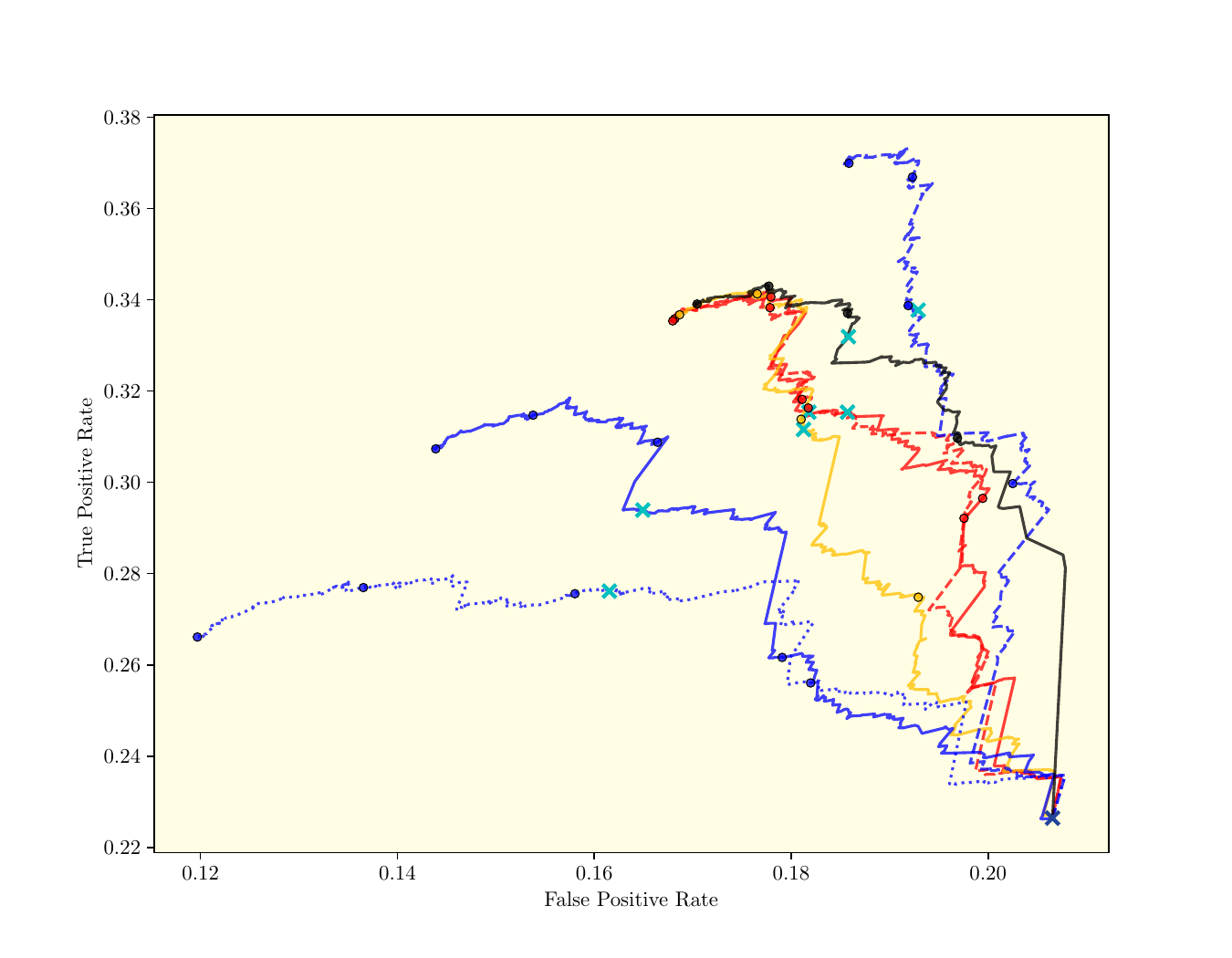}\label{fg:ai-doctors-bayesian-score-horizonal-lower-300}}
  \caption{Bayesian approach with loss functions constructed by Euclidean distance and randomized decomposition (doctors' diagnoses >= 300)}
\fnote{\textit{Notes}: The replacement paths trace the dynamics of the aggregate FPR/TPR pairs of combining 
the doctor decisions with the machine decisions, when the replacement ratio increases from 0\% of decisions entirely by human to 100\% of decisions entirely by machine algorithms. Panel \subref{fg:ai-doctors-bayesian-score-horizonal-lower-300}
magnifies a portion of panel \subref{fg:ai-doctors-bayesian-score-horizonal-upper-300} containing the replacement paths.
}
  \label{fg:ai-doctors-bayesian-score-300}
\end{figure}

In Figure 
\ref{fg:ai-doctors-bayesian-score-horizonal-upper-300} legends, {\it Bayesian Test} refers to the baseline construction in \eqref{max bayesian simple point prob}; 
{\it Euclidean Distance to the ROC curve} refers to the loss function 
in \eqref{euclidean distance loss function}; {\it Complementary Set Boundary Euclidean Distance} refers to the loss function in \eqref{complement set euclidean distance loss function}; {\it Diagonal Line Horizontal and Vertical Decompositions} refer to the loss function in \eqref{decomposition weight loss function}; {\it Complement Set Boundary Vertical and Horizontal Decompositions} refer to the loss function in \eqref{complement set decomposition weight loss function}. 
No individual loss function demonstrates saliently stronger discriminatory power over the entire replacement path. 
Furthermore, 
the incremental benefit going beyond a 50\% replacement rate is 
a small marginal reduction of FPR. 
In addition, the baseline 95\% credible level Bayesian test implemented using \eqref{max bayesian simple point prob} 
is not dominated by any of the competing methods in the FPR/TPR comparison. 

The null hypothesis of
both the Bayesian analysis 
and the frequentist analysis 
is the status quo of human decision making. A human decision maker is replaced by the machine algorithm only when there is strong evidence from the data favoring the machine algorithm. An alternative conceptual paradigm is to treat the machine algorithm as the status quo null hypothesis, where 
the human decision makers are by default replaced by the machine algorithm unless there is strong data evidence validating the superiority of human decision making. 

However, when we experiment with the alternative specification of the status quo, 
all except a few doctors are determined to be replaced 
by the machine algorithm. In the cross-point labeled as the ``dominate method'' in Figure \ref{fg:bayesian_reversed_300}, a doctor is 
retained if there exists a point on the machine ROC curve that is dominated by 
at least 95\% of the  posterior distribution of $p\(\theta_H \vert \mathbb{D}\)$. 
A mere 6 doctors are retained in this method. In the cross-point labeled as the ``above method'' in 
Figure \ref{fg:bayesian_reversed_300}, a doctor is 
retained if the posterior probability of $p\(\theta_H \vert \mathbb{D}\)$ above the machine ROC curve is more than 95\%. Only 7 doctors are retained using this method. For doctors who are replaced by the machine algorithm, the replacement point on the ROC curve is determined by the baseline Bayesian tests in \eqref{bayesian basic numerical}.
 
\begin{figure}
	\centering
\subfigure[ROC curve and FPR/TPR pairs of Doctors and machine decisions]{\includegraphics[width=0.7\columnwidth]{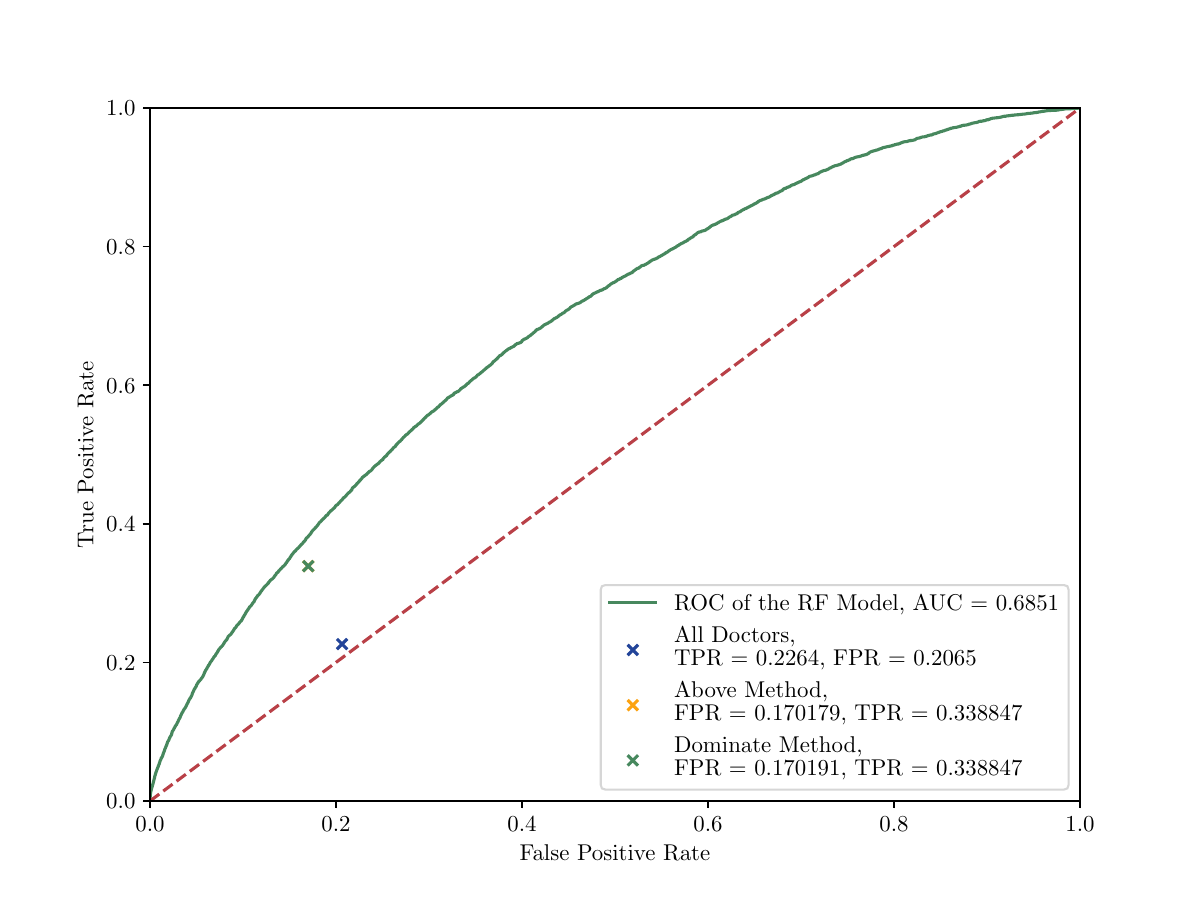}
\label{fg:bayesian_reversed_300-upper}
}
\subfigure[Zoomed-in FPR/TPR pairs of replacement outcomes]{\includegraphics[width=0.7\columnwidth]{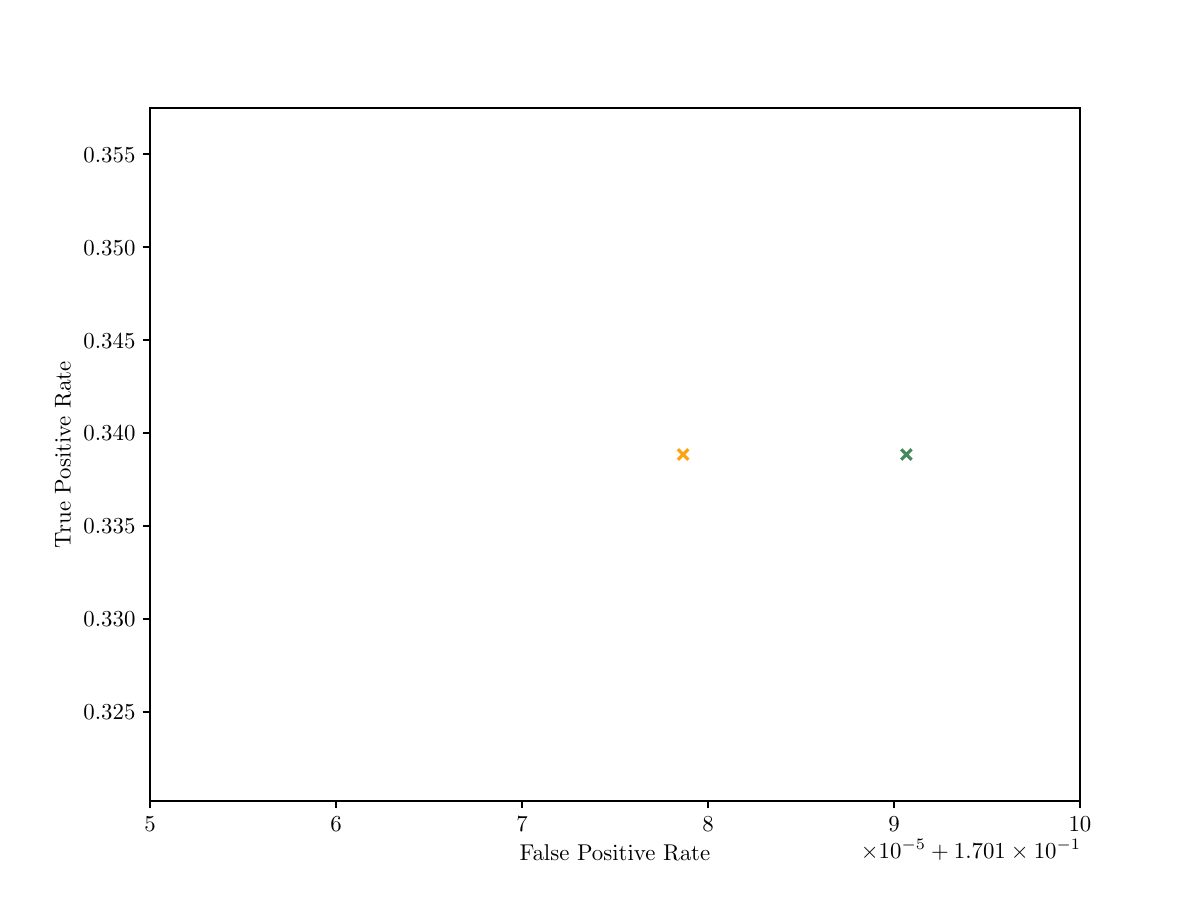}
\label{fg:bayesian_reversed_300-lower}
}
	\caption{Alternative Null Hypothesis of Machine Algorithm: doctors are replaced by default and are retained only with strong evidence.}
\fnote{
\textit{Notes}: In the ``dominate method'' cross point, a doctor is retained if there exists a point on the machine ROC curve that is dominated by at least 95\% of the  posterior distribution of $p\(\theta_H \vert \mathbb{D}\)$. A mere 6 doctors are retained in this method. In the ``above method'' cross point,
a doctor is only retained if the posterior probability of $p\(\theta_H \vert \mathbb{D}\)$ above the machine ROC curve is more than 95\%. Only 7 doctors are retained using this method.  These two points are almost indistinguishable in panel \subref{fg:bayesian_reversed_300-upper}. Their difference is very small as shown in 
\subref{fg:bayesian_reversed_300-lower}.
}
	\label{fg:bayesian_reversed_300}
\end{figure}

At high replacement ratios, the FPR and TPR levels decrease simultaneously across all loss functions. 
Even when all the doctors
are substituted by the machine algorithm, the resulting FPR/TPR pair still lies below the machine ROC curve. 
This is 
because different threshold cutoff values on the machine ROC curve are employed to replace different doctors and because of the concavity arguments formalized in Lemma \ref{strictly concave roc}. 

On the one hand, in 
the current empirical dataset, only a small fraction of doctors have FPR/TPR pairs above the machine ROC curve. 
We have not been able to combine
doctors with the machine algorithm to 
generate a FPR/TPR pair above the machine ROC curve. 
On the other hand, 
using a data generating process that allows for a substantial portion of doctors with FPR/TPR pairs to be above machine ROC curve,
reported in the 
synthetic data analysis
section \ref{synthetic data analysis}, we are able to combine the doctors and the machine algorithm to generate a FPR/TPR pair that outperforms even the machine ROC curve. 


Instead of deterministically replacing some doctors by the machine algorithm, 
Appendix \ref{randomized replacement} 
reports the results of a randomized replacement experiment
where each replaced doctor accepts the machine decision probabilistically according to a pre-specified acceptance rate $\lambda\in\[0,1\]$. 
As expected, the overall FPR/TPR pairs vary incrementally from when $\lambda=0$, corresponding to human decision making, to when $\lambda=1$, corresponding to deterministic replacement of 
doctors by the machine algorithm.


Appendix \ref{fine-grained replacement} considers a  more general form of substitution to replace a subset of patient cases for each doctor. 
We 
use patient case covariates to predict the likelihood that the doctor's diagnoses differ from the ground-truth outcome, and replace only those patient cases where the doctor is prone to make mistakes. 
While the resulting pattern of the replacement paths is quite different, 
the 
patient-case specific substitution approach does not show saliently stronger discriminatory power compared to the baseline Bayesian approach. 
 
\subsection{Characteristics of replaced doctors}
Our paper 
identifies a subset of the doctors to be replaced by 
the machine algorithms according to FPR/TPR indicators.
We create a dummy variable ``replaced'' that is set to 1 when a doctor is replaced by the algorithm and is set to 0 otherwise, and examine the partial correlation coefficient between the ``replaced'' dummy variable with several confounding factors including whether they practice in a clinic or hospital.

We regress the ``replaced'' indicator on   three confounding factors. 
The first factor is county-level per capita GDP (in ten thousand Chinese yuans)  
collected manually from 
the statistical yearbooks of 2014 across cities and provinces,
which is the starting year of our data\footnote{The statistical yearbook of  year $A$ records 
the latest available data before year $A$.}. 
The second factor is the logarithm of the total number of patient cases diagnosed by each doctor. 
The third factor is represented by two dummy variables related to the practice history of each doctor. 
A doctor in our dataset may practice in more than one type of facility. The practice facility can be either a county hospital or a
township clinic. 
Medical facilities in China are typically classified by the administrative level that they belong to. On the one hand, urban residents usually go to a county level hospital. 
On the other hand, rural residents often times go to a township level clinic, which is generally considered to be of lower quality than a county level hospital. 
The doctors in our data fall into three categories. 
Doctors who have practiced in both types of facilities 
are used as the reference level. 
The first dummy variable representing the third factor is set to $1$ for doctors who have practiced only in a county hospital. 
The second dummy variable representing the third factor is set to $1$ for doctors who 
have practiced only in a township clinic.
There are additional variables in the dataset that are only available at a more aggregated 
the city or province level, such as birth rate and the 
average number of medical staff. We do not use these variables because, given how administrative units are divided in China, a municipal unit 
can include many county level units with heterogenuous economic conditions. 

The data summary statistics are reported in Panel A of 
Table \ref{tab:summary_statistics}. As shown in Panel A of Table \ref{tab:summary_statistics}, doctors who have practiced at both a county hospital and a township clinic
have seen the largest number of patient cases and have a high replacement ratio.
Doctors who have practiced only at a township clinic have seen a smaller number of patient cases compared to doctors
who have practiced only at a county hospital, but 
doctors who have practiced only at a township clinic  have a higher replacement ratio than 
doctors who have practiced only at a county hospital have.

\begin{table}
\centering
\caption{Characteristics of replaced doctors}
\begin{tabular}{cccccc}
  \toprule
  & & \multicolumn{4}{c}{Panel A: Summary statistics by doctor type} \\
    \cline{3-6} 
  & & num & log num & GDP & replacement ratio \\
  \midrule
  county \& town & & 276 & 6.797 & 3.956 &  0.522 \\
         & & & (0.73) & (2.94)  &  \\ 
  county only & & 161 & 6.515 & 3.675 & 0.398 \\
         & & & (0.65) & (2.27) & \\
  town only & & 43 & 6.409 &  3.038  & 0.512 \\
        & & & (0.58) & (2.26) & \\
  \midrule
  & \multicolumn{2}{c}{Panel B: Logistic regression} & & \multicolumn{2}{c}{Panel C: TPR kernel regression} \\
  \cline{2-3} \cline{5-6}
  & \quad\quad (1) & (2) & & \quad\quad\quad\quad (3) & (4) \\
  log num & \quad\quad 0.896 & 0.885 & & \quad\quad\quad\quad 0.000251 & 0.000117 \\
  & \quad\quad (2.85) & (3.20) & & \quad\quad\quad\quad (0.04) & (0.02) \\
  GDP percap & \quad\quad -0.0114 & -0.0124 & & & 0.000569 \\
  & \quad\quad (-0.37) & (-0.41) & & & (0.28) \\
  count only & & -0.367 & & \quad\quad\quad\quad 0.0172 & 0.0181 \\
  & & (-1.99) & & \quad\quad\quad\quad (2.04) & (1.99) \\ 
  town only & & 0.359 & & \quad\quad\quad\quad -0.0273 & -0.0235 \\
  & & (0.86) & & \quad\quad\quad\quad (-1.73) & (-1.63) \\
  FPR & & & & \quad\quad\quad\quad 0.958 & 0.954 \\
  & & & & \quad\quad\quad\quad (44.02) & (51.70) \\
  Observations & \quad\quad 457 & 457 & & \quad\quad\quad\quad 480 & 469 \\
  \bottomrule 
\end{tabular}
    \label{tab:summary_statistics}
    \fnote{
\textit{Notes}: 
Panel A 
shows the summary statistics of doctors grouped by type of their facility. 
Panel B 
shows regression results of a logistic regression with province level fixed effect. 
Cluster robust \textit{t}-statistics are reported in parentheses. 
A more direct analysis of doctors performance is provided by a kernel regression of TPR on FPR and other covariates reported in Panel C.
As in Panel B \textit{t}-statistics are given in parentheses. 
}
\end{table}

Panel B in 
Table \ref{tab:summary_statistics}
reports a logistic regression 
of the replacement indicator on county level per  capita GDP and the logarithm of the number of patients seen by 
the doctor. Column (2) adds an indicator of whether the doctor practices only in a township clinic, and an indicator 
of whether the doctor practices only in a county hospital. The reference level where both 
indicators are zero corresponds to doctors who have practiced in both a township clinic and a county hospital. 
Province fixed effects are included in the logistic regression. 
We use cluster 
robust standard errors in order to correct for spatial autocorrelation.
The logistic coefficient estimates do not contradict our a priori reasoning. While higher 
per capita GDP is associated with a 
lower likelihood of replacement, the corresponding coefficients are small 
and lack statistical significance.
The coefficients of the log number of patients are positive and statistically significant, 
indicating the higher distinguishing power of the Bayesian test when a doctor sees more patients. 
The FPR/TPR pairs of doctors with more patient cases can be estimated more precisely. Consequently, it is more likely 
for these doctors whose FPR/TPR pairs are below the machine ROC curve to be identified as replaced by our statistical procedure.
This finding is consistent with the lack of evidence contradicting Assumption 
\ref{ROC assump}
that 
doctors accumulate experience from additional patient cases. 

Doctors practicing only
at township clinics have higher replacement probability than reference level doctors practicing at both 
township clinics
and county hospitals, who in turn have higher replacement probability than doctors 
practicing only at county hospitals. 
Anecdotal evidence in China 
suggests that doctors with better medical school records are more likely to be assigned directly to 
county level hospitals than to have to work their ways up from township  clinics into county hospitals. 

To distinguish an analysis of the predictors of performance of doctors from 
the predictors of tests concluding whether they 
are to be replaced by the machine algorithm, 
we implement a kernel regression of TPR 
on FPR and a number of confounding features  
including county level per  capita GDP, the logarithm of the number of patients seen by 
the doctor, an indicator of whether the doctor practices only in a township clinic, and an indicator 
of whether the doctor practices only in a county hospital.\footnote{We thank the associate editor and the review team
for suggesting a direct analysis of the performance of doctors.}
Among doctors achieving the same FPR, 
those achieving a higher TPR 
can be considered to perform better.
A confounding variable with a positive coefficient enhances the TPR 
after controlling for FPR, and is interpreted as 
a variable that positively predicts performance.

Panel C in Table \ref{tab:summary_statistics} 
reports the results from this regression. The regression coefficients have signs that 
are consistent with a priori reasoning, but they are also mostly insignificant except for the coefficients on the 
indicator of whether the doctor practices only in a county hospital. For a given FPR, 
a higher TPR 
is positively associated with higher
county level per capita GDP, a larger number of patients seen by the doctor and
the indicator that the doctor practices only in a county hospital, 
but is negatively associated with the  indicator that the doctor practices only in a township clinic.
The statistical insignificance
of the coefficients on the logarithm of the number of treated patients in Panel C of Table 
\ref{tab:summary_statistics} 
and their significance in Panel B of Table \ref{tab:summary_statistics} 
provide 
further evidence that the replacement indicator is largely driven by the difference in 
the sample size.  A larger sample size makes it easier to reject 
the null hypothesis that the machine algorithm does not outperform the doctor.

\section{Synthetic Data Analysis}
\label{synthetic data analysis}

\subsection{An example of human-algorithm complementarity}

In our empirical dataset, only a small fraction of doctors lie above the machine ROC curve. 
Combining 
the machine algorithm and the doctor decisions results in aggregate FPR/TPR pairs that are still below the machine ROC curve. 
In addition, Jensen's inequality (elaborated in section \ref{Discussion: incentives and costs}) 
likely 
contributes to the combined aggregate FPR/TPR pair lying below the machine ROC curve. Even if we replace every doctor by the machine algorithm, 
unless the same point on the machine ROC curve is used to replace all doctors, the combined aggregate FPR/TPR pair is
necessarily below the machine ROC curve.
The end points of the replacement paths in Figure \ref{fg:ai-doctors-bayesian-score-300} correspond to replacing all doctors by the machine algorithm. 
The algorithms that we implemented use different points on the machine ROC curve to replace doctors. These points also differ
across different doctors.
In order for the combined aggregate FPR/TPR pair to be above the machine ROC curve, it is necessary that a sizable portion of the original doctor diagnoses lie above the machine ROC curve.

While not the case in the NFPC dataset,
other datasets might exist where a significant portion of human decision makers correctly use private information not available to the machine algorithm. Figure \ref{fg:example_bayesian_good_combination} reports the results from a synthetic data analysis where the combined aggregate FPR/TPR pair between doctors and the machine algorithm lies above the machine ROC curve.

\begin{figure}
  \centering
  \includegraphics[width=0.8\columnwidth]{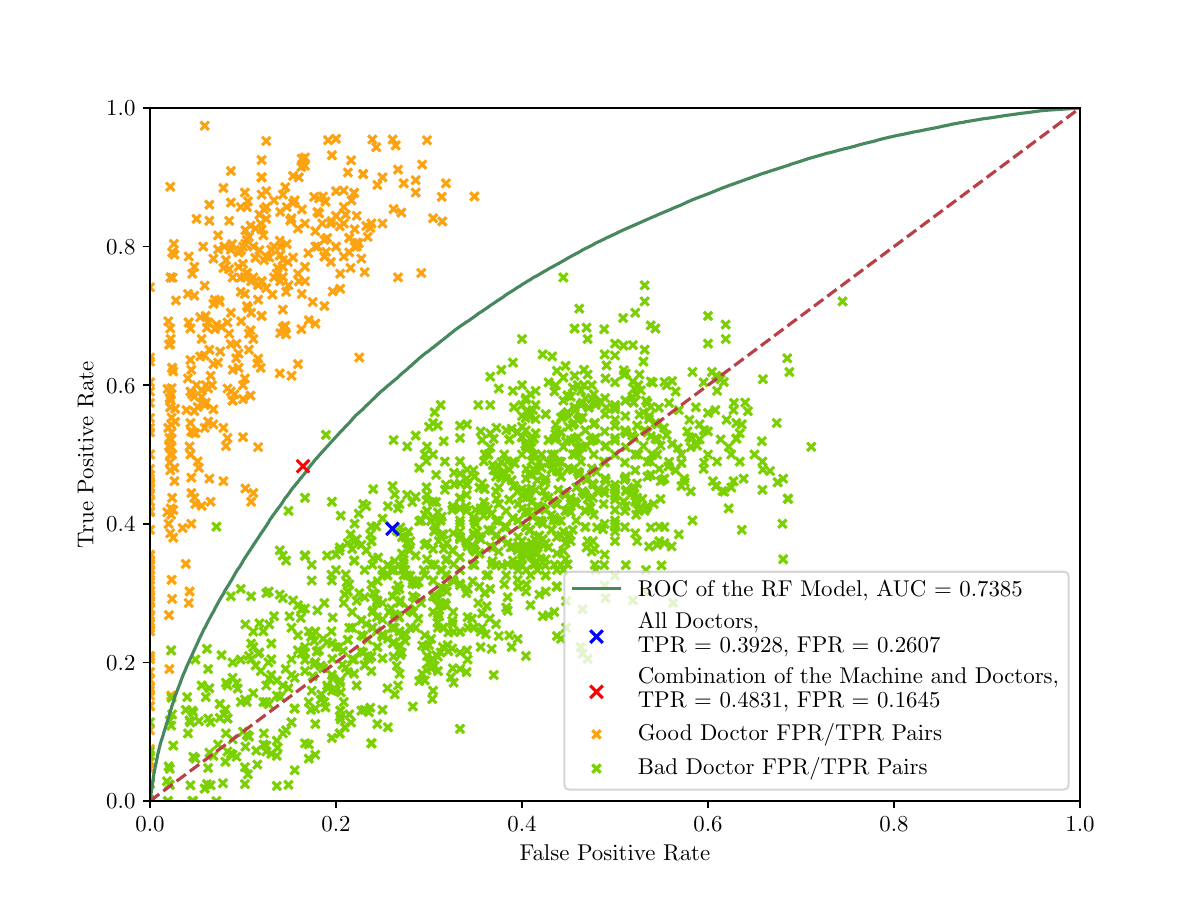}
  \caption{Synthetic data where combined FPR/TPR is above the machine ROC curve}
  \fnote{
  \textit{Notes}: In the figure, we simulate 750 doctors with better information than the machine algorithm and full information processing capabilities. 
  These doctors are represented as the orange FPR/TPR points. 
  Meanwhile, we simulate 1250 doctors with the same information but use misspecified prediction model in their decisions. 
  These doctors are plotted as the green FPR/TPR points. 
  Then we apply our Bayesian test approach to identify and replace the less capable doctors in the simulation. 
  As shown in the figure, the doctor-algorithm combined aggregate FPR/TPR point lies above the machine ROC curve, an example of complementarity. 
  }
  \label{fg:example_bayesian_good_combination}
\end{figure}

The data generating process of 
the simulation in
Figure \ref{fg:example_bayesian_good_combination} is as follows. The input features are three dimensional: $x_{i,j,1}$, $x_{i,j,2}$ and $u_{i,j}$. On the one hand, the first two features, denoted $x_{i,j,1}$ and $x_{i,j,2}$, are assumed to be observable both by the doctors and the machine algorithm. On the other hand, the last feature, denoted $u_{i,j}$, is only observable to the doctors and is not used by the machine algorithm. The three features are generated from independent normal distributions with pre-specified means and variances: $X_{i,j,1}\sim N\(0,1.95\)$, $X_{i,j,2} \sim N\(0,0.25\)$ and $U_{i,j} \sim N\(0,2\)$.
The ground truth variable $Y_{i,j} \in\{0,1\}$ is generated by 
\bs
Y_{i,j} = \mathds{1}\(
p\(X_{i,j,1}, X_{i,j,2}, U_{i,j}\) > \delta_{i,j}
\)\quad\text{where}\quad p\(X_{i,j,1}, X_{i,j,2}, U_{i,j}\) = \frac{e^{X_{i,j,1}+X_{i,j,2} + U_{i,j}}}{1+e^{X_{i,j,1}+X_{i,j,2}+U_{i,j}}}
\end{split}\end{align}
and where $\delta_{i,j}$ 
is uniformly distributed between $0$ and $1$. Using this specification of the data generating process, we simulated 600,000 patient cases and allocate them evenly to 2000 doctors. Each doctor is assigned 300 simulated patient cases.

Doctors are partitioned into two groups according to their information processing capacities. The first group, consisting of 750 {\it more capable} doctors, use the correct full information propensity score $p\(x_{i,j,1}, x_{i,j,2}, u_{i,j}\)$ with different incentives to diagnose patient cases. Their decision rule is given by, using $C_j$ 
that is uniformly distributed between $0.4$ and $1$ across doctors,
\bs
\hat Y_{i,j} = \mathds{1}\(p\(X_{i,j,1}, X_{i,j,2}, U_{i,j}\) > C_j\).
\end{split}\end{align}
The second group of doctors, consisting of 1250 {\it less capable} doctors, use a misspecified 
propensity score to diagnose patients. 
Their decision rule is
\bs
\hat Y_{i,j} = \mathds{1}\(q\(X_{i,j,1}, X_{i,j,2}, U_{i,j}\) > C_j\) \quad\text{where}\quad
q\(X_{i,j,1}, X_{i,j,2}, U_{i,j}\) = \frac{e^{-X_{i,j,1}+X_{i,j,2} + U_{i,j}}}{1+e^{-X_{i,j,1}+X_{i,j,2} + U_{i,j}}}
\end{split}\end{align}
is a misspecified 
propensity score function. 

We divide the 600,000 simulated observations 
into a training subset consisting of 240,000 observations, a validation subset consisting of 180,000 observations and 
a test subset consisting of the remaining 180,000 observations. We estimate a random forest model using only the training subset 
and the first two features $x_{i,j,1}$ and $x_{i,j,2}$, and use the baseline Bayesian test 
in equation \eqref{bayesian basic numerical} to combine doctors with the machine algorithm. The red point in Figure \ref{fg:example_bayesian_good_combination} represents the combination of doctors and the machine algorithm. It lies above the ROC curve of the random forest model, and also strictly dominates the blue point of the aggregate FPR/TPR pair of all doctors.
In summary, the synthetic data analysis represented by Figure \ref{fg:example_bayesian_good_combination} illustrates 
the scientific principle of complementarity in human algorithm collaborations. See for example \cite{bansal2021does}.\footnote{
We are grateful to an anonymous referee for pointing us to the science literature on human-algorithm collaboration complementary.}

\subsection{Aggregating information by predicted doctor models}

Our dataset contains doctor diagnoses information in addition to the eventual pregnancy outcomes. 
In this section we discuss whether estimating a model that predicts the label of doctors' diagnoses
can assist in improving the FPR/TPR tradeoff of medical decisions. 
A model of predicted doctors can be intuitively interpreted as doctors' collective wisdom. 
The experience of senior doctors can provide guidance to junior doctors with limited clinical exposures.

In the NFPC dataset, a model of predicted doctors does not appear to improve on the aggregate FPR/TPR pair. 
However, in several of the following synthetic data examples, a predicted doctor model 
improves upon 
the diagnoses of individual doctors. 
Each doctors' diagnosis may encode information heterogeneity, or incentive heterogeneity, or both. 
Consider a model of incentive heterogeneity, where doctors' decision rules are given by $\hat y_{i} = \mathds{1}\(p\(x_{i}\) > u_{i}\)$ such that $p\(x_{i}\)$ is the correctly specified propensity score, and $u_i$ is 
independently distribution with a strictly monotonic CDF $G\(\cdot\)$. 
On the one hand, the aggregate doctor FPR/TPR lies below the optimal ROC curve. On the other hand,
the predicted doctor model, given by $\mathbb{E}\[\mathds{1}\(p\(X_i\) > U_i\) \vert X_i\] = \mathbb{P}\(U_i \leq p\(X_i\)\) = 
G\(p\(X_i\)\)$, is a monotonic transformation of the 
true propensity score function and
recovers the optimal ROC curve. 

We next focus on several information model without incentive heterogeneity. 
Consider a baseline example where doctors have access to private information $u_i$ in addition to the publicly observed features $x_i$ and are aware of the correct full information propensity score model $p\(x_i, u_i\)$. 
If we could directly observe doctors' probabilistic assessment 
$p\(x_i, u_i\)$ and use it as the label to train a machine learning model, 
then the resulting 
model will necessarily coincide with the ground-truth machine learned propensity score model:
\bs
\mathbb{E}\[ p\(X_i, U_i\) \vert X_i\] = \mathbb{E}\[\mathbb{E}\[Y_i \vert X_i, U_i\] \vert X_i\] = \mathbb{E}\[Y_i \vert X_i\] = p\(X_i\).
\end{split}\end{align}
However, in reality we rarely solicit doctors' probabilistic assessment of the likelihood of an illness. Instead, we only observe the binary diagnosis 
by the doctors, which corresponds to observing $\hat y_i=\mathds{1}\(p\(x_i, u_i\) > c_0\)$ for some threshold value $c_0$. When we use $\hat y_i$ as the label to train a machine learning model, the propensity score of the resulting predicted doctor model is
\bs
q\(X_i\) = \mathbb{E}\[\mathds{1}\(p\(X_i, U_i\) > c_0\) \vert X_i\].
\end{split}\end{align}
In general, $q\(X_i\)$ is different from $p\(X_i\)$. The ROC curve implied by $q\(X_i\)$ coincides with $p\(X_i\)$ only when $q\(X_i\)$ is an increasing transformation of $p\(X_i\)$: $q\(X_i\)=\Lambda\(p\(X_i\)\)$, where $\Lambda\(\cdot\)$ is a strictly increasing function. 
There is no guarantee that this is necessarily the case. 

Consider now three scenarios in which doctors employ misspecified models to process private information.
In the first scenario, doctors process the full information $\(X_i, U_i\)$ using an incorrect propensity score model $q\(X_i, U_i\)$, but the ROC curve implied by the predicted doctor model $q\(X_i\)$ 
still coincides with the ground-truth machine ROC curve. 
Figure \ref{fg:doctor private information incorrect model 1} illustrates this scenario using the following data generating process. The ground-truth propensity score model is 
$p\(X_{i}\)=\exp\(X_{i}\) / \(1+\exp\(X_{i}\)\)$. The doctors use a misspecified propensity model of $q\(X_{i},U_{i}\)=\exp\(X_{i}+U_{i}\) / \(1+\exp\(X_{i}+U_{i}\)\)$. 
Both $X_i$ and $U_i$ are uniformly and independently distributed: $X_{i} \sim U\(-1, 1\)$ and $U_{i} \sim U\(-2, 2\)$. 
The ``$X_i$ only'' ROC curve is generated by $p\(X_{i}\)$; the ``$X_i$ and $U_i$'' ROC curve is generated by $q\(X_{i},U_{i}\)$. The ``Predicted Doctors on $X$'' ROC curve is generated by 
$q\(X_i\) \equiv \mathbb{E} \[\mathds{1}\(q\(X_i, U_i\) > c_0\) \vert X_{i}\].$

\begin{figure}
  \centering 
  \subfigure[Doctor using incorrect model based on private information: scenario 1]
  {\includegraphics[width=0.48\columnwidth]{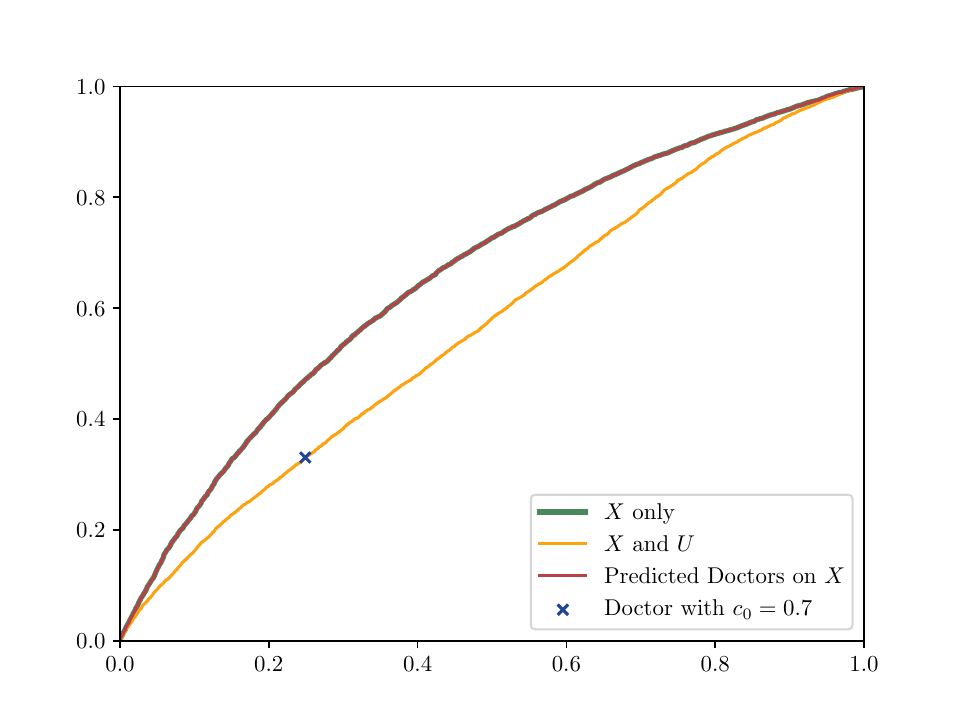}
  \label{fg:doctor private information incorrect model 1}}
  \subfigure[Doctor using incorrect model based on private information: scenario 2] {\includegraphics[width=0.48\columnwidth]{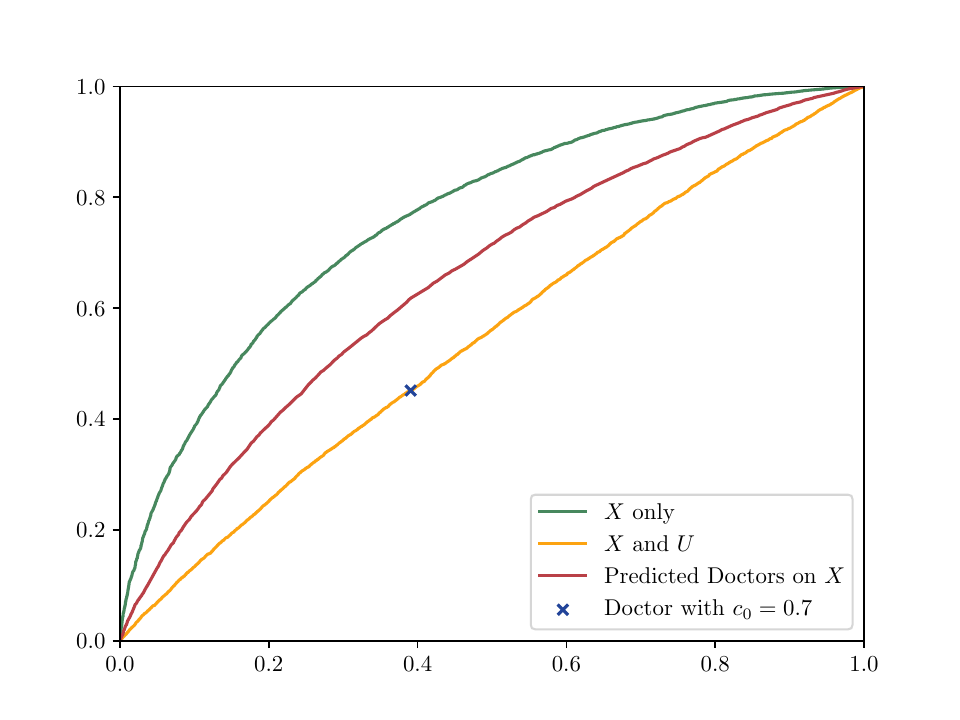}
  \label{fg:doctor private information incorrect model 2}}
  \subfigure[Doctor using incorrect model based on private information: scenario 3] {\includegraphics[width=0.48\columnwidth
  ]{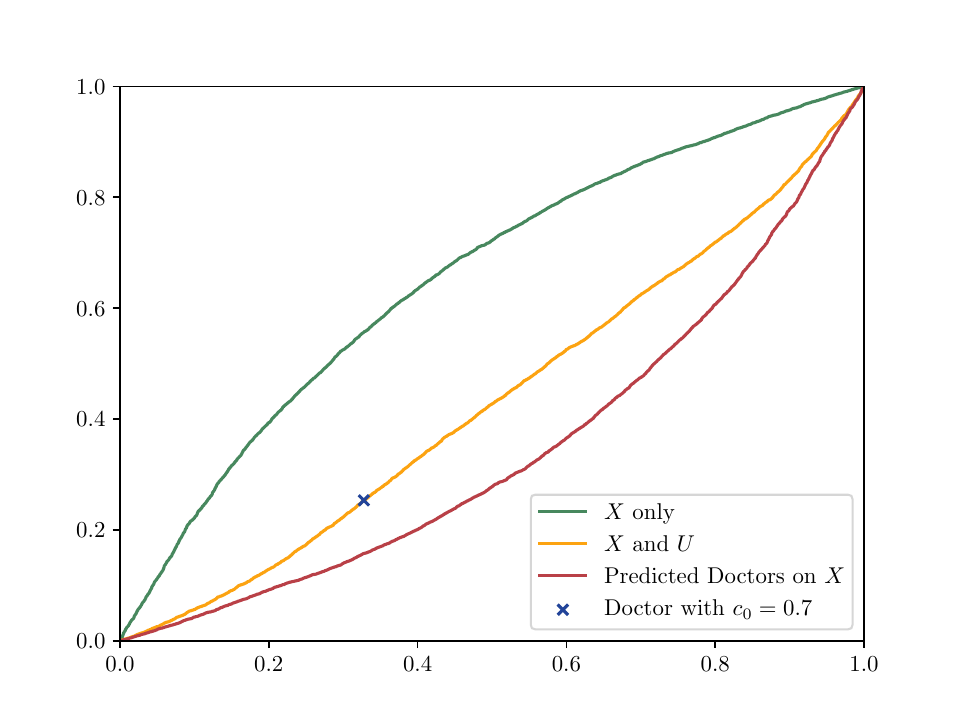}
  \label{fg:doctor private information incorrect model 3}}
  \caption{Three cases of learning from doctors}
  \fnote{\textit{Notes}: 
  Instead of predicting 
ground-truth outcome, this figure illustrates three synthetic data examples where machine algorithms 
are applied to predict doctor's diagnoses. 
  Panel \ref{fg:doctor private information incorrect model 1} shows a scenario where the machine 
algorithm improves upon 
the individual doctor diagnoses and generates the same ROC curve 
as a correctly specified model. 
  Panel \ref{fg:doctor private information incorrect model 2} shows a scenario where aggregating information from doctor diagnoses is beneficial but not sufficient to recover the optimal ROC curve. 
  The last case is shown in Panel \ref{fg:doctor private information incorrect model 3} where aggregating information from doctor diagnoses results in a ROC curve inferior to that of
the misspecified prediction model. 
  }
  \label{aggregating information}
\end{figure}

For any given 
cutoff threshold $c_0$, 
the predicted doctor model $q\(X_i\)$ 
is an increasing function of $X_i$, as is $p\(X_i\)$.
In this particular case, both $p\(X_i\)$ and $q\(X_i\)$ correspond to the same ROC curve. 
This curve lies above the ROC curve generated by $q\(X_i, U_i\)$ and the aggregate doctor FPR/TPR pair ($c_0=0.7$).
Collective wisdom improves upon the misspecified model employed by individual doctors and results in higher diagnosing quality as represented by the ROC curves. 

The second scenario, pictured in Figure \ref{fg:doctor private information incorrect model 2}, is generated by a two dimensional observable feature model. The ground-truth propensity score model 
is 
\bs
p\(X_{i1},X_{i2}\)=\exp\(X_{i1}+X_{i2}\)/\(1+\exp\(X_{i1}+X_{i2}\)\). 
\end{split}\end{align}
The doctors' (misspecified) information model is 
\bs
q\(X_{i1},X_{i2},U_{i}\)=\exp\(X_{i1}-X_{i2}+U_{i}\)/\(1+\exp\(X_{i1}-X_{i2}+U_{i}\)\).
\end{split}\end{align}
$X_{i1}, X_{i2}$ and $U_i$ are independently distributed: 
$X_{i1} \sim N\(0, 1\), X_{i2} \sim N\(0, 0.5\), U_{i} \sim N\(0,4\)$.

In this scenario, doctors also process the full information $\(X_{i1}, X_{i2}, U_i\)$ using an incorrect propensity score model. 
While the ROC curve corresponding to the predicted doctor model 
$q\(X_{i1}, X_{i2}\)\equiv \mathbb{E} \[\mathds{1}\(q\(X_{i1}, X_{i2}, U_i\) > c_0\) \vert X_{i1}, X_{i2}\]$ lies below the optimal ROC curve corresponding to the correct propensity score
model $p\(X_{i1},X_{i2}\)$, it still lies above the ROC curve corresponding 
to the individual doctor misspecified information model $q\(X_{i1},X_{i2},U_{i}\)$ and the aggregate doctor FPR/TPR pair. 
Collective wisdom in this case leads to limited improvement over the misspecified information model.

The third scenario is pictured in Figure \ref{fg:doctor private information incorrect model 3}. 
The ground-truth propensity score model 
is $p\(X_{i}\)=\exp\(X_{i}\)/\(1+\exp\(X_{i}\)\)$ but 
the doctors' incorrect information model is 
\bs
q\(X_{i},U_{i}\)=\exp\(-X_{i}+U_{i}\)/\(1+\exp\(-X_{i}+U_{i}\)\), 
\end{split}\end{align}
where $X_{i}$ and $U_{i}$ are distributed as in the first scenario depicted in 
Figure \ref{fg:doctor private information incorrect model 1}. 
In this model, doctors' information processing capacity is very limited. 
The predicted doctor model 
$q\(X_i\) = \mathbb{E}\[\mathds{1}\(p\(X_i, U_i\) > c_0\) \vert X_i\]$
exacerbates the misspecification of the doctors' private information propensity score model, and 
leads to a ROC curve that lies strictly below the ROC curve corresponding 
to the individual doctor misspecified propensity score model $q\(X_{i},U_{i}\)$.

The analysis in Figure \ref{aggregating information} shows that depending on the extent of misspecification of the doctors' information model, the predicted doctor model that aggregates the information among doctors using the subset of observable features available to the machine learning algorithm might either reduce or exacerbate the misspecification in the propensity score model. The resulting ROC curve of the predicted doctor model  might lie above or below
the ROC curve of the doctors' information model. 

\subsection{Replacement strategies initiated by doctors} 

For a given decision maker, recognizing
that machine
algorithms are more likely to be applicable in some candidate cases than 
in others is important for a replacement model to achieve human and machine
complementarity.
Formalizing such a general empirical replacement model 
is difficult because of 
the unknown diagnosing model employed by doctors in the data and  because
of our lack of knowledge of doctors' preferences for 
the machine algorithm. 
To illustrate these difficulties,
we experiment with a set of simulations in which doctors decide when 
to employ machine
learning algorithm. 

In the simulation setup, each patient is associated with three  independently 
generated normally distributed features $X_1, X_2, U$, where
$X_1 \sim N(0, 0.5)$, $X_2 \sim N(0, 1.5)$ and $U \sim N(0, 1)$. 
The probability of each patient being ill is a  logit 
function of the features: 
\begin{align}\begin{split}\nonumber
p\(X_1, X_2, U\) = e^{X_1 + X_2+U} / \(1 + e^{X_1 + X_2+U}\).
\end{split}\end{align}
The labels are generated according to the probability model 
$p\(X_1, X_2, U\)$ 
for 30,000 patient cases. 
The diagnosis of doctors is modeled   as a mixture between two decision rules. 
The first rule
employs the correct 
probability model $p\(X_1, X_2, U\)$. The
second rule replaces the    probability model $p\(x\)$ by an uninformative 
uniformly distributed random variable $r \sim U\(0,1\)$.
The second rule applies to a situation where the occasional lack of 
medical knowledge by doctors results in randomized diagnosis. 
Specifically, we simulate doctor diagnosis $\hat Y$ by 
\begin{align}\begin{split}\nonumber
\hat Y = 
\begin{cases}
\mathds{1}\(
e^{X_1 + X_2+U} / \(1 + e^{X_1 + X_2+U}\) > c_d
\), & \text{if}\ X_1 \geq 0,\\
r > c_d,  & \text{otherwise}.
\end{cases}
\end{split}\end{align}
In the above, $c_d = 0.5$ is the decision threshold for doctor diagnosis.
The simulated dataset is divided into
training, validation and test subsets by the ratio of 4:3:3. 
They are used to estimate model parameters, compute the ROC curve 
and calculate FPR/TPR pairs respectively.
We 
first train
a random forest 
model using features $X_1$ and $X_2$, where $U$ is interpreted as unobserved heterogeneity 
that is known to the doctors but not to the machine algorithm. 

\begin{figure}
  \centering
  \subfigure[Doctors well informed about themselves]{\includegraphics[width=0.45\columnwidth]{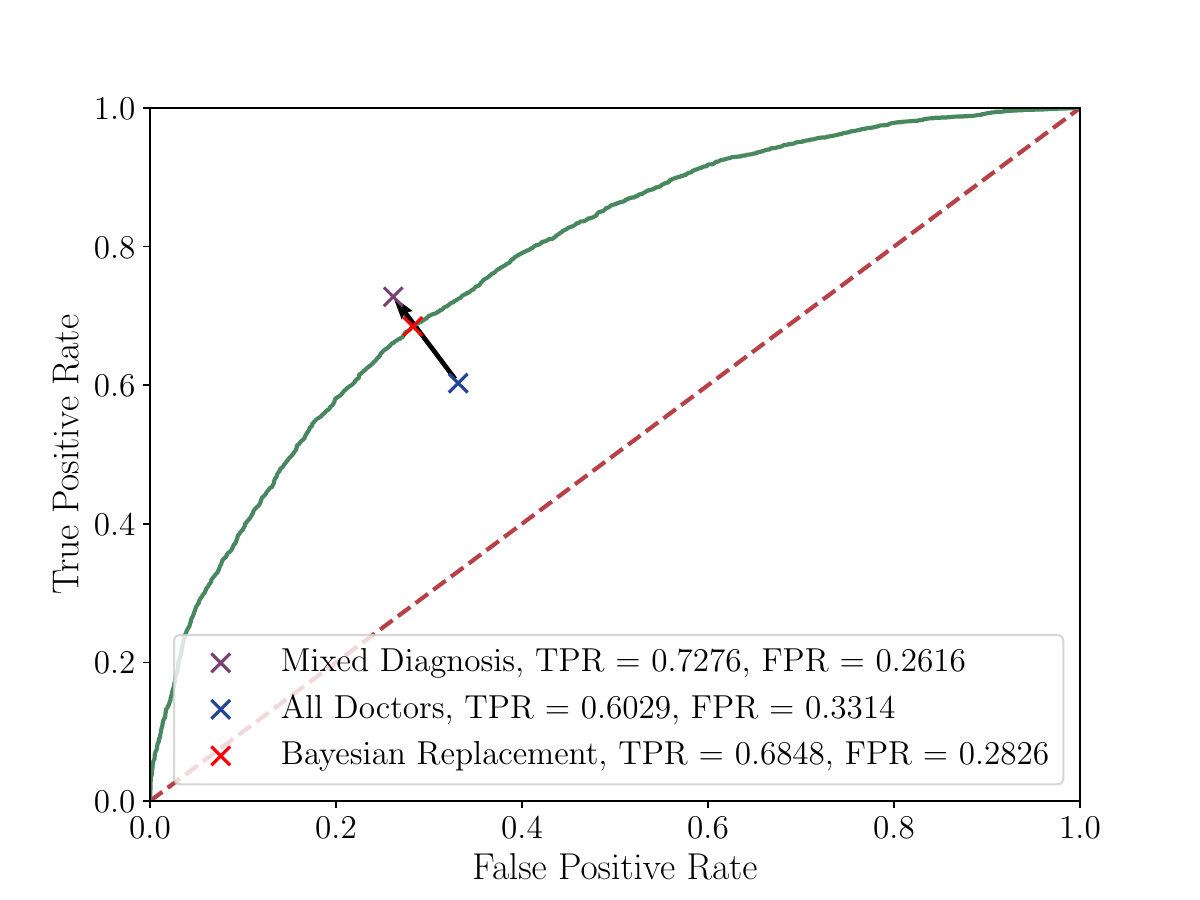}\label{fg:append5 fine-grained simulation 1}}
  \subfigure[Doctors informed about themselves]{\includegraphics[width=0.45\columnwidth]{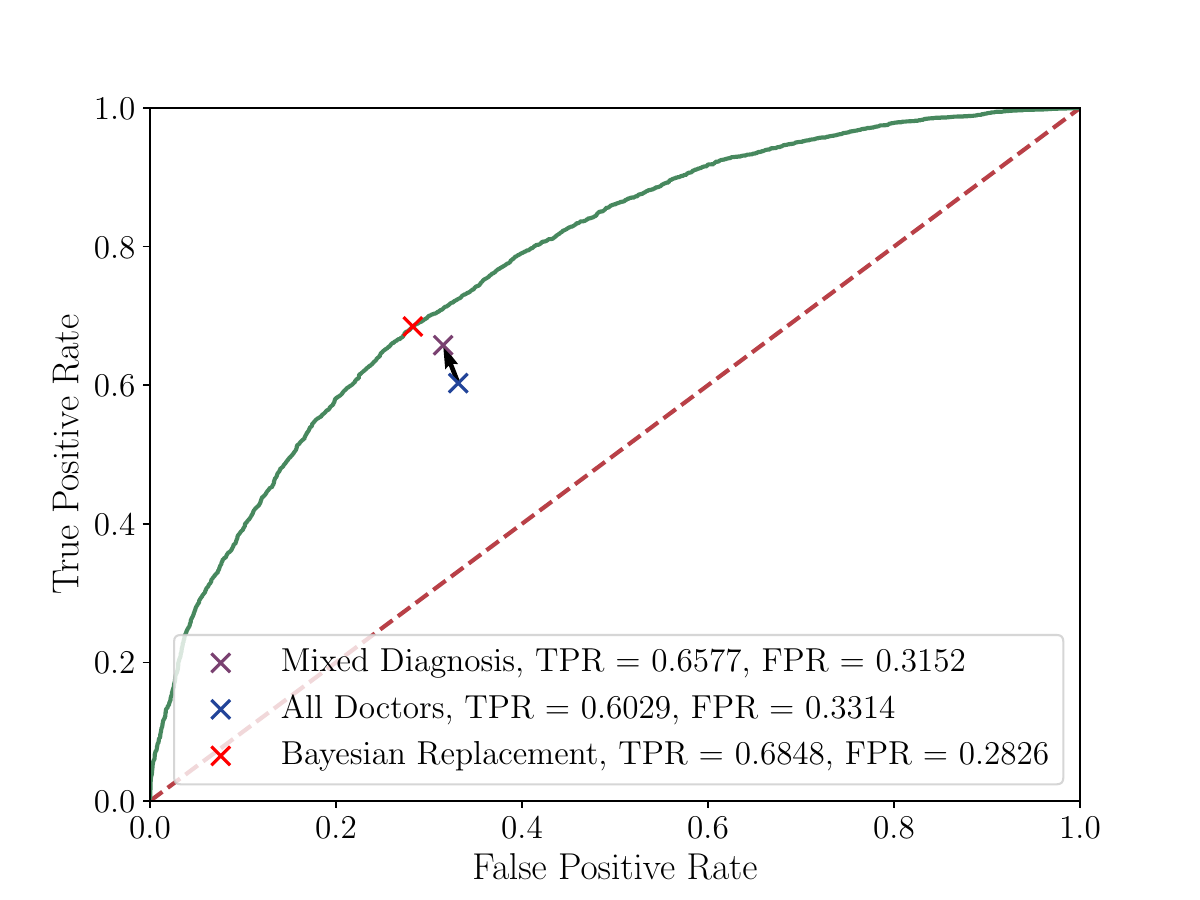}\label{fg:append5 fine-grained simulation 2}}
  \subfigure[Doctors not informed about themselves]{\includegraphics[width=0.45\columnwidth]{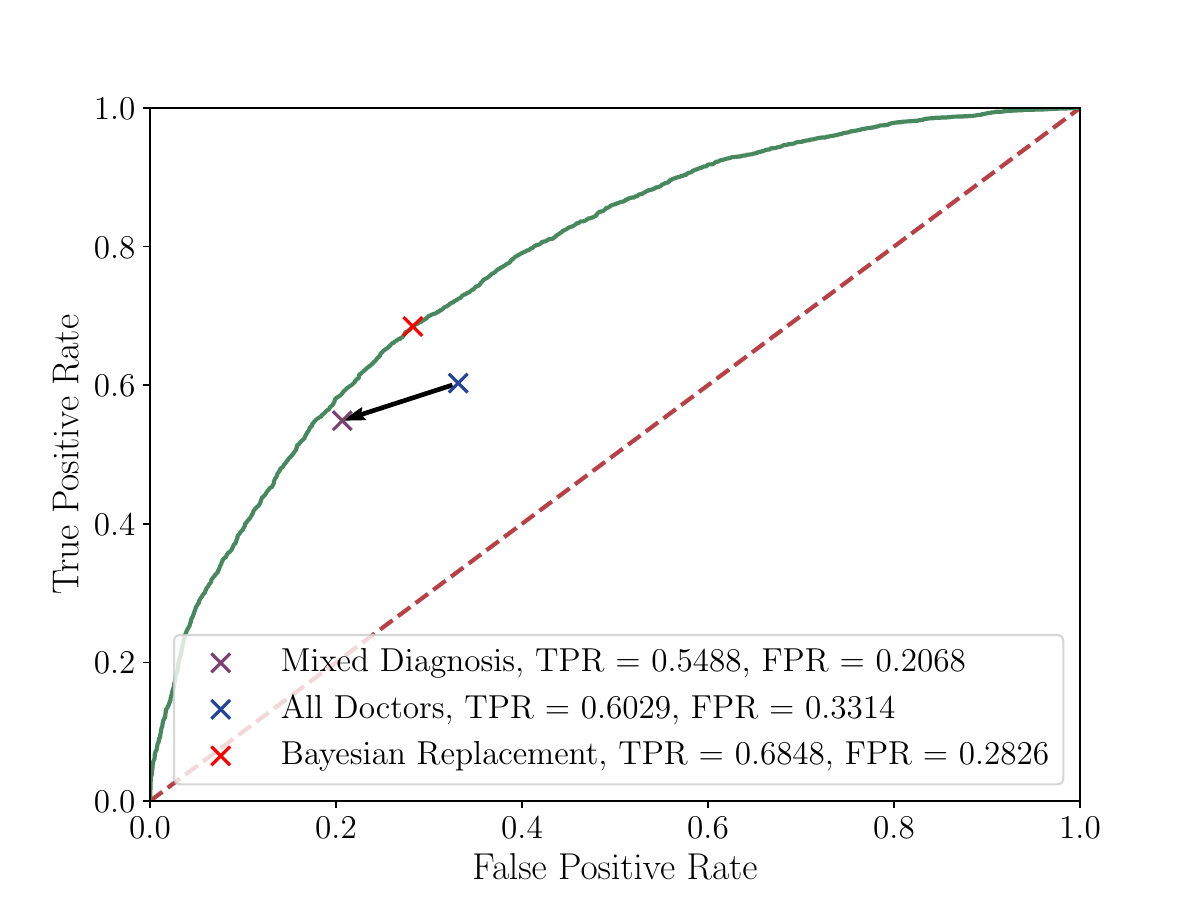}\label{fg:append5 fine-grained simulation 3}}
  \subfigure[Doctors misinformed about themselves]{\includegraphics[width=0.45\columnwidth]{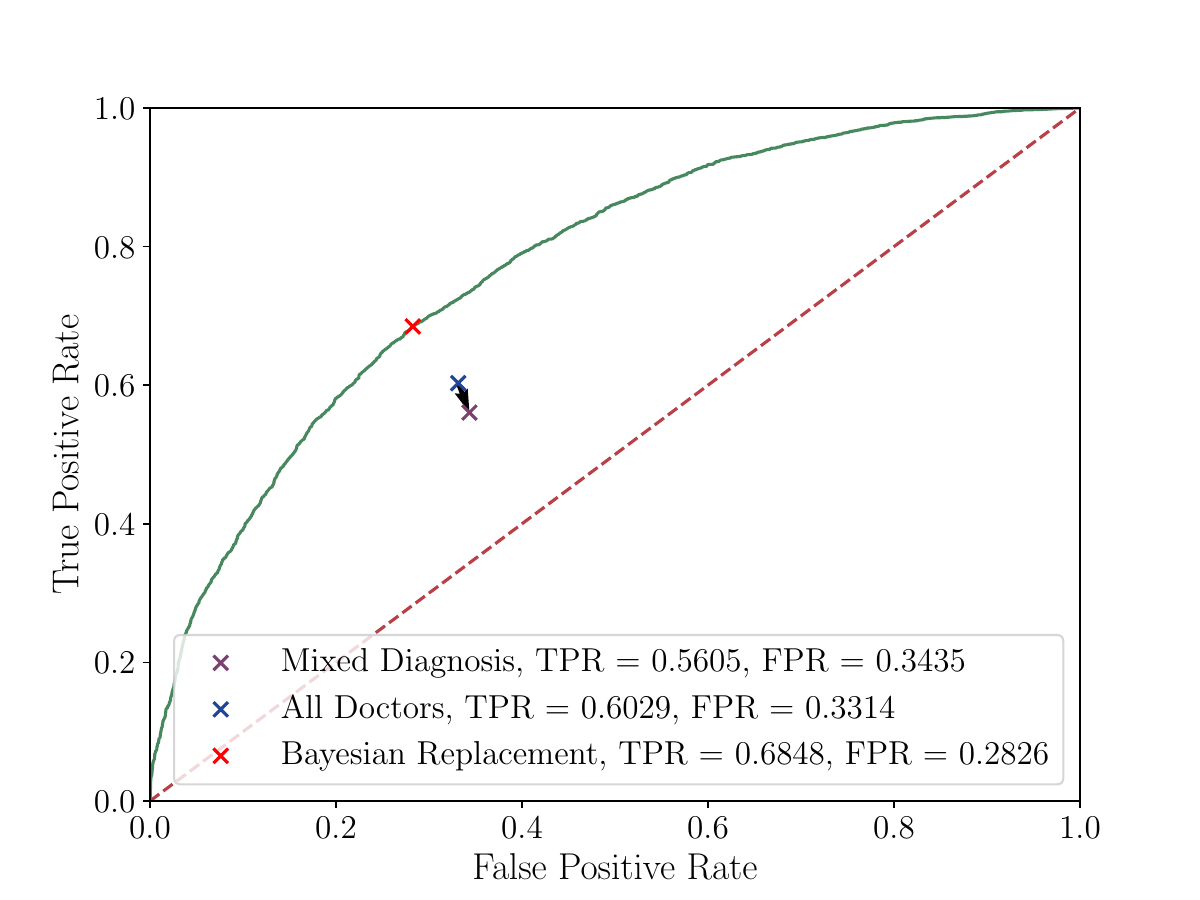}\label{fg:append5 fine-grained simulation 4}}
  \caption{Replacement result under different replacement rules}
  \label{fg:append5 fine-grained simulation}
  \fnote{
  \textit{Notes}:
  A flexible replacing model allowing doctors to choose whether to apply the algorithm's decision has the potential of achieving human-machine complementarity. 
  While, in this figure, we show that the performance of this strategy highly depends on doctors' choices. 
  If the doctors do not have a clear recognition of their information weakness, this strategy may not perform well. 
  }
\end{figure}

We simulate four replacement rules. They are illustrated in Figure \ref{fg:append5 fine-grained simulation}. In the first rule, shown in 
 Figure~\ref{fg:append5 fine-grained simulation 1}, doctors know when their 
medical knowledge is inadequate (e.g. when $X_1 < 0$) and choose to defer
to the machine algorithm in such a situation. 
This method achieves 
complementarity between the human doctor and the machine algorithm. 
It produces a FPR/TPR pair that is better than the Bayesian approach 
in wich  
either all or none of the patients for each doctor are replaced.
In the second rule, shown in 
 Figure~\ref{fg:append5 fine-grained simulation 2}, doctors have partial but 
incomplete knowledge of when their medical knowledge is inadequate. 
Doctors defer to the machine algorithm when $X_1 < 0.5$. 
The second rule improves over the doctors, but 
does not outperform the Bayesian approach.
Figure~\ref{fg:append5 fine-grained simulation 3} 
use a replacement rule of $X_2 \le -0.5$. In this case, the doctor does not know when they lack medical knowledge and the replacement reduces both FPR and TPR.
In Figure~\ref{fg:append5 fine-grained simulation 4}, 
doctors mistakenly defer to the machine algorithm when they do have medical knowledge (e.g. when $X_1 \geq  0$), but make their own decision when they 
are uninformed (e.g. when $X_1 < 0$).  The resulting FPR/TPR pair is 
strictly dominated by the original doctor diagnosis.

\section{Conclusion}\label{conclusion}

In this paper, we propose statistical tests for replacing human decision makers
with algorithms.
Decision rules that robustly dominate
diagnoses made by humans
can be determined based on the machine ROC curves.
We propose both a heuristic frequentist approach and an alternative Bayesian posterior loss function approaches. 
By replacing a subset of decision makers
with algorithms,
we can achieve higher quality
decision making outcome.
We experiment with the two approaches 
using a dataset of pre-pregnancy checkups, and find that 
the replacement of
a subset of doctors using the Bayesian approach significantly
improves the overall performance of
risky pregnancy detection.
Furthermore, our results indicate that
doctors who practice only at township clinics are more likely to be replaced, while
doctors who practice only at county hospitals are less likely to be replaced. 

Our analysis shows that both stable preference and decision-making quality are 
important for comparing machine algorithm with human decision makers. 
While allowing the juniors to make potentially suboptimal decisions for their patients and 
learn from the results has the potential of increasing the pool of high quality seniors, 
this is a very delicate issue involving complex welfare tradeoff between current and future generations of patients.
In panel datasets with rich dynamic 
information, human decision makers can engage in information updating 
and peer learning. 
Our current dataset is only cross-sectional. 
The eventual pregnancy outcomes are typically
not available to doctors when they diagnose consecutive patient cases.
Juniors might also benefit more generally from 
offline learning without having to make substantial and high stake decisions.



\section{Acknowledgments}

We thank the editors and two anonymous referees, Brendan Beare, Bo Honore, Xiaohong Chen, Ron Gallant, Bruce Hansen, Peter Hansen, Bentley MacLeod, Jack Porter, Adam Rosen, Andres Santos, George Tauchen, Valentin Verdier, and participants at various seminars and conferences for insightful comments. 
We also thank Yichuan Zhang, Xin Lin and Zhonghao Huang for excellent research assistance. 
This study was approved by the National Health Commission. Informed consents were obtained from all the NFPC participants.
We acknowledge funding support from the National Science Foundation (SES 1658950 to
Han Hong), the National Science Fund for Distinguished Young Scholars of China (71325007 to Ke Tang), the State’s Key Project of Research and Development Plan (2016YFC1000307 to Jingyuan Wang) and the National Natural Science Foundation of China (61572059 to Jingyuan Wang).

\FloatBarrier
\phantomsection
\addcontentsline{toc}{section}{References}
{
\setstretch{0.875}
\bibliography{ms_maintex225_withname}
\bibliographystyle{aer}
}

\clearpage 
\processdelayedfloats
\clearpage

\appendix
\thispagestyle{empty}
\begin{center}
\section{\Large \bf Appendix for manuscript \sc \\ Statistical tests for replacing human decision makers with algorithms}
\end{center}

\newpage



\setcounter{section}{1}


\subsection{Additional Lemmas and Proofs}


\begin{lemma}\label{better information higher roc}
Let $p\(X, U\) = \mathbb{P}\(Y=1 \vert X, U\)$ and $p\(X\) = \mathbb{P}\(Y=1 \vert X\)$ be two correctly specified propensity score functions. The ROC curve generated 
$p\(X, U\)$ lies above the ROC curve generated by $p\(X\)$. 
Specifically, assuming that $\mathbb{P}\(p\(X, U\) = c\) = 0$ for all $c$, then if the decision rules $\mathds{1}\(p\(x\) > c\)$ and 
$\mathds{1}\(p\(x, u\) > c'\)$ both achieve FPR level $\alpha$ for $0 \leq c, c' \leq 1$ and 
\bs
\mathbb{P}\left\{\mathds{1}\(p\(X\) > c\) \neq \mathds{1}\(p\(X, U\) > c'\)\right\} > 0, 
\end{split}\end{align} 
then $\mathds{1}\(p\(x, u\) > c'\)$ achieves higher TPR than $\mathds{1}\(p\(x\) > c\)$ does. 
\end{lemma}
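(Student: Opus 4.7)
The plan is to set up the Neyman-Pearson Lagrangian identity directly rather than invoke the lemma by name. Let $\phi^*(X,U) = \mathds{1}(p(X,U) > c')$ and $\psi(X,U) = \mathds{1}(p(X) > c)$, both viewed as $(X,U)$-measurable $\{0,1\}$-valued rules. The crucial pointwise inequality
\begin{align*}
\bigl(\phi^*(X,U) - \psi(X,U)\bigr)\,\bigl(p(X,U) - c'\bigr) \geq 0 \quad \text{a.s.}
\end{align*}
holds because on $\{p(X,U) > c'\}$ one has $\phi^*=1 \geq \psi$ and $p(X,U)-c'>0$, while on $\{p(X,U) < c'\}$ one has $\phi^*=0 \leq \psi$ and $p(X,U)-c'<0$. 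The boundary event $\{p(X,U) = c'\}$ has probability zero by the non-atomicity assumption, and the inequality is \emph{strict} on the event $\{\phi^* \neq \psi\}$.

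Next I would take expectations, use the tower property $\mathbb{E}[\psi(X,U)\, Y] = \mathbb{E}[\psi(X,U)\, p(X,U)]$ which follows from correct specification $p(X,U)=\mathbb{P}(Y=1\vert X,U)$, and introduce the TPR/FPR parameterization $p\beta_\psi = \mathbb{E}[\psi Y]$, $(1-p)\alpha_\psi = \mathbb{E}[\psi(1-Y)]$, with analogous identities for $\phi^*$. Expanding the nonnegative expectation and collecting terms gives
\begin{align*}
(1-c')\, p\, (\beta^* - \beta_\psi) \;\geq\; c'\, (1-p)\, (\alpha^* - \alpha_\psi),
\end{align*}
with strict inequality whenever $\mathbb{P}(\phi^* \neq \psi) > 0$. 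Imposing the equal-FPR hypothesis $\alpha^* = \alpha_\psi$ and noting that $c' < 1$ (otherwise $\phi^*$ is almost surely zero, contradicting a nontrivial FPR level), the right-hand side vanishes and the inequality collapses to $\beta^* > \beta_\psi$, the desired strict TPR improvement.

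The weaker first statement, that the entire ROC of $p(X,U)$ lies above that of $p(X)$, follows from the same identity without imposing the strictness hypothesis: at any FPR level achievable by $\mathds{1}(p(X)>c)$, the rule $\mathds{1}(p(X,U)>c')$ attaining the same FPR achieves weakly larger TPR. The main obstacle I anticipate is primarily bookkeeping: carefully converting the Lagrangian-style expectation inequality into the $(\alpha,\beta)$ parameterization while tracking the scaling by $p$ and $1-p$. The strictness is the delicate piece and hinges crucially on $\mathbb{P}(p(X,U)=c')=0$, which eliminates ties at the threshold; without this, randomized tie-breaking would be required and one would only obtain weak inequality.
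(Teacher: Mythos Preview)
Your proposal is correct and follows essentially the same route as the paper: both arguments hinge on the pointwise Neyman--Pearson inequality $\bigl(\phi^*(X,U)-\psi(X,U)\bigr)\bigl(p(X,U)-c'\bigr)\geq 0$, integrate it, and use the equal-FPR constraint together with the non-atomicity assumption to extract strict TPR improvement. The only minor difference is that for the weak (non-strict) dominance of the ROC curves, the paper invokes the constrained-maximization characterization of the ROC (the feasible class $\phi(x,u)$ contains $\phi(x)$, so the maximum can only increase), whereas you obtain it as a byproduct of the same Lagrangian identity without the strictness hypothesis; both are valid and the arguments are otherwise interchangeable.
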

\begin{proof}
If we denote the ROC curve generated by $p\(x\)$ as \bs
\beta_x\(\alpha\) = f_x\(\alpha\) = \frac{\mathbb{E}\[ Y \mathds{1}\(p\(X\) > c\)\]}{p} \quad\text{where $c$ satisfies}\quad \frac{\mathbb{E}\[ \(1-Y\) \mathds{1}\(p\(X\) > c\)\]}{1-p}=\alpha,
\end{split}\end{align}
then it can also be written as the solution of a constrained maximization program:
\bs
\beta_x\(\alpha\) = f_x\(\alpha\) = \max_{\phi\(\cdot\)}\frac{\mathbb{E} Y \phi\(X\)}{p}\quad\text{such that}\quad \frac{\mathbb{E}\[ \(1-Y\) \phi\(X\)\]}{1-p}=\alpha\ \text{and}\ 0\leq \phi\(x\) \leq 1.
\end{split}\end{align}
See e.g. \citeappendix{feng2022properties}.
Similarly we can denote the ROC curve generated by $p\(x,u\)$ as
\bs
&\beta_{x,u}\(\alpha\) = f_{x,u}\(\alpha\) = \max_{\phi\(\cdot,\cdot\)}\frac{\mathbb{E} Y \phi\(X,U\)}{p}\\
&\text{such that}\quad \frac{\mathbb{E}\[ \(1-Y\) \phi\(X,U\)\]}{1-p}=\alpha\ \text{and}\ 0\leq \phi\(x,u\) \leq 1.
\end{split}\end{align}
Since the class of functions $\phi\(x,u\)$ includes the class $\phi\(x,u\)\equiv \phi\(x\)$ as a subset, it follows immediately that $\beta_{x,u}\(\alpha\) \geq \beta_x\(\alpha\)$. 
For the second part of our statement, consider the sets 
\begin{align}\begin{split}\nonumber
S^{+} = \left\{\(x, u\): \mathds{1}\(p\(x, u\) > c'\) > 
\mathds{1}\(p\(x\) > c\)\right\}, \\ 
S^{-} = \left\{\(x, u\): \mathds{1}\(p\(x, u\) > c'\) < 
\mathds{1}\(p\(x\) > c\)\right\}.
\end{split}\end{align}
By assumption, $\mathbb{P}\(S^+ \cup S^-\) > 0$. 
If $\(x, u\) \in S^{+}$, then $p\(x, u\) > c'$; if $\(x, u\) \in S^{-}$, then $p\(x, u\) \leq c'$. 
By assumption, $\mathbb{P}\(p\(X, U\) = c'\) = 0$. Thus, 
\begin{align}\begin{split}\nonumber
\int\int \[\mathds{1}\(p\(x, u\) > c'\) - \mathds{1}\(p\(x\) > c\)\]\(p\(x, u\) - c'\)f_{X,U}\(x, u\) \mathrm{d}x\mathrm{d}u > 0.
\end{split}\end{align}
The difference in power (TPR multiplied by $p$) then satisfies 
\begin{align}\begin{split}\nonumber
\int\int \[\mathds{1}\(p\(x, u\) > c'\) - \mathds{1}\(p\(x\) > c\)\] & p\(x, u\)f_{X,U}\(x, u\) \mathrm{d}x\mathrm{d}u \\ 
> c' \int\int & \[\mathds{1}\(p\(x, u\) > c'\) - \mathds{1}\(p\(x\) > c\)\]f_{X,U}\(x, u\) \mathrm{d}x\mathrm{d}u \\ 
= c' \int\int & \[\mathds{1}\(p\(x, u\) > c'\) - \mathds{1}\(p\(x\) > c\)\]p\(x, u\)f_{X,U}\(x, u\) \mathrm{d}x\mathrm{d}u.
\end{split}\end{align}
The equality is due to the assumption that both decision rules achieve the same FPR:
\bs
\int\int & \[\mathds{1}\(p\(x, u\) > c'\) - \mathds{1}\(p\(x\) > c\)\]
\(1-p\(x,u\)\)
f_{X,U}\(x, u\) \mathrm{d}x\mathrm{d}u=0.
\end{split}\end{align}
We then have 
\begin{align}\begin{split}\nonumber
\int\int \[\mathds{1}\(p\(x, u\) > c'\) - \mathds{1}\(p\(x\) > c\)\] & p\(x, u\)f\(x, u\) \mathrm{d}x\mathrm{d}u > 0, 
\end{split}\end{align}
i.e., $\mathds{1}\(p\(x, u\) > c'\)$ achieves higher TPR. 
\end{proof}

\begin{lemma}\label{strictly concave roc}
Consider a model where 
the human decision maker follows the classification rule 
$\hat Y_i = \mathds{1}\(p\(X_i\) > c\(X_i, U_i\)\)$ 
for the $i$th patient case, where $p\(X_i\)$ is a correctly specified propensity score function $p\(X_i\) = \mathbb{P}\(Y_i=1\vert X_i\)$ and $U_i$ is a random variable that does not contain information about
$Y_i$ beyond those in $X_i$: 
$\mathbb{P}\(Y_i=1 \vert X_i, U_i\) = \mathbb{P}\(Y_i=1 \vert X_i\)$. Then the aggregate human decision maker FPR/TPR pair lies strictly below the ROC curve traced out by $\mathds{1}\(p\(X_i\) > c\)$ when $c$ varies between $0$ and $1$ under the following conditions:
\begin{enumerate}
\item The p-score  $p\(x\)$ is uniformly bounded away from $0$ and $1$ on $x\in \mathcal X$, the support of $X$.
\item $p\(X\)$ is continuously distributed with a bounded and strictly positive density.
\item\label{strict concavity condition 3}  There is no $c \in \[0, 1\]$ such that $\mathbb{P}\left\{\mathds{1}\(p\(X\) > c\) = \mathds{1}\(p\(X\) > c\(X, U\)\)\right\} = 1$. 
\end{enumerate}
\end{lemma}
\begin{proof}
First, we write human FPR/TPR pair as  $\theta_H=\(\alpha_H, \beta_H\)$, where
{\begin{align}\begin{split}\nonumber
\alpha_{H} =\frac{1}{1-p} \int \lambda\(x\) \(1-p\(x\)\) f_X\(x\) \mathrm{d}x\quad\text{and}\quad \beta_{H} =\frac{1}{p} \int \lambda\(x\) p\(x\) f_X\(x\) \mathrm{d}x.
\end{split}\end{align}
}
In the above we have used $p=\mathbb{P}\(Y_i=1\)$ and
\bs
\lambda\(x\)=\int \mathds{1}\(p\(x\)>c\(x,u\)\)f_{U \vert X}\(u \vert x\)\mathrm{d}u.
\end{split}\end{align}
We can then always find a threshold $c^*\in \[0,1\]$ that satisfies
\begin{align}\begin{split}\nonumber
			 \alpha_{O}\(c^*\) =& \frac{\int \mathds{1}\(p\(x\) > c^*\) \(1-p\(x\)\) f_X\(x\) \mathrm{d}x}{1-p}
= \frac{\int \lambda\(x\) \(1-p\(x\)\) f_X\(x\) \mathrm{d}x}{1-p}\equiv\alpha_H. 
\end{split}\end{align}
Given $c^*$, we then next define, on the ROC curve:
\begin{align}\begin{split}\nonumber
\beta_{O}\(c^*\) =& \frac{\int \mathds{1}\(p\(x\) > c^*\) p\(x\) f_X\(x\) \mathrm{d}x}{p}.
\end{split}\end{align}
By Neyman-Pearson arguments, $\phi^*\(x\)\equiv\mathds{1}\(p\(x\) > c^*\)$ solves
\begin{align}\begin{split}\nonumber
\max_{\phi\(\cdot\)}
	&\(1-c^*\)  \int \phi\(x\) p\(x\) f_X\(x\) \mathrm{d}x
- c^*  \int \phi\(x\) \(1-p\(x\)\) f_X\(x\) \mathrm{d}x\\
	&= p \(1-c^*\) \beta_{O}\(c^*\) - \(1-p\) c^* \alpha_{O}\(c^*\) \geq 
p \(1-c^*\)\beta_{H} -  \(1-p\) c^* \alpha_{H}.
\end{split}\end{align}
The difference between the two sides can be further written as
\begin{align}\begin{split}\nonumber
	&	p \(1-c^*\) \beta_{O}\(c^*\) - \(1-p\) c^* \alpha_{O}\(c^*\) -\[ p \(1-c^*\) \beta_{H} -  \(1-p\) c^* \alpha_{H}\]\\
	&= \int  \(\mathds{1}\(p\(x\) > c^*\) - \lambda\(x\)\) \(p\(x\) - c^*\) f_X\(x\) \mathrm{d}x,
\end{split}\end{align}
which is strictly positive unless  $\mathbb{P}\(\lambda\(X\)= \mathds{1}\(p\(X\) > c^*\) \) = 1$. This is ruled out by the stated assumptions. In particular, by condition \ref{strict concavity condition 3}, with positive probability, either one of the following two events (A and B) holds. 
In event A, $p\(X\) > c^*$ but $p\(X\) \leq c\(X, U\)$, in which case $\mathds{1}\(p\(X\) > c^*\)=1$ but $\lambda\(X\) < 1$. In event B, $p\(X\) \leq c^*$ but $p\(X\) > c\(X, U\)$, in which case $\mathds{1}\(p\(X\) > c^*\)=0$ but $\lambda\(X\) > 0$. Together 
$\mathbb{P}\(\lambda\(X\) \neq \mathds{1}\(p\(X\) > c^*\) \) \geq
\mathbb{P}\(A \cup B\) > 0$. \end{proof}

\begin{proof}[Proof of Lemma \ref{distributionFPRTPR}]
Recall the definitions of sample and population FPR/TPR
pairs:
\begin{align}\begin{split}\nonumber
\hat{\alpha} = \frac{\sum_{i=1}^n \(1-Y_i\) \widehat{Y}_i}{\sum_{i = 1}^n \(1-Y_i\)}, \quad 
\hat{\beta} = \frac{\sum_{i=1}^nY_i \widehat{Y}_i}{\sum_{i =
1}^n Y_i},\quad
\alpha = \frac{\mathbb{E} \[\(1 - Y_i\) \widehat{Y}_i\]}{\mathbb{E}\[1 - Y_i\]}, \quad
\beta = \frac{\mathbb{E}Y_i \widehat{Y}_i}{\mathbb{E}Y_i}.
\end{split}\end{align}
It follows from these definitions that for
$\theta = \(\alpha, \beta\)'$ and 
$\hat\theta = \(\hat{\alpha}, \hat{\beta}\)'$,
\begin{align}\begin{split}\nonumber
\sqrt{n}\(\hat\theta-\theta\)
= \hat H^{-1} \frac{1}{\sqrt{n}}
\sum_{i=1}^n
\(
\begin{array}{c}
\(1-Y_i\) \(\hat Y_i - \alpha\) \\
Y_i \(\hat Y_i - \beta\)
\end{array}
\)\quad\text{where}\quad
\hat H =
\biggl(
\begin{array}{cc}
1 - \hat p & 0 \\
0 & \hat p
\end{array}
\biggr).
\end{split}\end{align}
By the Law of Large Numbers (LLN), $\hat H$ converges to $H$
in probability:
\begin{align*}
\hat H = H + o_{\mathbb{P}}\(1\)\quad\text{where}\quad
H =
\biggl(
\begin{array}{cc}
1 - p & 0 \\
0 & p
\end{array}
\biggr).
\end{align*}
By the multivariate central limit theorem (CLT), 
\begin{align}\begin{split}\nonumber
\frac{1}{\sqrt{n}}
\sum_{i=1}^n
\(
\begin{array}{c}
\(1-Y_i\) \(\hat Y_i - \alpha\) \\
Y_i \(\hat Y_i - \beta\)
\end{array}
\)
\overset{d}{\longrightarrow} N\(0, \Omega\)\quad\text{where}\quad
\Omega=
\(
\begin{array}{cc}
\sigma_{\alpha}^2 & \\
0 & \sigma_{\beta}^2
\end{array}
\).
\end{split}\end{align}
The relations $Y_i^2 = Y_i$,
$\(1-Y_i\)^2 = 1-Y_i$ and  
$\hat Y_i^2 = \hat Y_i$ can be used to calculate that
\begin{align*}
\sigma_{\alpha}^2 =& \mathbb{E} \[\(1-Y_i\)^2 \(\hat Y_i - \alpha\)^2 = \(1-p\) \alpha\(1-\alpha\)\], \\
\sigma_{\beta}^2 =& \mathbb{E} \[Y_i^2 \(\hat Y_i - \beta\)^2 = p \beta\(1-\beta\)\].
\end{align*}
It then follows from the multivariate Slutsky's Lemma that
\begin{align}\begin{split}\nonumber
	\sqrt{n}\(\hat\theta-\theta\)
\overset{d}{\longrightarrow} N\(0,
\Sigma\) \quad\text{where}\quad
\Sigma=
\biggl(
\begin{array}{cc}
\frac{\alpha\(1-\alpha\)}{1-p} & 0\\
0 & \frac{\beta\(1-\beta\)}{p}
\end{array}
\biggr).
\end{split}\end{align}
A typical concern with the application of central limit theorems is the quality of approximation in the finite sample when each of $p, \alpha$ and $\beta$ is close to either $0$ or $1$. In particular, when the true population proportion of $p$ is very close to 
either $0$ or $1$
relative to the sample size, both the numerator and the denominator in either $\hat\alpha$ or $\hat\beta$
may converge to a Poisson type limit random variable. 
Even in large samples, 
either $\hat\alpha$ or $\hat\beta$ remains random and may not converge to a population limit. In our application, the population parameter $p$ is about $5\%$.  Given that the sample size for each doctor is at least a few hundred, the asymptotic normal limit distribution still offers a reasonable approximation.

When either $\alpha$ or $\beta$ is close to 
either $0$ or $1$, $\hat\alpha$ or $\hat\beta$ still converge
to $\alpha$ or $\beta$, but their limit distributions and convergence rates can 
both be different. 
The normal approximation in the finite sample is likely to be more accurate for doctors whose FPR and TPR are 
bounded away from $0$ and $1$.
While the frequentist and Bayesian approaches are not directly comparable, employing the Bayesian methodology does allow us to sidestep issues related to the asymptotic distribution approximation.\end{proof}

\subsection{The 
Random Forest Algorithm}\label{MachineAlgorithm}
We implemented a random forest algorithm with $N$ estimators
and at most $M$ features per node. 
For each node in a tree, no more than $M$
features will be considered to obtain the best split. The algorithm works as follows:

\begin{enumerate}

	\item For $i = 1$ to $N$:

	\begin{enumerate}

		\item Draw a bootstrap sample $\tilde{\mathbb{D}}_i$ from the training data $\mathbb{D}$ with replacements that has the same sample size as $\mathbb{D}$.

		\item Grow an unpruned tree $T_i$ using $\tilde{\mathbb{D}}_i$ by repeating the following steps for each node of the tree until the nodes are pure or until the number of leaves for a node falls below the minimum number of samples required to split:

		\begin{enumerate}

			\item Randomly choose $M$ features from the $d$ dimensional input feature vector.

			\item Select the best of the $M$ features to split 
   using the Gini impurity criterion.

			\item Split the node into two subnodes using the best feature.

		\end{enumerate}

	\end{enumerate}

	\item Obtain the 
output for the trees $\{ T_i \}_{i=1}^N$.

\end{enumerate}

Given an input feature vector $\mathbf{x}$, the random forest 
predicts the propensity score function 
by aggregating the results of $N$ trees, where the class probability $m\(\mathbf{x}\)$ is 
the average of 
the class probability of each tree in the forest.

%

\subsection{Supplement to the Frequentist Approach}\label{macine decision thresholds}

\paragraph{Frequentist approach machine decision thresholds:}
In sections \ref{heuristic frequentist approach} and \ref{asymptoticResult},
for each of the replaced doctors $j$, we generate the machine decision thresholds with a 
specific cutoff value $c_{j}$ using the following algorithm. 
First, we find the largest ($c_{j, 1}$) and smallest ($c_{j, N}$) cutoff thresholds 
corresponding to the end points of the dominating segment of the machine ROC curve, i.e., 
points B and A in 
Figure \ref{heuristic frequentist comparison}. 
To compute FPR/TPR pairs, 
we choose $N$ cutoff thresholds 
along this dominating interval of the machine ROC curve between points $B$ and $A$
using the following procedure:
\begin{enumerate}
    \item The stepsize of threshold spacing is set to $c_{j,\text{step}}=\frac{c_{j,N}-c_{j,1}}{N-1}$;
    \item For each $l=0,\ldots,N-1$, the $l + 1$-th threshold between $c_{j,1}$ and $c_{j,N}$ is set to 
    \bs
    c_{j,l + 1}=c_{j,1} + l c_{j, \text{step}}. 
    \end{split}\end{align}
\end{enumerate}
For each $l=1,\ldots,N$,
the $l$-th aggregate FPR/TPR pair is calculated using the decisions of all retained doctors and machine decisions based on $c_{j, l}$ for each replaced doctor. 

\paragraph{A test between human FPR/TPR and machine ROC curve:} An alternative is to test whether the population value of $\theta_H$ is above or below the machine ROC curve. The status quo of human decision making is typically 
chosen as the null hypothesis of 
$H_0: \theta_{H} \ \text{lies above ROC}$.
This implies that a priori, \emph{we maintain confidence in human decision making} unless overwhelming evidence
suggests otherwise. 
As the machine ROC curve is represented by
$\beta = g\(\alpha\)$, 
we can rewrite the null and alternative hypotheses as
\begin{align}\begin{split}\nonumber
H_0: \beta_{H} \geq g\(\alpha_{H}\) \quad\text{against}\quad
H_1: \beta_{H} < g\(\alpha_{H}\).
\end{split}\end{align}
Under the least favorable null hypothesis, the asymptotic distribution of a test statistic
$\hat t = \hat\beta_{H} - g\(\hat\alpha_{H}\)$ follows from combining Lemma \ref{distributionFPRTPR} with the Delta method. More precisely, if we define
$\hat B = \(-g'\(\hat\alpha_H\), 1\)$, then the following central limit theorem holds under the least favorable null hypothesis:
\begin{align}\begin{split}\nonumber
\mathbb{P}\(
\frac{
\sqrt{n}\(\hat\beta_H - g\(\hat\alpha_H\)\)
}{
\sqrt{
\hat B \hat\Sigma \hat B'
}
}
\leq \Phi^{-1}\(\alpha\)
\)\rightarrow \alpha.
\end{split}\end{align}
In the above, we use $\Phi^{-1}\(\alpha\)$ to denote the $\alpha$-th percentile of the standard normal distribution. 
A test that rejects the null hypothesis when 
$\frac{
\sqrt{n}\(\hat\beta_H - g\(\hat\alpha_H\)\)
}{
\sqrt{
\hat B \hat\Sigma \hat B'
}
}$ is less than $\Phi^{-1}\(\alpha\)$ will have asymptotic size $\alpha$. 

\subsection{Robustness Analysis}\label{Robustness Analysis}

To check the robustness of our approaches 
we also conduct the experiments 
based on doctors who diagnosed at least 500 patients.
We have 367 such doctors and 495,320 patient cases in total.
The classification set that is used to identify doctors
contains 
197,978 patient cases in the training sample and 148,579 patient cases in the validation sample. 
The remaining 148,763 cases are used to evaluate the
performance of doctors and the algorithm. 
Other settings of the algorithm implementation are kept the same. 
We identify 135 doctors as replaced among 367 doctors. 
In other words, 36.8\% of doctors are replaced by the algorithm and 63.2\% of human doctors are 
retained in the performance data set.

Figure \ref{fg:ai-doctors-500} shows the results of this experiment. Similar to
Figure \ref{fg:ai-doctors-300}, 
the aggregate FPR/TPR pair of
doctors on the performance set 
(the blue point) is 0.1942/0.2135.
The AUC of the random forest model is 0.6892.
The highest 
(yellow) point on the cyan interval has a FPR of 0.1947 and a TPR of 0.3419.
These results represent an improvement of 60.1\% in the TPR
and a 
slight worsening of 0.3\% in the FPR. 
The 
deterioration of the FPR 
is numerically possible because the FPR/TPR pairs are calculated out of sample using the performance data set.


\begin{figure}
  \centering
\subfigure[ROC curve and FPR/TPR pairs of doctors and machine decisions]{\includegraphics[width=0.7\columnwidth]{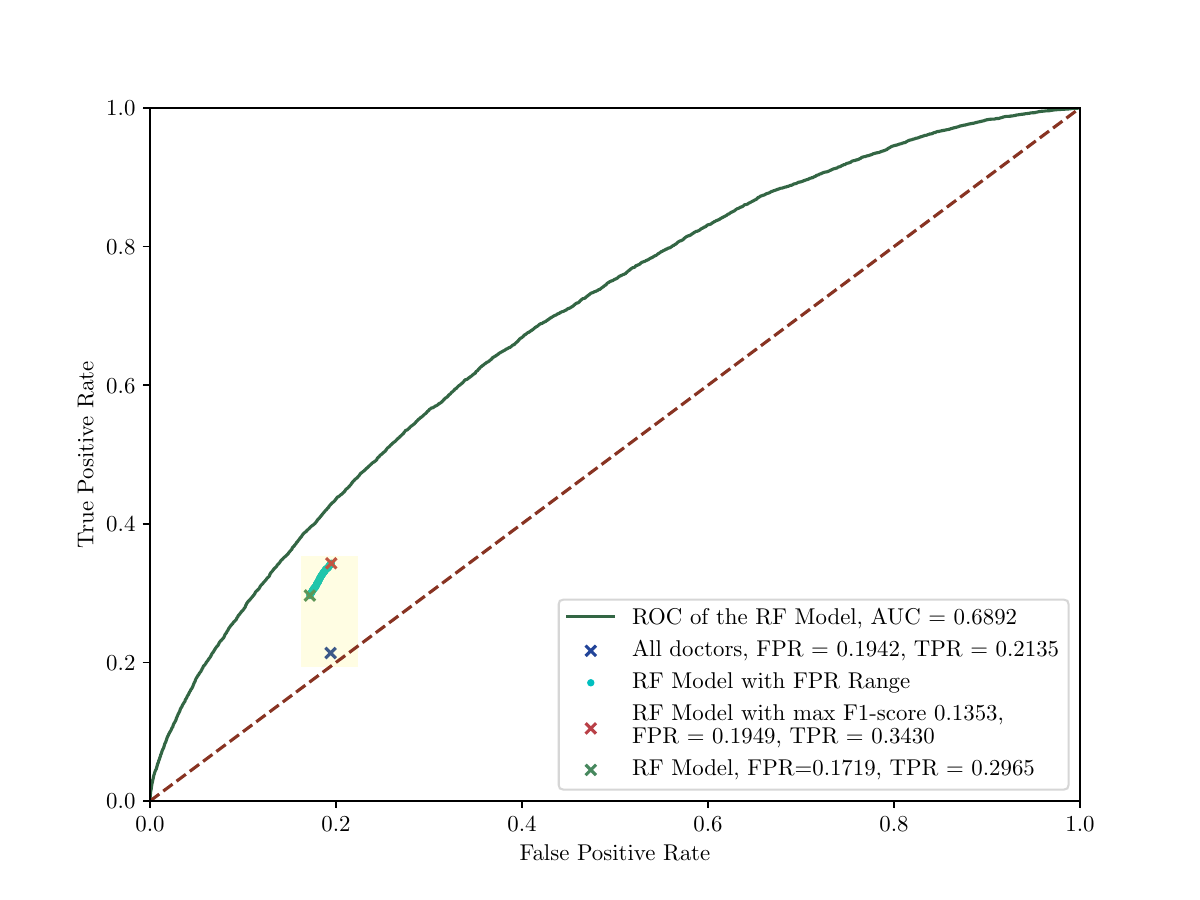}
\label{fg:ai-doctors-500-upper}
}
\subfigure[Zoomed-in version]{\includegraphics[width=0.7\columnwidth]{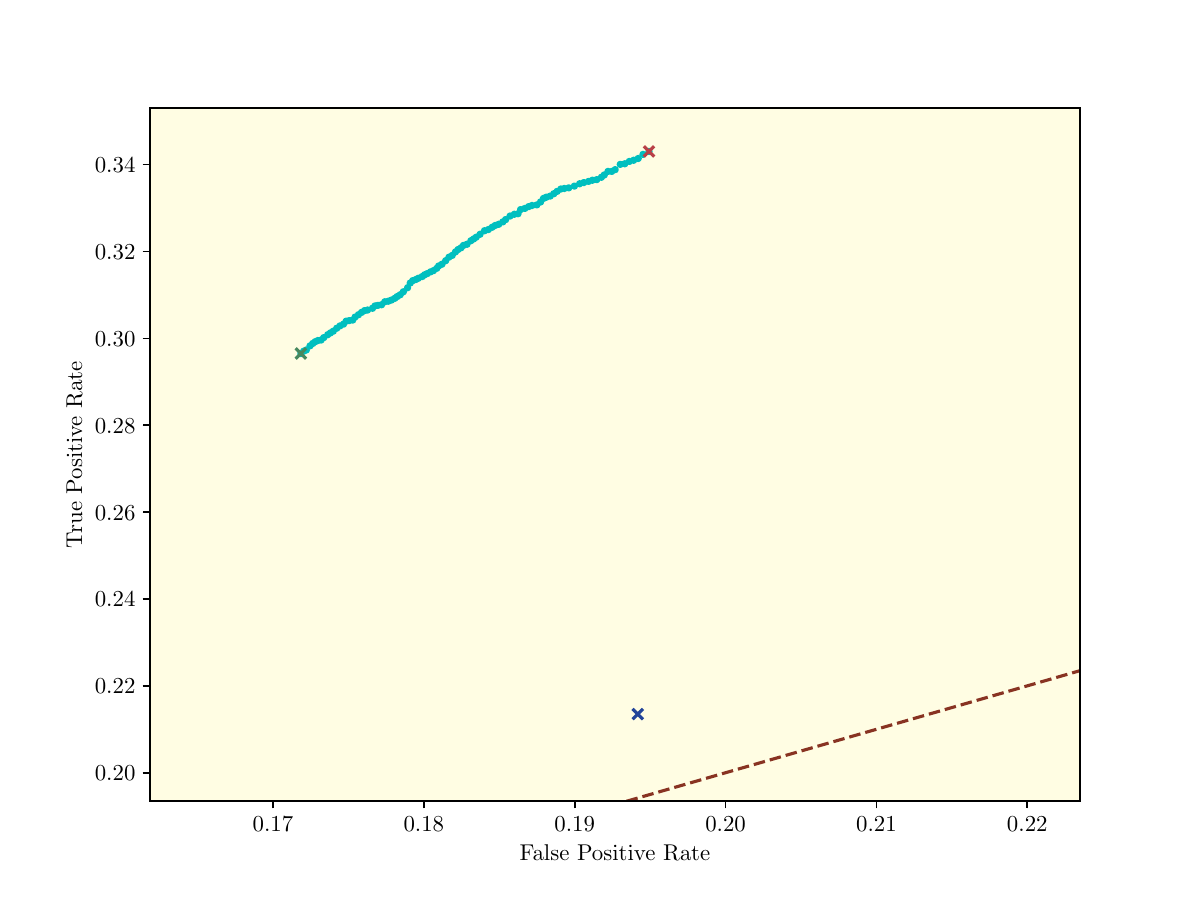}
\label{fg:ai-doctors-500-lower}
}

  \caption{Empirical results of combining doctors' and machine decisions: the heuristic frequentist approach (doctors' diagnoses >= 500)} 
\fnote{\textit{Notes}: Panel 
\subref{fg:ai-doctors-500-upper} 
draws the ROC curve in the test set of the Random Forest classifier together with the
blue point representing the FPR/TPR pair of all doctors. The cyan interval collects the FPR/TPR pairs of frequentist 
replacement strategies  along the dominating segment of the machine ROC curve. 
Panel \subref{fg:ai-doctors-500-lower} 
provides a magnified view
of the cyan interval.
}
\label{fg:ai-doctors-500}
\end{figure}

The lower left (green) point of
the cyan interval achieves 0.1719 for the FPR and  0.2965 for the TPR,
representing an improvement of 38.9\% in the TPR and a reduction of 11.3\% in the FPR. 
The red point on the cyan curve achieves a maximum F1 score of 0.1353. 
The overall performance of experimenting with doctors diagnosing at least 500 patients does not 
 differ substantially from the previous experiments with doctors diagnosing at least 300 patients.
Figure \ref{frequentist replaced doctor scatter plot-500} shows a scatter plot of the FPR/TPR pairs of both replaced and retained doctors against the ROC curve in the performance data set.

\begin{figure}
  \centering
\includegraphics[width=0.8\columnwidth]{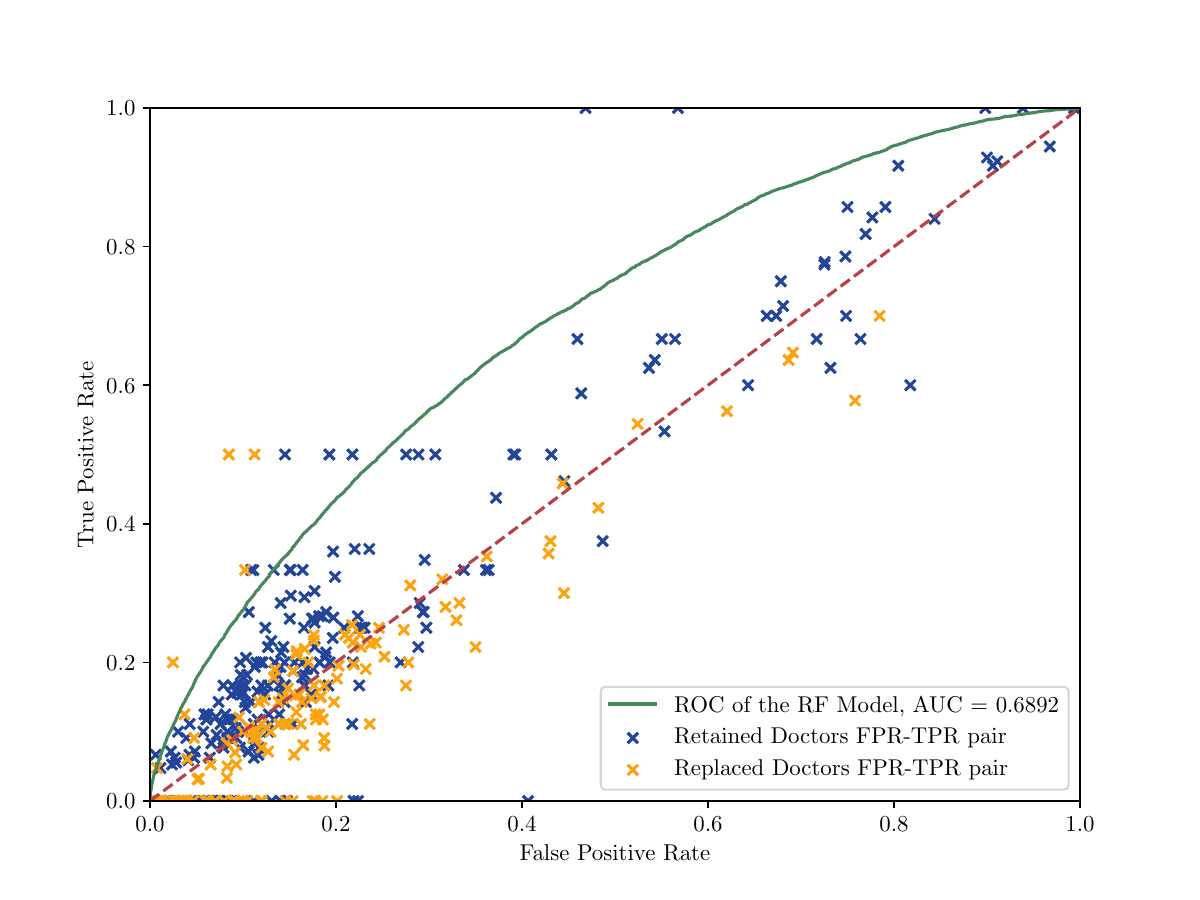}
\caption{Scatter plot of frequentist doctor replacement in the performance data set (doctors' diagnoses >= 500)}
\fnote{\textit{Notes}: The FPR/TPR pairs in the scatter plot are calculated in the test set of patient cases for each doctor using the raw data. It is possible for the TPR/FPR pairs of a small fraction of replaced doctor to lie above the machine ROC curve in the test set.
}
  \label{frequentist replaced doctor scatter plot-500}
\end{figure}

Figure \ref{fg:ai-doctors-bayesian-curve-500}
shows the experiment results for section 
\ref{bayesian result}
using only doctors who diagnosed at least 500 cases.
Among all 367 such doctors, 201 are classified 
as replaced doctors. 
Therefore, 166 (45.2\%) human doctors
and 201 (54.8\%) machine doctors are involved
in the evaluation on the performance data set.
The aggregate  FPR and TPR for combined decision making are 0.1783 and 0.3292, respectively, 
representing an improvement of
54.2\% on the TPR and a reduction of 8.2\% on the FPR. 
Figure \ref{bayesian replaced doctor scatter plot-500} is a scatter plot of the FPR/TPR pairs
of both replaced and retained doctors against the ROC curve in the performance data set.

\begin{figure}
  \centering
  \subfigure[ROC curve and FPR/TPR pairs of doctors and machine decisions]{\includegraphics[width=0.7\columnwidth]{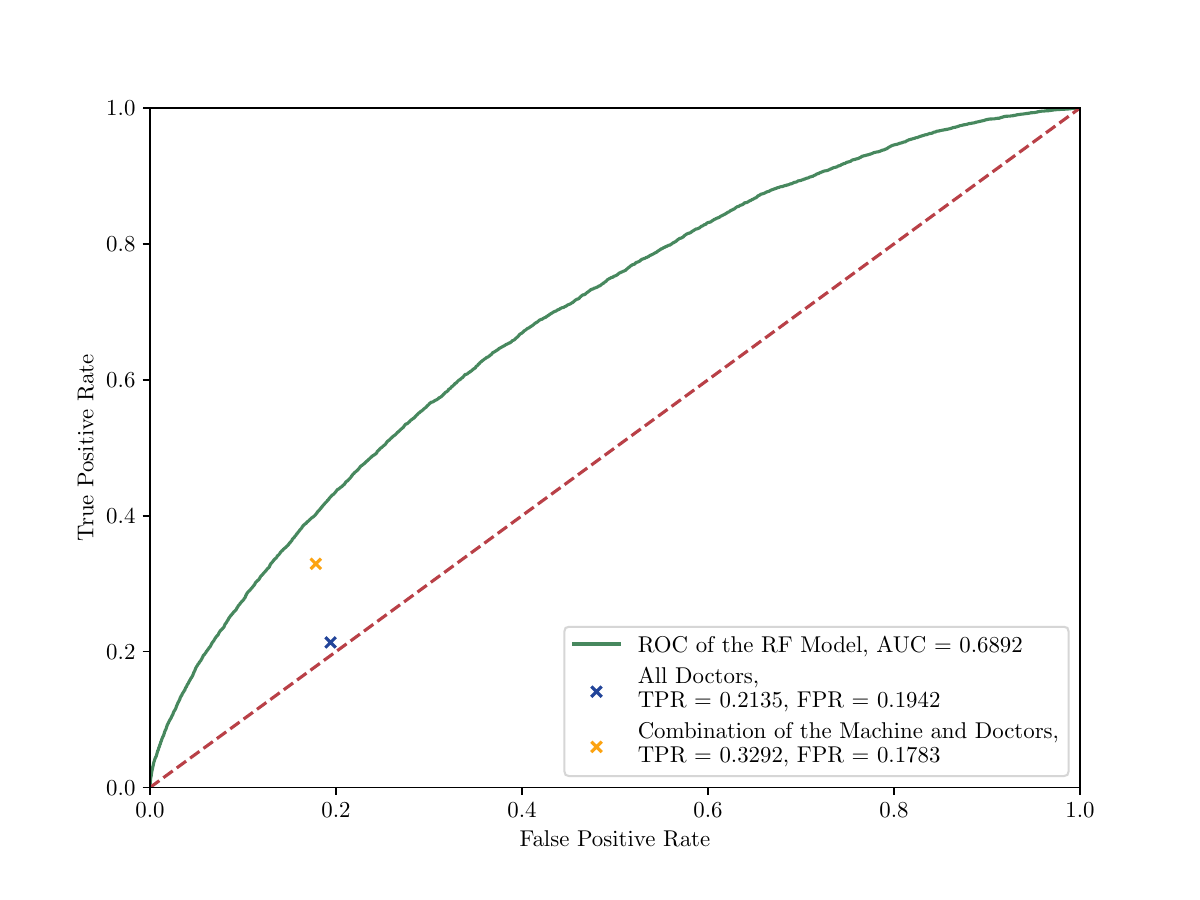}\label{fg:ai-doctors-bayesian-curve-500}}
\subfigure[Scatter plot of Bayesian doctor replacement in the performance data set]{\includegraphics[width=0.7\columnwidth]{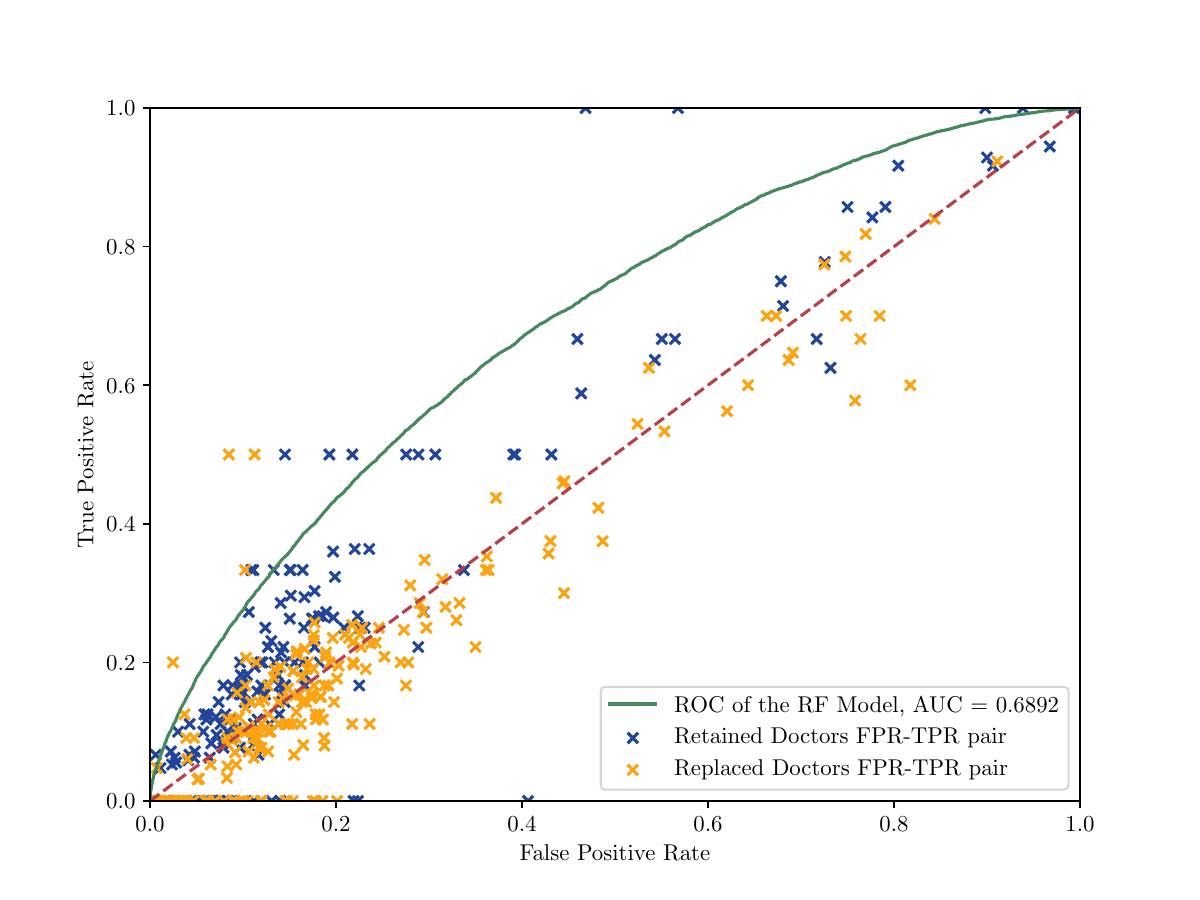}
\label{bayesian replaced doctor scatter plot-500}}
  \caption{Empirical results of the Bayesian approach (doctors' diagnoses >= 500)} 
\fnote{\textit{Notes}: The panel \subref{fg:ai-doctors-bayesian-curve-500} draws the ROC curve in the test set of the Random Forest classifier, the
FPR/TPR pair of all doctors, and the FPR/TPR pair of doctor/machine combination. The panel \subref{bayesian replaced doctor scatter plot-500} scatterplot represents the FPR/TPR pairs of individual doctors, such that yellow points correspond to the replaced doctors and the blue points non-replaced doctors.}
  \label{fg:ai-doctors-bayesian-500}
\end{figure}

The replacement paths  shown in 
the 
Figure \ref{fg:ai-doctors-bayesian-score-500},
which correspond to the alternative loss 
functions in equations \eqref{euclidean distance loss function} to 
\eqref{complement set decomposition weight loss function} and use only doctors who diagnosed at least 500 patient cases, are similar to Figure \ref{fg:ai-doctors-bayesian-score-300} for 
 doctors diagnosing at least 300 cases. 

\begin{figure}
  \centering
  \subfigure[ROC curve and Replacement Paths]{\includegraphics[width=0.7\linewidth]{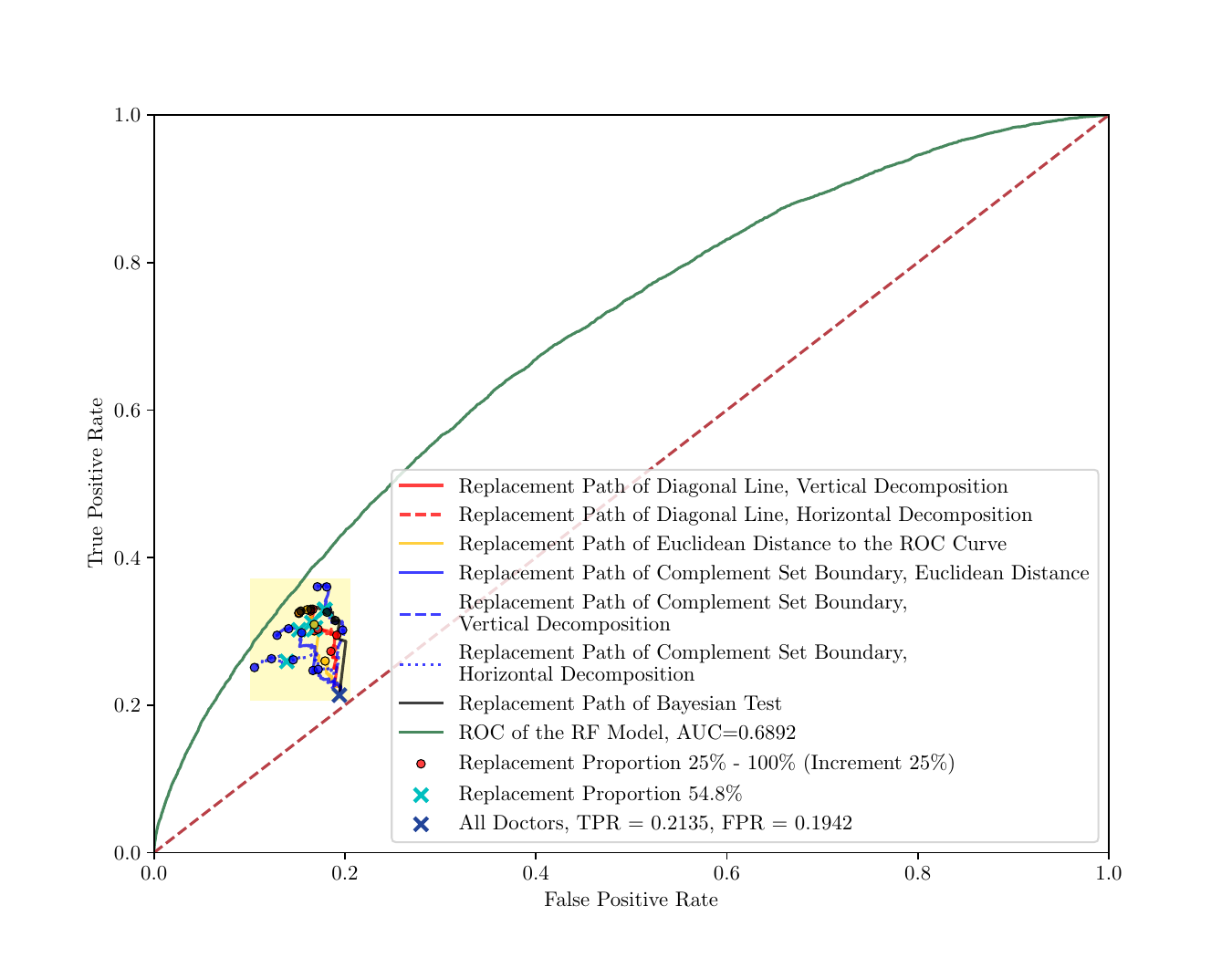}\label{fg:ai-doctors-bayesian-score-horizonal-upper-500}}
  \subfigure[Zoomed-in Figure of Replacement Paths]{\includegraphics[width=0.7\linewidth]{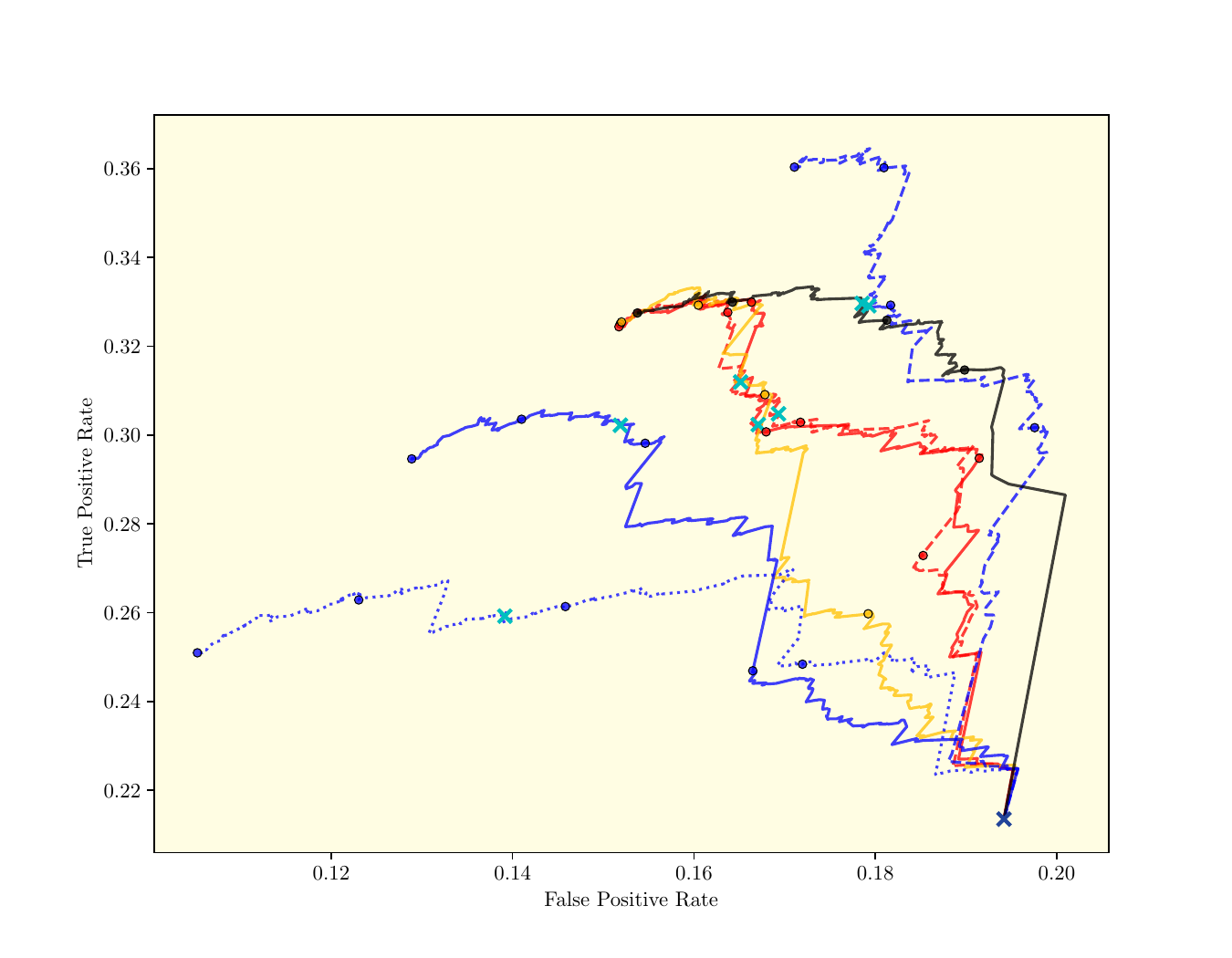}\label{fg:ai-doctors-bayesian-score-horizonal-lower-500}}
  \caption{Bayesian Approach with Loss Functions Constructed by Euclidean Distance and Randomized Decomposition (Doctors' Diagnoses >= 500)}
\fnote{Notes: The replacement paths trace the dynamics of the aggregate FPR/TPR pairs of combining 
the doctor decisions with the machine decisions, when the replacement ratio increases from 0\% of decisions entirely by human to 100\% of decisions entirely by machine algorithms. Panel \subref{fg:ai-doctors-bayesian-score-horizonal-lower-300}
magnifies a portion of panel \subref{fg:ai-doctors-bayesian-score-horizonal-upper-300} containing the replacement paths.
}
  \label{fg:ai-doctors-bayesian-score-500}
\end{figure}


\subsection{Randomized Replacement by the Machine Algorithm}\label{randomized replacement}

The experiments in section \ref{classify doctors} identify doctors to be replaced by the machine algorithm. 
In this section we consider an extension where the machine decision is randomly accepted by the doctors according 
to a pre-specified probability. 
Each replaced doctor will accept the machine decision with a probability of $\lambda \in \[0,1\]$ for each patient case.
For each replaced doctor $j$, we generate decisions for their patient cases in the performance data set using the following randomized rule that combines $\hat y_{i,M}$ and $\hat y_{i,j}$ in Assumption 1:
\bs
\bar Y_{i,j} = \mathds{1}(N_{i,j} \leq \lambda) \hat Y_{i,M}
 + \mathds{1}(N_{i,j} > \lambda)\hat Y_{i,j},
\end{split}\end{align}
where $N_{i,j}$ is independent and uniformly distributed on $\[0,1\]$. 
The larger the $\lambda$, the more likely the doctor will accept the machine algorithm to make the decision.


Figure \ref{fg:ai-doctors-bayesian-300-acprate} shows the results of applying the 
randomized decision rule with different acceptance rates $\lambda$ to the doctors 
diagnosing at least 300 patient cases identified by the Bayesian test in 
section \ref{the Bayesian approach}. 

\begin{figure}
  \centering
  \subfigure[Replaced doctors involved]{\includegraphics[width=0.7\columnwidth]{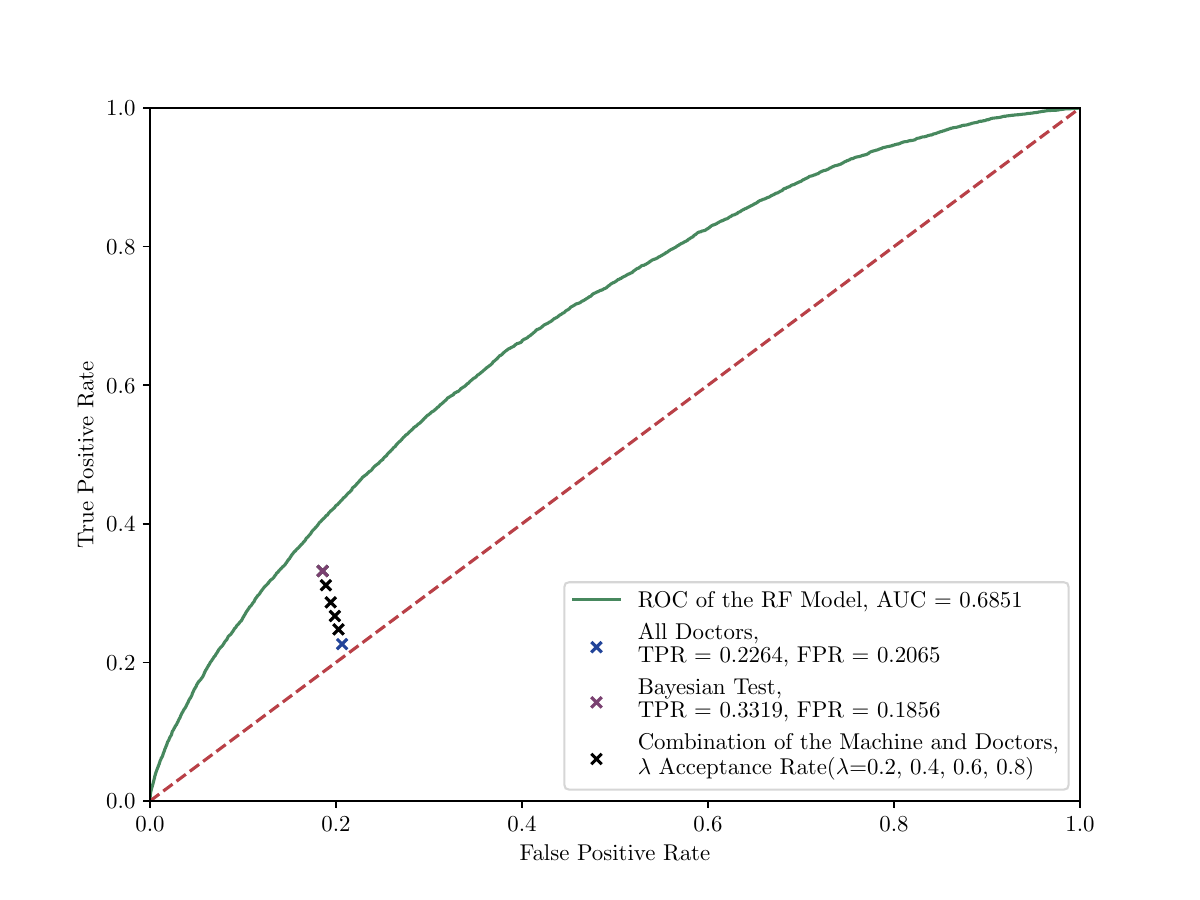}\label{fg:ai-doctors-bayesian-300-acprate}}
  \subfigure[All doctors involved]
  {\includegraphics[width=0.7\columnwidth]{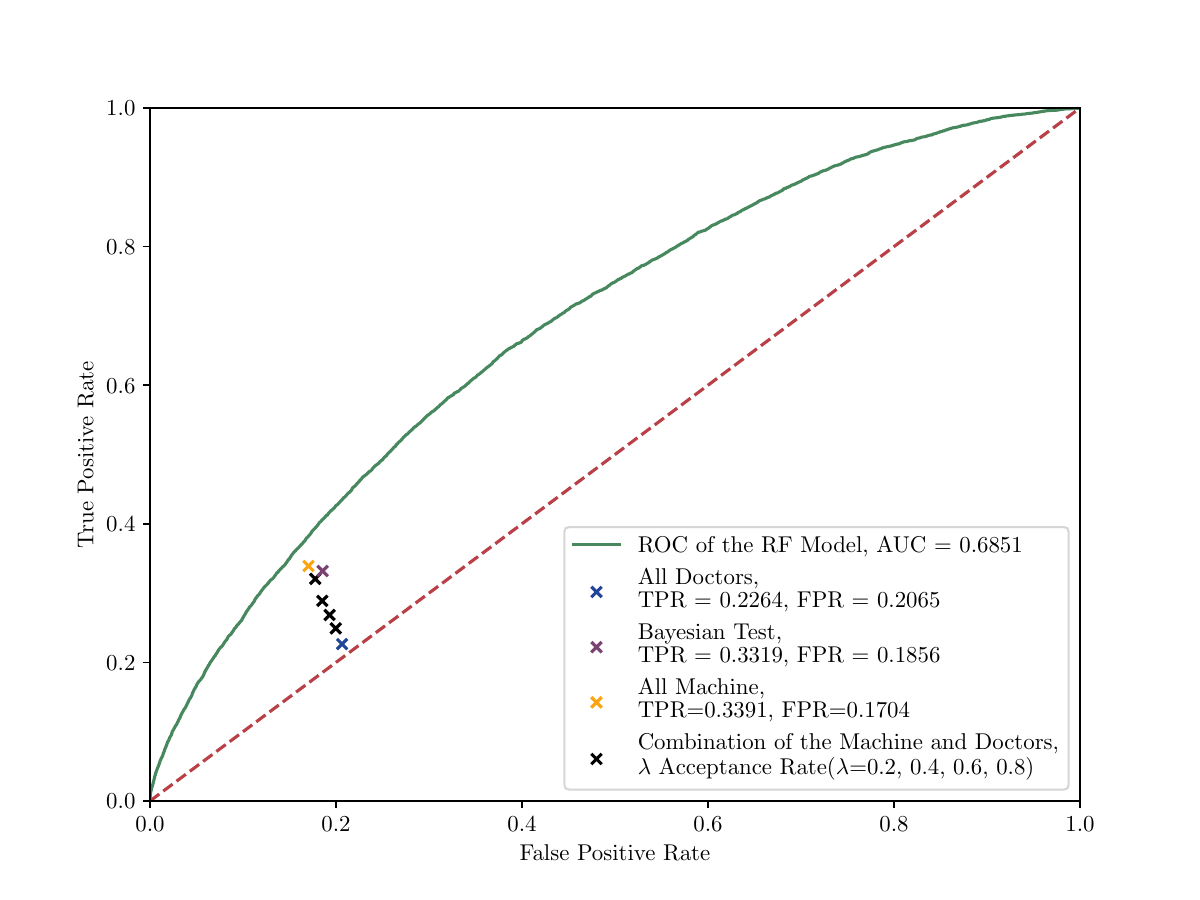}\label{fg:ai-doctors-bayesian-all-300-acprate}
  }
  \caption{Results of the Bayesian approach under different acceptance rates}
  \label{fixed lambda bayesian}
\fnote{\textit{Notes}: 
Figure \subref{fg:ai-doctors-bayesian-300-acprate} shows 
the results of applying the 
randomized decision rule with different acceptance rates $\lambda$ to the replaced doctors 
diagnosing at least 300 patient cases.  The FPR/TPR pairs are tabulated in Panel A of Table \ref{tab:perf_to_acceptrate}.
Figure \subref{fg:ai-doctors-bayesian-all-300-acprate} includes all doctors as in  Panel B of Table \ref{tab:perf_to_acceptrate}.
Replacing all doctors only marginally improves upon the Bayesian test approach.
}
\end{figure}

The black crosses on the figures correspond to $\lambda \in \{0.2, 0.4, 0.6, 0.8\}$. 
As the acceptance rate $\lambda$ grows, 
the black cross gradually moves from the blue cross (where all doctors make their own decisions; $\lambda = 0$) to the purple cross (where replaced doctors 
rely on machine decisions; $\lambda = 1$). 
The pairs of FPR/TPR under various levels of acceptance rates are tabulated in panel A of Table \ref{tab:perf_to_acceptrate}. 
The overall FPR/TPR performance improves  monotonically when 
the replacement probability $\lambda$ 
increases from $0$ to $1$.

\begin{table}[htbp]
\centering
\caption{Overall performance under different acceptance rates of machine decision}
\begin{tabular}{ccccc}
\toprule
\multicolumn{5}{l}{Panel A: Replaced doctors involved} \\
\midrule
Accept Rate $\lambda$ & FPR   & Reduction of FPR & TPR   & Improvement of TPR \\
\midrule
0.0   & 0.2065 & -     & 0.2264 & - \\
0.2   & 0.2026 & 1.89\% & 0.2462 & 8.75\% \\
0.4   & 0.1977 & 4.26\% & 0.2683 & 18.51\% \\
0.6   & 0.1944 & 5.86\% & 0.2863 & 26.46\% \\
0.8   & 0.1897 & 8.14\% & 0.3110 & 37.37\% \\
1.0   & 0.1856 & 10.12\% & 0.3319 & 46.60\% \\
\toprule
\multicolumn{5}{l}{Panel B: All doctors involved} \\
\midrule
Accept Rate $\lambda$ & FPR   & Reduction of FPR & TPR   & Improvement of TPR \\
\midrule
0.0   & 0.2065 & -     & 0.2264 & - \\
0.2   & 0.1994 & 3.44\% & 0.2485 & 9.76\% \\
0.4   & 0.1919 & 7.07\% & 0.2713 & 19.83\% \\
0.6   & 0.1848 & 10.51\% & 0.2893 & 27.78\% \\
0.8   & 0.1777 & 13.95\% & 0.3148 & 39.05\% \\
1.0   & 0.1704 & 17.48\% & 0.3391 & 49.78\% \\
\bottomrule
\end{tabular}
\label{tab:perf_to_acceptrate}
\fnote{\textit{Notes}: Panel A tabulates the pairs of FPR/TPR under various levels of acceptance rates $\lambda$ when the randomizeddecision is applied to the replaced doctors. Panel B includes all the doctors.
The FPR/TPR pairs improves monotonically when the replacement probability $\lambda$ increases from 0 to 1.
}
\end{table}

We also consider a setting where
all 
doctors are randomized into being replaced by artificial intelligence.
Figure \ref{fg:ai-doctors-bayesian-all-300-acprate} shows the results when different acceptance rates $\lambda$ are used to randomize all doctors. When a patient case is randomized into being replaced by the machine learning, we continue to apply the 
baseline Bayesian test in \eqref{bayesian basic numerical} to determine the decision threshold on the machine ROC curve. 
Replacing identified doctors (45.2\% of the total headcount) produces a much better FPR/TPR pair than randomly replacing the 
same proportion of doctors' decisions. In addition, replacing all doctors (the yellow cross) is not saliently better than 
replacing the identified doctors only (the purple cross). 

Instead of relying on a single constant acceptance rate $\lambda$, 
patient cases can also be randomized by individualized acceptance rates $\lambda$'s that 
vary with the diagnosing doctor.
We increase the acceptance probability $\lambda$ when the posterior dominance probability 
$\hat q_{\max}$ in the Bayesian test \eqref{bayesian basic numerical}  goes up. 
Figure \ref{linear acceptance rate incapable more} displays two scenarios. 
In scenario 1, we rank the posterior dominance probability 
$\hat q_{\max}$ of all doctors 
using the baseline Bayesian test \eqref{bayesian basic numerical} 
and assign them an acceptance rate $\lambda$ that depends linearly on their ranks. 
The acceptance rate $\lambda$ of the least dominated doctor, with rank 1, is set to 0, 
while $\lambda$ for the most dominated doctor, with rank $n$, is set to 1. 
The acceptance probability for a 
doctor whose rank $r$ is between 1 and $n$ is $\lambda=\frac{r-1}{n-1}$. 
Scenario 2 is similar to scenario 1, 
but only the replaced doctors identified by the baseline Bayesian test 
\eqref{bayesian basic numerical} are randomized.


\begin{figure}
  \centering
  \subfigure[Scenario 1: all doctors linear acceptance rate]{\includegraphics[width=0.7\columnwidth]{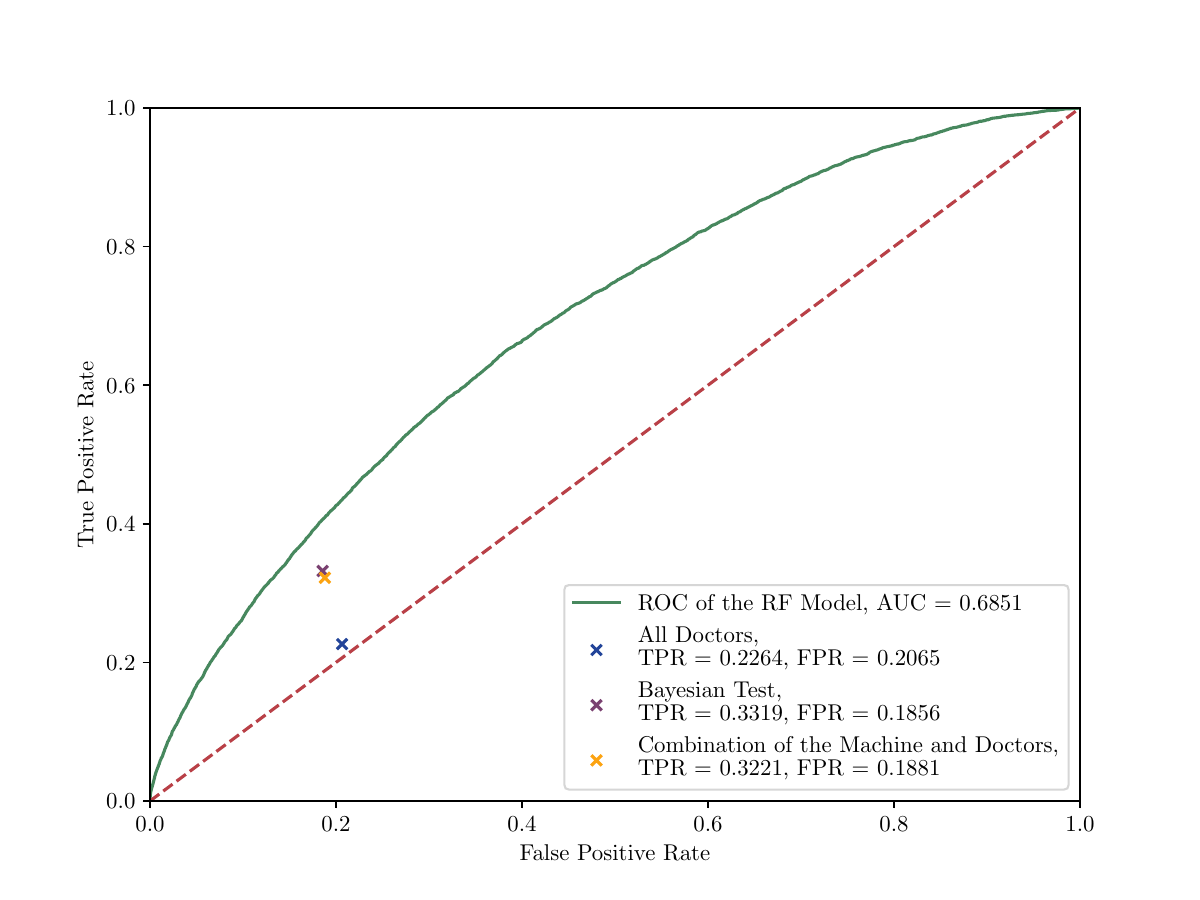}\label{fg:linear acceptance rate all doctors more 300}}
  \subfigure[Scenario 2: replaced doctors linear acceptance rate]
  {\includegraphics[width=0.7\columnwidth]{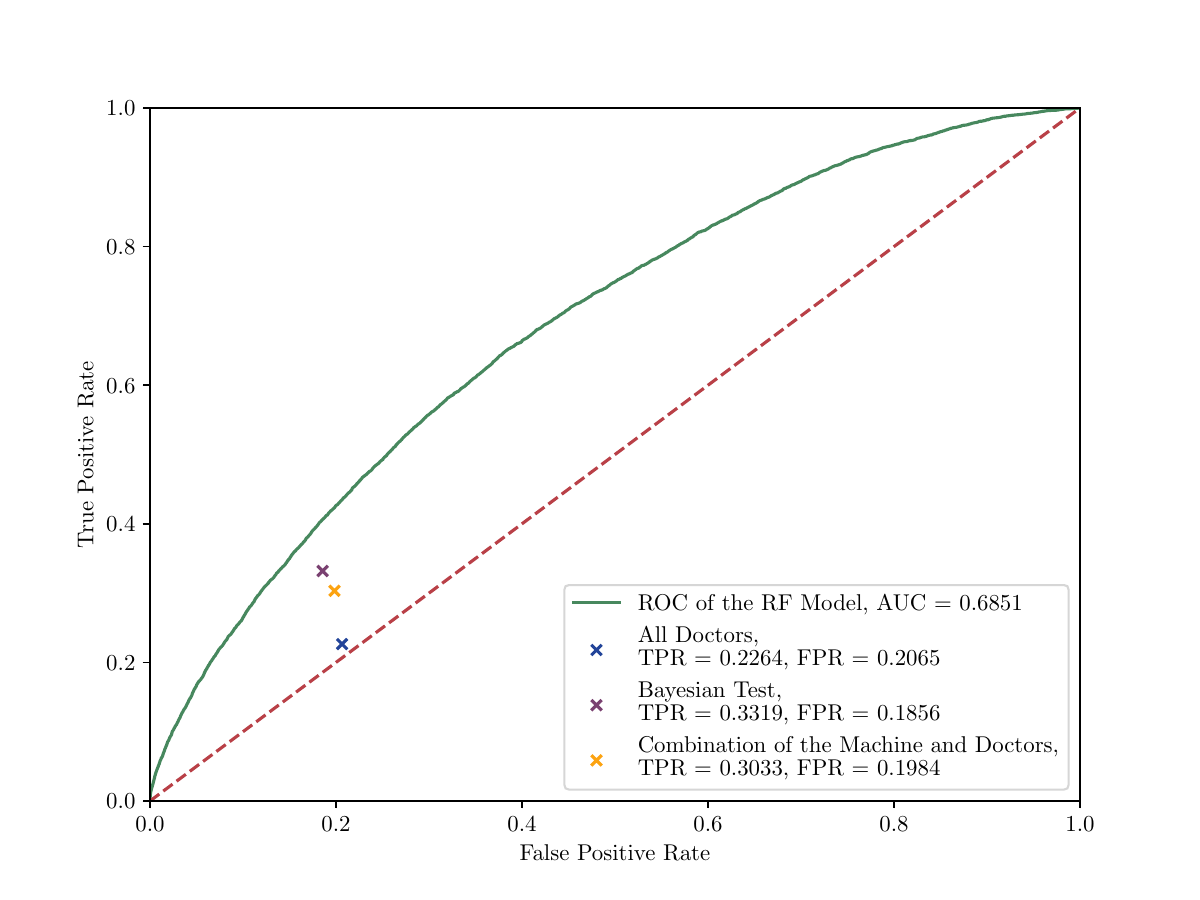}\label{fg: linear acceptance rate less capable doctors more 300}}
  \caption{Results of the Bayesian approach using capability dependence acceptance rates: replaced doctors more like to adopt the machine decisions}
  \label{linear acceptance rate incapable more}
\fnote{\textit{Notes}: 
In this figure individualized acceptance rates specific to each doctors are used to randomize patient cases.
Scenario 1 ranks the posterior dominance probability $\hat q_{\max}$ of all doctors using the baseline Bayesian test 
\eqref{bayesian basic numerical} and assigns them an acceptance rate $\lambda=\frac{r-1}{n-1}$ that depends linearly 
on their ranks. 
Scenario 2 is similar to scenario 1, but only the replaced doctors identified by the 
baseline Bayesian test  \eqref{bayesian basic numerical} are randomized.
}
\end{figure}

Figure \ref{linear acceptance rate incapable less} illustrates the consequence of an alternative 
procedure 
for determining the acceptance rates $\lambda$ based on the reverse 
rank. 
Scenario 3 and scenario 4 are counterparts to 
scenarios 1 and 2. 
The difference is that here $\lambda$ is calculated as $\frac{n-r}{n-1}$. We interpret the reverse rank ordering as retained doctors being 
more willing to accept machine decisions.

\begin{figure}
  \centering
  \subfigure[Scenario 3: all doctors involved] {\includegraphics[width=0.7\columnwidth]{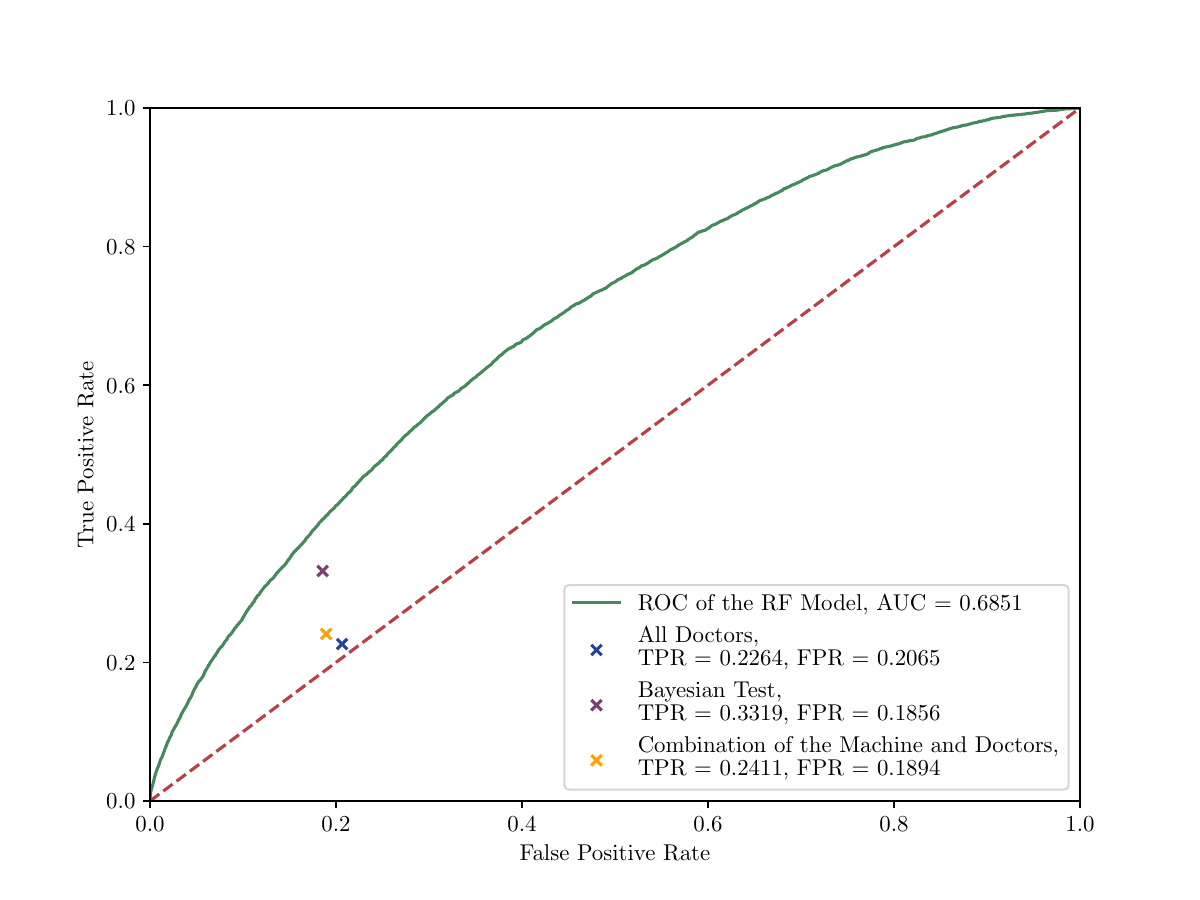}\label{fg:linear acceptance rate all doctors less 300}}
  \subfigure[Scenario 4: less capable involved]
  {\includegraphics[width=0.7\columnwidth]{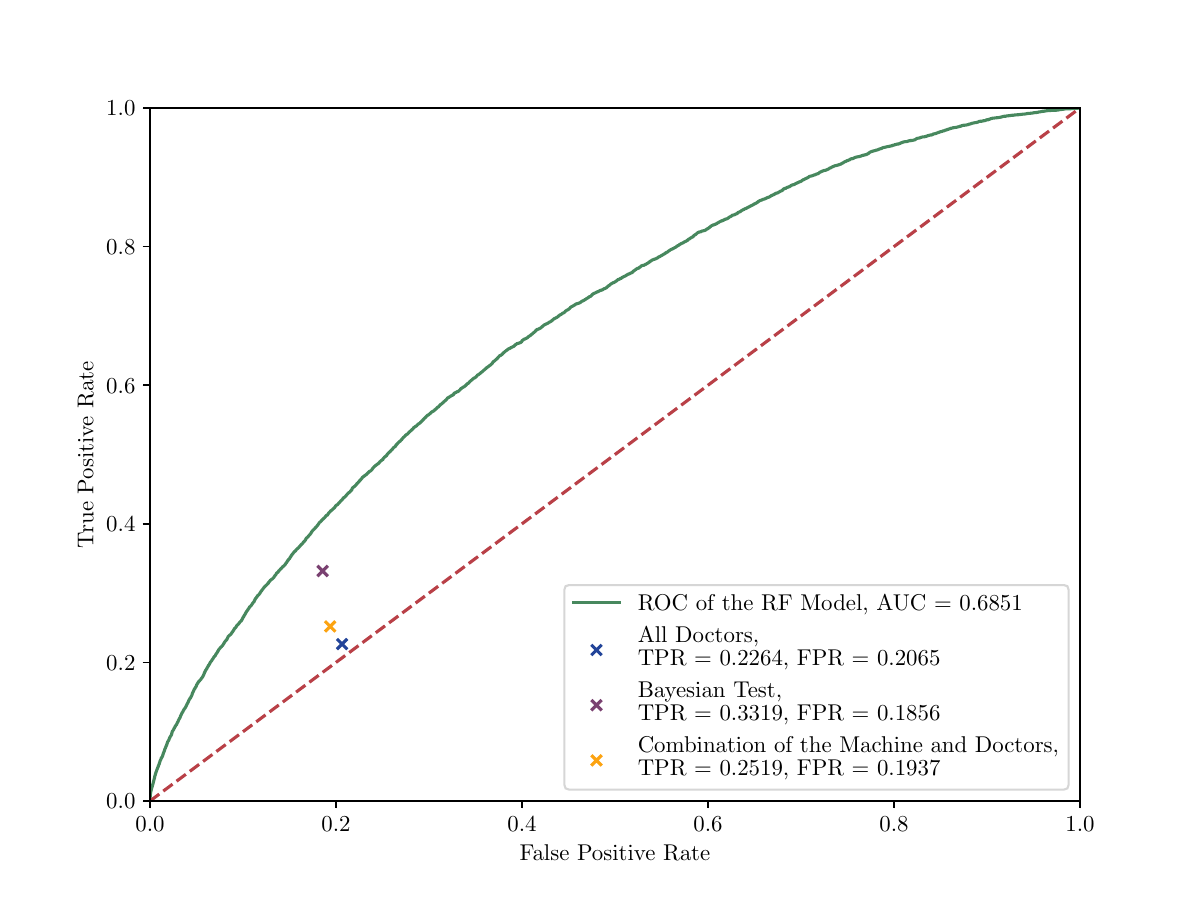}\label{fg: linear acceptance rate less capable doctors less 300}}
  \caption{Results of the Bayesian approach using capability dependence acceptance rates: capable doctors more willing to accept the machine decisions}
  \label{linear acceptance rate incapable less}
\fnote{\textit{Notes}: 
This figure shows the counterpart to Figure \ref{linear acceptance rate incapable more}. The acceptance rate $\lambda=\frac{n-1}{n-1}$ depends on the reverse rank implying that
retained doctors are 
more willing to accept machine decisions.
}
\end{figure}

Compared with scenarios 1 and 2, the aggregate FPR/TPR pairs 
in scenario 3 and 4 are inferior. 
If replaced doctors are not willing to adopt the decision from a machine algorithm, the average diagnosis quality will not improve significantly. 
These results accord with our intuition. 
On the one hand, improving the overall FPR/TPR performance calls for higher acceptance rate 
for replaced doctors. 
On the other hand, using the machine algorithm to  
replace the retained doctors' decisions with high probability is less likely to be beneficial. 

\subsection{A patient-case specific substitution algorithm}\label{fine-grained replacement}

There is a larger space for replacing a particular human decision maker
only in specific contexts. 
In this appendix 
we experimented with an alternative decision making process
during which only some of the patient cases of a given doctor are diagnosed by the machine algorithm. 
Given 
that the diagnosis by the doctor is observed in each data point in addition to the outcome label, we 
use 
the machine algorithm not only to predict the likelihood of the outcome of abnormal birth, but also to evaluate when the doctor's
diagnosis are more likely to differ from the ground-truth outcome. 
The recognization that
machine algorithms are more likely to be applicable in some candidate cases than in others for a given decision maker
has been emphasized 
by \citeappendix{donahue2022human}, 
\citeappendix{raghu2019algorithmic}  and \citeappendix{raghu2019direct}
among others.  

The alternative substitution strategy we consider involves several steps. 
In the first step, we train a model, denoted as  $F_{\text{mis}}\(x\)$, to predict
the likelihood of a doctor misjuding a patient case. The label in this model is $\mathds{1}\(\hat Y \neq Y\)$, where $\hat Y$ is the diagnosis
by the doctor whether this is a high risk prenancy case and $Y$ is the outcome of whether the observation corresponds to an abnormal 
birth. 
In the second step, we train another model,  denoted as  $F_{\text{doc}}\(x\)$, to 
predict the likelihood of a doctor diagnosing a patient case as high risk. In the third step, we use 
the patient cases
for a given doctor to estimate the doctor's ratio between the cost of false positive and the cost of false negative. 
The alternative substitution strategy combines information from these steps, and replaces a given patient case by the machine algorithm 
when the likelihood of the doctor's diagnosis and the patient outcome being different is higher than a threshold. The threshold
value depends on how likely a doctor will diagnose a patient case as high-risk. When the doctor's cost of false positive is higher
than the cost of false negative, the threshold should be lower when the doctor is more likely to diagnose a patient case as high-risk.
Conversely,  when the doctor's cost of false positive is lower than the cost of false negative, the threshold should be higher when the doctor is more likely to diagnose a patient case as high-risk. 
Whether a doctor is likely to diagnose a patient case as high-risk is indicated
by whether $F_{\text{doc}}\(x\)$ is higher than a given cut-off value.

Specifically, the replacement rule for a patient case with features $x$ is given by
\begin{equation*}
 \begin{cases}
    F_{\text{mis}}\(x\) > c_r, \quad \text{if} \ F_{\text{doc}}\(x\) > c_d, \\
    F_{\text{mis}}\(x\) > \frac{c_r}{t}, \quad \text{otherwise}.
  \end{cases}
\end{equation*}
In the above, $t$ is the ratio between the cost of false negative and the cost of false positive for the given doctor. This ratio $t$ is calculated as follows 
\begin{equation}
  \label{eq to illustrate slope of roc tangent}
   t = (1 - \mathbb{P}(Y = 1)) / \(s_t \cdot \mathbb{P}(Y = 1)\),
\end{equation}
where $s_t$ is the slope of the tangent of the machine ROC curve at the point
of the machine ROC curve which dominates the largest area of the simulated doctor's 
posterior distribution as described in \eqref{bayesian basic numerical}. 
\begin{figure}[hp]
  \centering
  \includegraphics[width=0.85\columnwidth]{"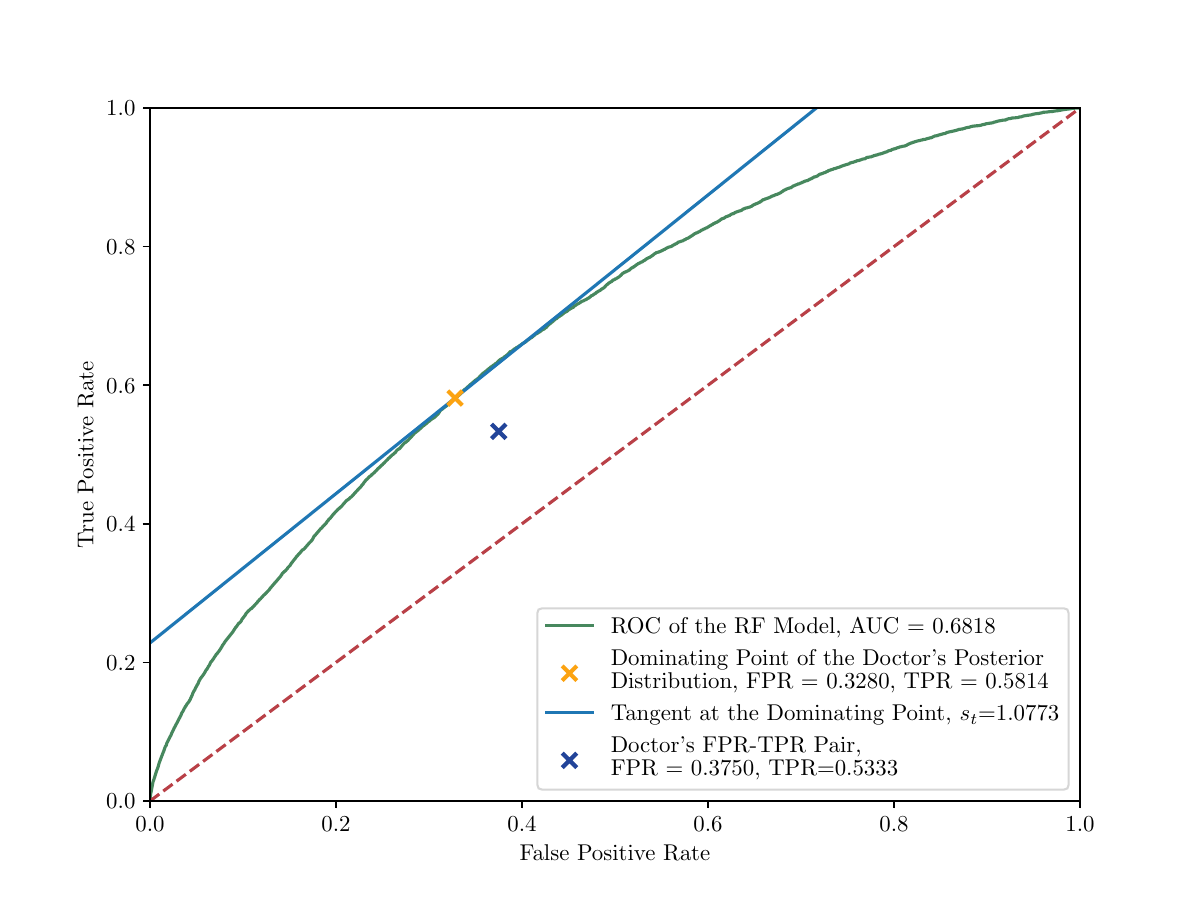"}
  \caption[Calculating t]{Calculation of ratio between the cost of false positive and the cost of false negative}
  \label{fg:appendixes5 calculating t}    
\fnote{\textit{Notes}: 
This figure illustrates the calculation of 
the ratio between the cost of false negative and the cost of false positive for the given doctor
in equation \eqref{eq to illustrate slope of roc tangent}. 
}
\end{figure}
As illustrated in 
Figure \ref{fg:appendixes5 calculating t}, the slope of the aqua line tangent to ROC curve at the yellow cross is $s_t$. 
If 
$C_{0R}$ denotes the false positive cost of misclassifying $y=0$ as $y=1$, and 
$C_{1A}$ denotes the false negative cost of misclassifying $y=1$ as $y=0$, 
the expected cost of a decision rule is 
\bs
&C_{0R} \mathbb{P}\(Y=0, \hat Y=1\) + C_{1A} \mathbb{P}\(Y=1, \hat Y=0\)\\
&= 
C_{0R} \(1 - \mathbb{P}\(Y=1\)\) \text{FPR} + C_{1A} \mathbb{P}\(Y=1\) \(1 - \text{TPR}\)\\
&= C_{0R} \(1 - \mathbb{P}\(Y=1\)\) \text{FPR} - C_{1A} \mathbb{P}\(Y=1\) \text{TPR}
+ C_{1A} \mathbb{P}\(Y=1\).
\end{split}\end{align}
Therefore the slope of the tangent of the ROC curve at the optimizing pair of $\(\text{FPR}, \text{TPR}\)$ that minimizes the 
expected cost is 
\bs
s_t 
= \frac{
C_{0R} \(1 - \mathbb{P}\(Y=1\)\)
}{
C_{1A} \mathbb{P}\(Y=1\)
}, 
\end{split}\end{align}
implying that the ratio $t$ is given by equation \eqref{eq to illustrate slope of roc tangent}:
\bs
t = \frac{C_{1A}}{C_{0R}} = \frac{
1 - \mathbb{P}\(Y=1\)
}{
\mathbb{P}\(Y=1\) s_t
}.
\end{split}\end{align}

Our second alternative substitution strategy is a variant of the first one. 
We train a model 
$F_{\text{mis}}'\(x; \hat Y\)$ where the doctor's diagnosis enters directly into the feature set. The corresponding replacement rule
for each patient case becomes 
\begin{equation*}
 \begin{cases}
    F_{\text{mis}}'\(x; \hat Y\) > c_r, \quad \text{if} \ \hat Y=1,\\ 
    F_{\text{mis}}'\(x; \hat Y\) > \frac{c_r}{t}, \quad \text{otherwise}.
  \end{cases}
\end{equation*}


We experimented with the alternative substitution strategies 
on the set of 
doctors with at least 300 patient cases. 
First, we estimate $F_{\text{doc}}$, $F_{\text{mis}}$ and $F'_{\text{mis}}$ as three Random 
Forest models using 
data generating by the same sample splitting scheme 
as in the Bayesian test approach.
Next we calculate the coefficient $t$ in 
\eqref{eq to illustrate slope of roc tangent}
for every doctor. The machine ROC is drawn using the 
validation set. Both  $\mathbb{P}(Y = 1)$ and the Bayesian posterior distribution are evaluated on the combination of
the training set and the validation set. 

We set $c_d$ to be 0.5 in the experiment, and find that the results are not sensitive to 
the choice of $c_d$.
Then we calibrate the replacement threshold $c_r$ for each doctor by equating the implied 
global replacement rate $R$ to the Bayesian test approach. 
For the first alternative strategy, given the values of $t$ and $c_d$, 
the global replacement rate $R$ is related to each doctor's $c_r$ threshold by the following equation:
\begin{align}\begin{split}\nonumber
R & = 
\mathbb{P}\(F_{\text{mis}}\(x\) > c_r \vert F_{\text{doc}}\(x\) > c_d\)\mathbb{P}\(F_{\text{doc}}\(x\) > c_d\) \\ 
& + \mathbb{P}\(F_{\text{mis}}\(x\) > \frac{c_r}{t} \bigg\vert F_{\text{doc}}\(x\) \leq c_d\)\mathbb{P}\(F_{\text{doc}}\(x\) \leq c_d\),
\end{split}\end{align}
where the probabilities are estimated using the validation set. 


Similarly, for the second alternative strategy, the threshold $c_r$ is determined by equating 
\begin{align}\begin{split}\nonumber
R & = \mathbb{P}\(F'_{mis}\(x;\hat{Y}\) > c_r \big\vert \hat{Y} = 1\)\mathbb{P}\(\hat{Y} = 1\) \\ 
& + \mathbb{P}\(F'_{mis}\(x;\hat{Y}\) > \frac{c_{r}}{t} \bigg\vert \hat{Y} = 0\)\mathbb{P}\(\hat{Y} = 0\).
\end{split}\end{align}
Finally, the test set is used to compute the FPRs and the TPRs 
from replacing a subset of each doctor's patient cases by the machine algorithm. 
The Bayesian test approach using 95\% credible level replaces about 46\% of the doctors 
and $R=59.1\%$ of all patient cases.
The experiment result is shown in Figure~\ref{fg:appendixes5 fine-grained replacement}. 

\begin{figure}[hp]
  \centering
\subfigure[ROC curve and replacement paths]{\includegraphics[width=0.7\columnwidth]{"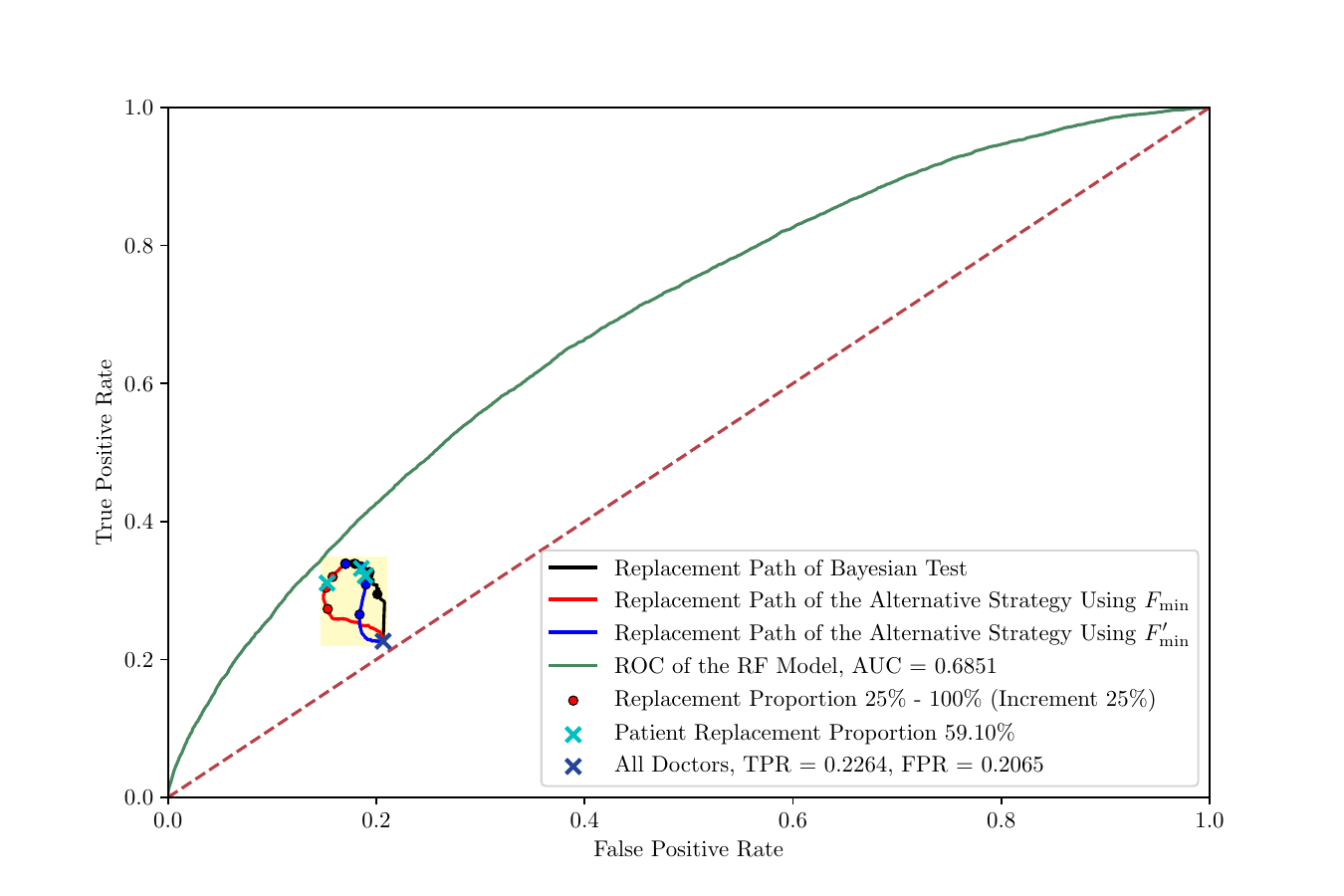"}}
\subfigure[Zoomed-in version of replacement paths]{\includegraphics[width=0.7\columnwidth]{"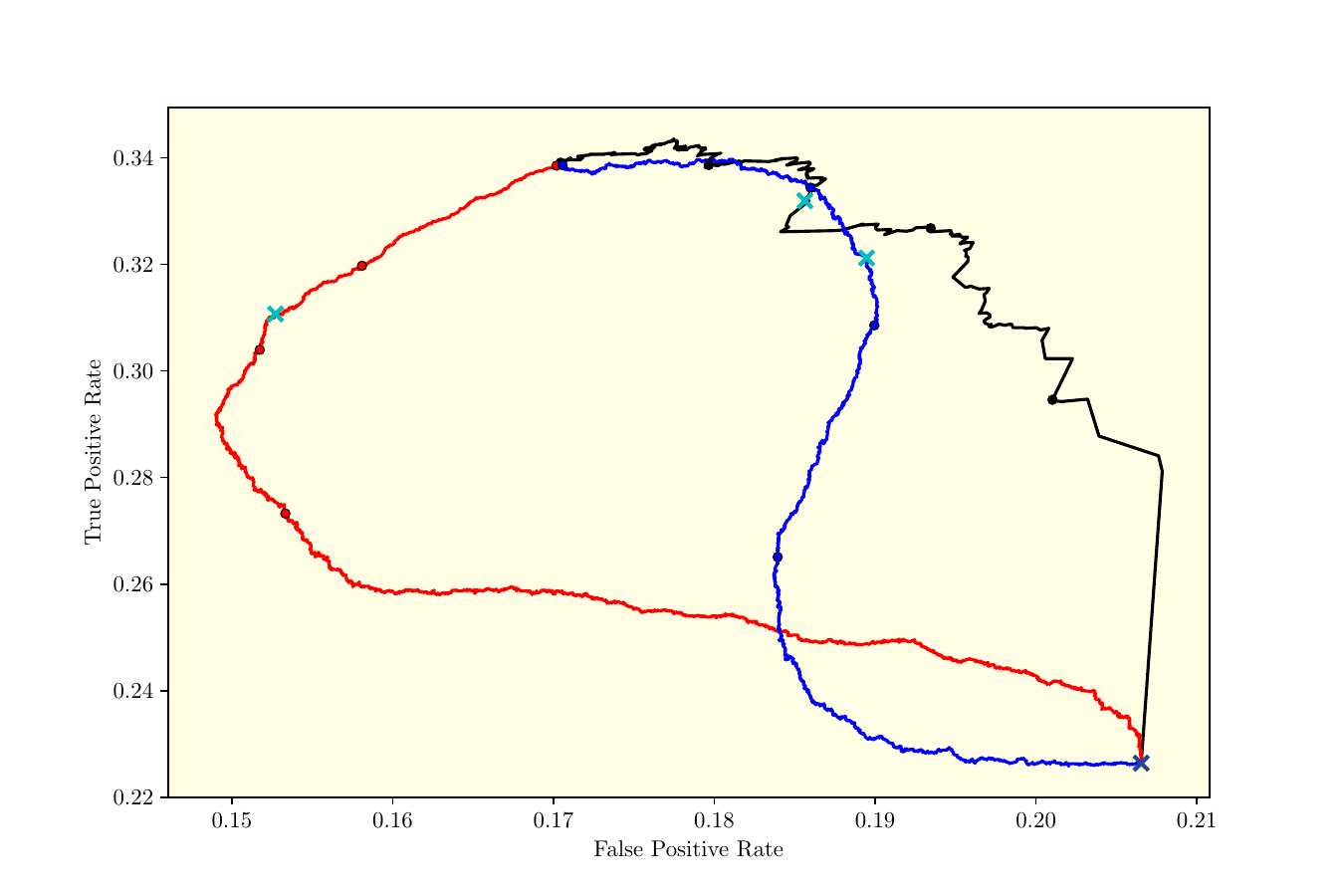"}
\label{panel b of fg:appendixes5 fine-grained replacement}
}
  \caption{The replacement paths of two alternative replacement approaches}
\fnote{\textit{Notes}: Panel \subref{panel b of fg:appendixes5 fine-grained replacement} plots the paths
of FPR/TPR when the replacement rate $R$ varies from $0$ to $100\%$.
While the replacement
paths of the two alternative strategies are very different from the replacement path of the Bayesian test,
none of the replacement paths dominate the others in FPR/TPR performance.}
  \label{fg:appendixes5 fine-grained replacement}
\end{figure}

{
\setstretch{0.875}
\bibliographystyleappendix{aer}
\bibliographyappendix{references}
}

\end{document}